\newtheorem{theorem}{Theorem}[section]
\newtheorem{corollary}[theorem]{Corollary}
\newtheorem{lemma}[theorem]{Lemma}
\newtheorem{definition}[theorem]{Definition}
\newtheorem{example}[theorem]{Example}
\newtheorem{remark}[theorem]{Remark}
\numberwithin{equation}{section}
\def\wh{\widehat} \def\ol{\overline}
\def\+{\!+\!}
\begin{document}
\title{Self-dual 2-quasi Abelian Codes}
\author{
Liren Lin\par
{\small School of Optical Information and Energy Engineering}\par\vskip-1mm
{\small School of Mathematics and Physics}\par\vskip-1mm
{\small Wuhan Institute of Technology, Wuhan 430205, China}
\par\vskip2mm
Yun Fan\par
{\small School of Mathematics and Statistics}\par\vskip-1mm
{\small Central China Normal University, Wuhan 430079, China}
}
\maketitle

\insert\footins{\footnotesize{\it Email address}:
yfan@mail.ccnu.edu.cn (Yun Fan);
}

\begin{abstract}
A kind of self-dual quasi-abelian codes of index $2$ 
over any finite field~$F$ is introduced.  
By counting the number of such codes and the number of 
the codes of this kind whose relative minimum weights are small, 
such codes are proved to be asymptotically good 
provided $-1$ is a square in~$F$. 
Moreover, 
a kind of self-orthogonal quasi-abelian codes of index $2$
are defined; and such codes always exist. 
In a way similar to that for self-dual quasi-abelian codes of index $2$, 
it is proved that 
the kind of the self-orthogonal quasi-abelian codes of index $2$ 
is asymptotically good.

\medskip
{\bf Key words}:
 Finite fields; 
 quasi-abelian codes of index $2$;
 self-dual codes; self-orthogonal codes;  asymptotically good. 
\end{abstract}

\section{Introduction}\label{introduction}

Let $F$ be a finite field with cardinality $|F|=q$
which is a power of a prime,
where $|S|$ denotes the cardinality of any set $S$. 
Let $n>1$ be an integer.
Any nonempty subset $C\subseteq F^n$ is called 
a code of length $n$ over $F$ in coding theory. 
The {\em Hamming weight} ${\rm w}(a)$ for 
any word $a=(a_1,\cdots,a_n)\in F^n$
is defined to be the number of the indexes $i$ that $a_i\ne 0$. 
The {\em Hamming distance} ${\rm d}(a,b)={\rm w}(a-b)$ for $a,b\in F^n$.
And ${\rm d}(C)=\min\{{\rm d}(c,c')\,|\,c\ne c'\in C\}$ 
is said to be the {\em minimum distance} of $C$,
while $\Delta(C)=\frac{{\rm d}(C)}{n}$ is called the
{\em relative minimum distance} of~$C$.
The rate of the code $C$ is defined as ${\rm R}(C)=\frac{\log_q|C|}{n}$.
A code sequence $C_1,C_2,\cdots$ is said to be
{\em asymptotically good} if the length $n_i$ of $C_i$ goes to infinity 
and, for $i=1,2,\cdots$, 
both the rate ${\rm R}(C_i)$ and the relative minimum distance
$\Delta(C_i)$ are positively bounded from below.
A class of codes is said to be {\em asymptotically good} if
there is an asymptotic good code sequence $C_1,C_2,\cdots$ in the class.

Let $G$ be an abelian group of order $n$.
By $FG$ we denote the group algebra, i.e., an $F$-vector space
with basis $G$ and with multiplication induced by the group multiplication of $G$.
Any ideal $C$ of $FG$
(i.e., any $FG$-submodule of the regular module $FG$) is
called an abelian code of length $n$ over $F$, or an $FG$-code for short.
Any element $a=\sum_{x\in G}a_xx\in FG$
 is identified with the word $(a_x)_{x\in G}\in F^n$.
Hence the Hamming weight ${\rm w}(a)$ of $a$ 
and the minimum Hamming weight ${\rm w}(C)$ of $C$ are defined.  
The euclidean inner product of $a=(a_x)$ and $b=(b_x)$ is defined to be 
$\big\langle a,b\big\rangle=\sum_{x\in G}a_xb_x$. 
Then the self-orthogonal codes and self-dual codes
are defined as usual, e.g., cf.~\cite{HP}.

Let $(FG)^2:=FG\times FG=\{(a,b)\,|\,a,b\in FG\}$, 
which is an $FG$-module. 
Any $FG$-submodule of $(FG)^2$ is called 
a {\em quasi-abelian code of index $2$} 
({\em $2$-quasi-abelian code} for short),
or a {\em quasi-$FG$ code of index $2$} ($2$-quasi-$FG$ code for short).

If $G$ is a cyclic group of order $n$, then
$FG$-codes and $2$-quasi-$FG$ codes  
 are the usual cyclic codes and 
$2$-quasi-cyclic codes, respectively. Note that 
there is a unique cyclic group of order $n$ (upto isomorphism), 
but there may be many abelian groups of order $n$.

It is a long-standing open question whether or not   
the cyclic codes are asymptotically good, e.g., see \cite{MW06}.
But it has been  known for a long time that 
the $2$-quasi cyclic codes are asymptotically good, see \cite{CPW, C, K}.
Moreover, the self-dual quasi-cyclic codes with index going to infinity
are asymptotically good, see \cite{D, LS03}; 
and the binary (i.e., $q=2$) self-dual $2$-quasi-cyclic codes 
are asymptotically good, see \cite{MW}.

It is known that self-dual $2$-quasi-cyclic codes exist 
if and only if $-1$ is a square in $F$, see \cite{LS03}.
In fact, the condition ``$-1$ is a square in $F$'' is also necessary and sufficient
for the existence of the self-dual $2$-quasi-$FG$ codes for any abelian group 
$G$ of order $n$, see Corollary \ref{exist iff} below.

The residue integer ring ${\Bbb Z}_n$ modulo $n$ is partitioned into
$q$-cyclotomic cosets, see \cite[\S4.1]{HP}; 
and $\{0\}$ is obvious a $q$-cyclotomic coset 
which we call the {\em trivial $q$-cyclotomic coset}.  
By $\mu_q(n)$ we denote the minimal size 
of the non-trivial $q$-cyclotomic cosets of ${\Bbb Z}_n$. 

By assuming that Artin's primitive root conjecture holds, i.e., 
for a non-square $q$, there exist infinitely many odd primes $p_1,p_2,\cdots$ satisfying:
\begin{itemize}
\item\vskip-5pt
$\mu_q(p_i)=p_i-1$, $i=1,2,\cdots$, 
\end{itemize} \vskip-5pt
Alahmadi,  \"Ozdemir and Sol\'e \cite{AOS} proved that, 
if $-1$ is a square in $F$ and $q$ is a non-square, 
then there is an asymptotically good
sequence of self-dual $2$-quasi-cyclic codes $C_1,C_2,\cdots$
with code length of $C_i$ equal to $2p_i$ for $i=1,2,\cdots$. 
And they asked an open problem if the dependence on 
Artin's primitive root conjecture can be removed.

In the dissertation \cite{L PhD}, 
it has been proved by a probabilistic method that, 
if $-1$ is a square in $F$, and
there are odd integers $n_1,n_2,\cdots$ coprime to $q$ 
satisfying the following two:
\begin{itemize} 
\item \vskip-5pt
the multiplicative order of $q$ modulo ${n_i}$ is odd for $i=1,2,\cdots$, 
\item \vskip-5pt
$\lim\limits_{i\to\infty}\frac{\log_q n_i}{\mu_q(n_i)}=0$; 
\end{itemize}
then for any abelian groups $G_i$ of order $n_i$, $i=1,2,\cdots$, 
there exists an asymptotically good sequence $C_1,C_2,\cdots$
with $C_i$ being self-dual $2$-quasi-$FG_i$ codes for $i=1,2,\cdots$.

It was known for a long time that there exist 
odd integers $n_1,n_2,\cdots$ coprime to $q$ satisfying that
$\lim\limits_{i\to\infty}\frac{\log_q n_i}{\mu_q(n_i)}=0$, 
see \cite[Lemma 2.6]{BM}, \cite[Lemma II.6]{FL20} 
or Lemma \ref{n_i} (and Remark \ref{r n_i}) below. 
In fact,  this existence is just enough to guarantee the asymptotic goodness
of the self-orthogonal $2$-quasi-abelian codes,   
and, to guarantee the asymptotic goodness of the self-dual $2$-quasi-abelian codes
once $-1$ is a square in $F$. 
More precisely,  we have the following. 

\begin{theorem}\label{main result}
Let $n_1,n_2,\cdots$ be odd positive integers coprime to $q$ satisfying that 
$\lim\limits_{i\to\infty}\frac{\log_q n_i}{\mu_q(n_i)}=0$. 
Let $G_i$ be any abelian group of order $n_i$ for $i=1,2,\cdots$.

{\rm(1)} 
There exist self-orthogonal $2$-quasi-$FG_i$ codes $C_i$ of dimension $n_i-1$,
$i=1,2,\cdots$, 
such that the code sequence $C_1,C_2,\cdots$ is asymptotically good.

{\rm(2)} If $-1$ is a square in $F$, then
there exist self-dual $2$-quasi-$FG_i$ codes $C_i$, $i=1,2,\cdots$, 
such that the code sequence $C_1,C_2,\cdots$ is asymptotically good.
\end{theorem}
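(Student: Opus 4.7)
My approach is a probabilistic/counting argument applied to the class of self-dual (case~(2)) or self-orthogonal of dimension $n_i-1$ (case~(1)) $2$-quasi-$FG_i$ codes. Since $\gcd(n_i,q)=1$, the group algebra $FG_i$ is semisimple and Wedderburn-decomposes as $FG_i=\bigoplus_{j=0}^{s_i}F_j$ with each $F_j=e_jFG_i$ a finite field extension of $F$; the trivial component carries $e_0=\tfrac{1}{n_i}\sum_{g\in G_i}g$ and $F_0=F$, while every non-trivial component satisfies $[F_j:F]\ge\mu_q(n_i)$. Correspondingly $(FG_i)^2=\bigoplus_jF_j^2$, every $FG_i$-submodule decomposes as $\bigoplus_jC_j$ with $C_j\subseteq F_j^2$ an $F_j$-subspace, and the Euclidean pairing $\langle a,b\rangle=\epsilon(a\ol b)$ (with $\epsilon$ extracting the coefficient of $1$) vanishes between $F_j$ and $F_k$ unless $k=\sigma(j)$, where $\sigma$ is the involution of the index set induced by $g\mapsto g^{-1}$. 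Using that $n_i$ is odd, the only $\sigma$-fixed index with trivial induced involution on $F_j$ is $j=0$; any other $\sigma$-fixed index carries a nontrivial degree-$2$ involution on $F_j$, which forces $[F_j:F]$ even.

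\textbf{Parameterizing and counting.}
The self-dual condition on $C=\bigoplus_jC_j$ decouples orbit-by-orbit: on each size-$2$ $\sigma$-orbit $\{j,\sigma(j)\}$ one chooses any $F_j$-subspace $C_j\subseteq F_j^2$ and $C_{\sigma(j)}=C_j^{\perp}$ is forced; on each $\sigma$-fixed index one picks a self-dual $F_j$-line in $F_j^2$ with respect to the induced form. On $\sigma$-fixed non-trivial components the form is Hermitian and isotropic lines exist unconditionally (at least $\sqrt{|F_j|}+1$ of them). On the trivial component, self-dual lines in $F^2$ exist iff $-1$ is a square in $F$, matching the hypothesis of (2); in case (1) one simply sets $C_0=0$, losing exactly one $F$-dimension to produce a self-orthogonal $2$-quasi-$FG_i$ code of dimension $n_i-1$ unconditionally. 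Restricting attention to $1$-dimensional $C_j$ on each size-$2$ orbit yields a lower bound $N_i\ge q^{Rn_i}$ for the number of codes in the class, with $R>0$ bounded away from $0$ (in fact $R\to\tfrac12$) since each size-$2$ orbit contributes at least $|F_j|+1\ge q^{\mu_q(n_i)}$ lines.

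\textbf{Probabilistic exclusion of low-weight words.}
Sample $C$ uniformly from the class. For a nonzero $w\in(FG_i)^2$, the event $w\in C$ factors as $\prod_j\Pr[e_jw\in C_j]$, and on any non-trivial orbit where $e_jw\ne 0$ one has $\Pr[e_jw\in C_j]\le(|F_j|+1)^{-1}\le q^{-\mu_q(n_i)}$. The key step is to bound $\sum_{w\ne 0,\ \mathrm{w}(w)\le 2\delta n_i}\Pr[w\in C]$ by a quantity tending to $0$ as $i\to\infty$. I partition words by their component-support $J(w)=\{j:e_jw\ne 0\}$: on the one hand $\Pr[w\in C]\le q^{-\sum_{j\in J(w)}[F_j:F]}$, and on the other, a structural weight lemma (of ``simplex-code'' type for the idempotents $e_J$) bounds the number of nonzero vectors of Hamming weight $\le 2\delta n_i$ lying in $\bigoplus_{j\in J}F_j^2$ by forcing such vectors to spread across many components. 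The hypothesis $\log_qn_i/\mu_q(n_i)\to 0$ absorbs the combinatorial overhead from summing over $J$ and the Hamming-ball volume $q^{2n_iH_q(\delta)}$. Hence for some $\delta>0$ and all large $i$ there exists a $C_i$ in the class with $\mathrm{d}(C_i)>2\delta n_i$, producing an asymptotically good sequence of rate $\ge\tfrac{n_i-1}{2n_i}\to\tfrac12$ and relative minimum distance $>\delta$.

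\textbf{Main obstacle.}
The hardest ingredient is the structural weight lemma controlling how Hamming weight distributes across the Wedderburn components: one must show that low-weight words cannot be concentrated in a small collection of orbits $J$, so that the per-orbit probability bound $q^{-\mu_q(n_i)}$ genuinely multiplies across many factors. Balancing this per-orbit gain against the Hamming-ball volume and the $2^{s_i}$ choices of $J$ using only the hypothesis $\log_qn_i/\mu_q(n_i)\to 0$---rather than the much stronger requirement that $\mu_q(n_i)/n_i$ be bounded below---is the technical heart of the proof, and is what lets us dispense with the additional odd-multiplicative-order assumption used in the earlier result from \cite{L PhD}.
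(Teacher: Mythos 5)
Your overall architecture matches the paper's: restrict to codes whose component in each Wedderburn orbit is one-dimensional (the paper's ``type I'' codes $C_{1,b}$ or $C_{1^\dag,b^\dag}$), count them exactly (Theorem~\ref{counting D} and Theorem~\ref{counting D^dag}), and then bound the number of such codes containing a low-weight word via a first-moment union bound organized by the component-support $J$, finally absorbing the sum over $J$ using $\log_q n_i/\mu_q(n_i)\to 0$. However, your ``Main obstacle'' paragraph misidentifies the key technical ingredient, and the bound you state for the per-orbit probability is not strong enough to close the argument.

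Concretely: the per-orbit probability is \emph{not} a flat $q^{-\mu_q(n_i)}$. If $(a,d)\in C_{1,b}$, then on each component $e\in\wh E^\dag_a=\wh E^\dag_d$ the entry $\beta_e$ of $b$ is \emph{uniquely} determined (Lemma~\ref{D_a,d}), so the probability that a random code contains a fixed $(a,d)$ supported on $J$ is roughly $q^{-\ell_a}$ where $\ell_a=\tfrac12\dim J$; this scales with the entire dimension of the support, not with the number of orbits times $\mu_q(n)$. If you plug in a uniform per-orbit bound $q^{-\mu_q(n_i)}$ (or $q^{-\mu_q(n_i)/2}$, which is the correct uniform bound since $n_e\ge\tfrac12\mu_q(n)$) against the per-$J$ weight count, then for a single large orbit with $\dim J\gg\mu_q(n)$ the exponent is \emph{positive} and the union bound fails. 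Second, the structural weight lemma you invoke does \emph{not} say that low-weight words ``cannot be concentrated in a small collection of orbits''---they can. The ingredient actually needed is Lemma~\ref{balanced} (from \cite{FL15}): any ideal $A\times A\le(FG)^2$ is a balanced code and hence satisfies $\bigl|(A\times A)^{\le\delta}\bigr|\le q^{h_q(\delta)\dim(A\times A)}$. This is a statement about how \emph{few} low-weight words exist \emph{inside} $\tilde J\times\tilde J$ (giving exponent $(4\ell+2)h_q(\delta)$ rather than the global volume $2nh_q(\delta)$), not a claim that weight forces support to spread. Multiplied by the per-word probability $q^{-\ell}$ this gives exponent $4\ell h_q(\delta)-\ell$, negative precisely when $h_q(\delta)<\tfrac14$---which is why that threshold appears in the paper. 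Without the balanced-code entropy bound (and with a uniform-per-orbit probability), one is back to needing $h_q(\delta)$ smaller than $\mu_q(n)/n$, which tends to $0$, and the argument collapses.
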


We will prove the theorem by counting the number of 
the self-dual (or self-orthogonal) $2$-quasi-abelian codes we considered.

In Section \ref{abelian} we describe fundamentals of abelian group algebras. 

In Section \ref{2-quasi}, we investigate the structure of 
$2$-quasi abelian codes, and introduce the $2$-quasi-abelian codes of type I.
If  the abelian group $G$ is cyclic,
then the $2$-quasi-$FG$ codes of type I we defined 
are just the so-called double circulant codes in literature, 
e.g., see \cite{AOS, TTHAA}.

Section \ref{self-dual type I} is devoted to the study on
the self-dual $2$-quasi-abelian codes of type~I. 
First we show a counting formula of the number of such codes.
Then we exhibit an estimation of the number 
of the self-dual $2$-quasi-abelian codes of type I whose relative minimum
weights are small. Finally, we prove the above Theorem \ref{main result}(2),
see Theorem \ref{D asymptotically good}.  

As mentioned above, 
the existence of self-dual $2$-quasi-abelian codes is conditional.
In Section \ref{self-orth type I^dag} we discuss 
a kind of self-orthogonal $2$-quasi-$FG$ codes of dimension $n-1$, 
where $G$ is any abelian group of order $n$ as before.  
The existence of such codes is unconditional. 
The study method for them is similar to that in Section \ref{self-dual type I}.
In Theorem \ref{D^dag asymptotically good} we 
complete the proof of Theorem \ref{main result}(1).

\section{Abelian codes}\label{abelian}

From now on we always assume that
 $F$ is a finite field of cardinality $|F|=q$, 
 $n>1$ is an odd integer with $\gcd(n,q)=1$, 
 and $G$ is an abelian group of order~$n$.

Let $FG=\left\{a=\sum_{x\in G}a_xx\,|\,a_x\in F\right\}$ 
be the group algebra of $G$ over $F$. 
As mentioned in Section \ref{introduction}, 
any $\sum_{x\in G}a_x x\in FG$ is viewed 
as a sequence $(a_x)_{x\in G}$ of $F$ indexed by $G$, 
and the Hamming weight, the euclidean inner product etc.
are defined as usual. 

The map $G\to G$, $x\mapsto x^{-1}$, 
is an automorphism of $G$ of order $2$
(since the order of $G$ is odd).
And the following map
\begin{equation}\label{bar map}
\textstyle
FG\to FG,~~ a\mapsto\ol a,
\quad\mbox{where $\ol a=\sum_{x\in G}a_x x^{-1}$  for $a=\sum_{x\in G}a_x x$,} 
\end{equation}
is an algebra automorphism of $FG$ of order $2$. 
Following \cite{FL20}, 
we call the automorphism~Eq.\eqref{bar map} the {\em ``bar'' map} of $FG$.
The following is clearly a linear form on $FG$:
\begin{equation}
\textstyle
\sigma: FG\to F,~~ \sum_{x\in G}a_x x\mapsto a_{1_G},
\end{equation}
where $a_{1_G}$ is the coefficient of the identity element $1_G$ in the
linear combination $a=\sum_{x\in G}a_x x$.
Recall that the euclidean inner product
$\langle a,b\rangle=\sum_{x\in G}a_x b_x$ 
for $a=\sum_{x\in G}a_x x$ and $b=\sum_{x\in G}b_x x$.
Then the following holds (\cite[Lemma II.4]{FL20}):
\begin{equation}\label{inner}
\langle a,b\rangle=\sigma(a\ol b)
=\sigma(\ol a b)=\sigma(b\ol a)=\sigma(\ol b a), 
~~~\forall~ a,b\in FG.
\end{equation}

Since $\gcd(n,q)=1$, by Maschke's Theorem (see \cite[pp. 116-117]{AB}), 
$FG$ is a semisimple algebra, i.e.,
$FG$ is the direct sum of simple ideals as follows: 
\begin{equation*}
 FG=A_0\oplus A_1\oplus\cdots\oplus A_m.
\end{equation*}
Correspondingly, the identity element $1=1_G$ of the algebra 
has a unique decomposition
$$1=e_0+e_1+\cdots+e_m,$$
with $e_i\in A_i$, $0\le i\le m$.
It is easy to check that
$$
e_ie_j=\begin{cases}e_i, & i=j;\\ 0,& i\ne j; \end{cases}~~\forall~0\le i,j\le m.
$$
Such $e_0,e_1,\cdots,e_m$ are called ``orthogonal'' idempotents
(here ``orthogonal'' is different from the orthogonality defined by inner product).
An idempotent $0\ne e=e^2$ is said to be primitive if 
it is impossible to write $e=e'+e''$ for two
non-zero idempotents $e'$ and $e''$ with $e'e''=0$.
So we have: 
\begin{itemize}
\item 
For $i=0,1,\cdots,m$,  
$A_i=FG e_i$ is a simple ring with identity $e_i$, hence 
$e_i$ is a primitive idempotent and $A_i$
is a field extension over $F$.
We can assume that $A_0=FGe_0=Fe_0$ is the trivial $FG$-module of dimension~$1$.
\item
$E=\{e_0\!=\!\frac{1}{n}\sum_{x\in G} x, ~ e_1,\cdots,e_m\}$ 
is the set of all primitive idempotents of~$FG$, and $1=\sum_{e\in E}e$.
\item
The ``bar'' map Eq.\eqref{bar map} permutes the primitive
idempotents $e_0,e_1,\cdots,e_m$ (equivalently, the ``bar'' map permutes
the simple ideals $A_0,A_1,\cdots,A_m$); obviously, $\ol e_0=e_0$.
\end{itemize}
Then we can write the set $E$ of all primitive idempotents as 
\begin{equation}\label{E}
E=\{e_0\}\cup \{e_1,\cdots,e_r\}\cup\{e_{r\+ 1},\ol e_{r\+ 1},\,\cdots,\, e_{r\+ s},\ol e_{r\+ s}\},
\end{equation}
where 
$\ol e_i=e_i$  for  $i=1,\cdots,r$,
$\ol e_{r\+ j}\ne e_{r\+j}$  for $ j=1,\cdots,s$ 
and $1+r+2s=m$. 
For $j=1,\cdots,s$, 
the  restriction of the ``bar'' map to $A_{r\+j}$ induces a map:
$$ A_{r\+j}\!=\!FG e_{r\+j}~\longrightarrow \ol A_{r\+j}\!=\! FG\ol e_{r\+j},~~~ 
   a \longmapsto\ol a,$$ 
which is an $F$-algebra isomorphism.
It is easy to check that
$A_{r\+ j}+\ol A_{r\+ j}$ 
are invariant by the automorphism ``bar''. 
So we set
\begin{equation}\label{d hat E}
\begin{array}{l}
 \wh e_{r\+j}=e_{r\+j}+\ol e_{r\+j},~~~
 \wh A_{r\+j}=FG\wh e_{r\+j}=A_{r\+j}+\ol A_{r\+j},\quad j=1,\cdots,s;
 \\[5pt]
 \wh E=\{e_0,\,e_1,\cdots,e_r,\, \wh e_{r\+1},\cdots,\wh e_{r\+s}\}.
\end{array}
\end{equation}
Then
\begin{align}\label{e hat E}
 1=\sum_{e\in\wh E}e;~~~~~ 
 \ol e=e, ~\forall~e\in\wh E;~~~~~
 ee'=\begin{cases}e, \!& e=e';\\ 0, \!& e\ne e'; \end{cases}~\forall~e,e'\in\wh E;
\end{align}
and we have a decomposition of $FG$ written in several ways:
\begin{equation}\label{FG=}\textstyle
\begin{array}{l}
 FG
=FGe_0\bigoplus\big(\bigoplus_{i=1}^r FG e_i\big) \bigoplus
     \big(\bigoplus_{j=1}^s \left( FGe_{r\+j}\bigoplus FG\ol e_{r\+j}\right)\big)\\[5pt]
    \hskip7mm =~~A_0\;\bigoplus\;\big(\bigoplus_{i=1}^r A_i\big) \; \bigoplus \;
             \big(\bigoplus_{j=1}^s \left( A_{r\+j}\bigoplus \ol A_{r\+j}\right)\big)\\[5pt]
\hskip7mm=~~A_0\;\bigoplus\;\big(\bigoplus_{i=1}^r A_i\big) \; \bigoplus \;
             \big(\bigoplus_{j=1}^s\wh A_{r\+j}\big)\\[5pt]
  \hskip7mm=~~FGe_0\bigoplus\big(\bigoplus_{i=1}^r FG e_i\big) \bigoplus
         \big(\bigoplus_{j=1}^s FG\wh e_{r\+j}\big).
\end{array}
\end{equation}

For an ideal $A$ of $FG$,
 we denote $A^\flat=\{a\mid a\in A, \ol a=a\}$.

\begin{lemma}\label{FG^flat} 
Keep the notation as above.

{\rm(1)} $A_0^\flat=A_0=FGe_0=Fe_0$.

{\rm(2)}  $(FG)^{\flat}$ is a subalgebra of $FG$ as follows:
\begin{equation}\label{FG^dag=}\textstyle
  (FG)^{\flat}=Fe_0\bigoplus\left(\bigoplus_{i=1}^r A_i^\flat\right) \bigoplus
 \left(\bigoplus_{j=1}^s(\wh A_{r\+j})^\flat\right);
\end{equation}
and $\dim (FG)^\flat=1+\frac{n-1}{2}$.

{\rm(3)} For $j =1,\cdots,s$, 
$$(\wh A_{r\+j})^\flat=(A_{r\+j}+\ol A_{r\+j})^\flat=\{a+\ol a\mid a\in A_{r\+j}\};$$
in particular, 
$\dim(\wh A_{r\+j})^\flat=\dim(A_{r\+j}+\ol A_{r\+j})^\flat=\dim A_{r\+j}
=\dim\ol A_{r\+j}$.

{\rm(4)} For $i=1,\cdots,r$,  $\dim A_i$ is even, and 
$\dim A_i^\flat=\frac{1}{2}\dim A_i$.
\end{lemma}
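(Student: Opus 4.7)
The plan is to prove the four parts in the order (1), (3), (4), (2), since part (2) combines the per-summand statements of (1), (3), (4) via the bar-invariant decomposition \eqref{FG=}. Throughout I will exploit that bar is an $F$-algebra automorphism of $FG$ of order $2$, that $A_0$ and $A_1,\ldots,A_r$ are bar-stable while bar interchanges $A_{r\+j}$ and $\ol A_{r\+j}$ for $j\ge1$, and that distinct simple components have zero intersection.

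Part (1) is immediate: since $x\mapsto x^{-1}$ permutes $G$, we get $\ol{e_0}=e_0$, so $A_0=Fe_0\subseteq(FG)^\flat$. For part (3), I would first note that $e_{r\+j}\ne\ol e_{r\+j}$ gives an internal direct sum $\wh A_{r\+j}=A_{r\+j}\oplus\ol A_{r\+j}$; writing $u=a+b$ with $a\in A_{r\+j}$ and $b\in\ol A_{r\+j}$, bar sends $u$ to $\ol b+\ol a$ with $\ol b\in A_{r\+j}$ and $\ol a\in\ol A_{r\+j}$, and uniqueness of the decomposition forces the fixed elements to be exactly $\{a+\ol a\mid a\in A_{r\+j}\}$. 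The map $a\mapsto a+\ol a$ from $A_{r\+j}$ to $(\wh A_{r\+j})^\flat$ is $F$-linear, surjective by construction, and injective by directness; combined with the already-noted fact that bar restricts to an $F$-linear isomorphism $A_{r\+j}\to\ol A_{r\+j}$, this yields all three claimed dimension equalities.

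The main obstacle is part (4). For $1\le i\le r$, bar restricts to an $F$-algebra automorphism of the field $A_i$; once this restriction is shown to be non-trivial, standard Galois theory gives $[A_i:A_i^\flat]=2$, which simultaneously makes $\dim_F A_i$ even and produces $\dim_F A_i^\flat=\tfrac12\dim_F A_i$. To prove non-triviality, suppose for contradiction that bar acts as the identity on $A_i$; then for every $x\in G$, $xe_i=\ol{xe_i}=x^{-1}e_i$, i.e., $(x-x^{-1})e_i=0$. Writing $e_i=\sum_{y\in G}c_y y$ and comparing coefficients gives $c_{x^{-1}z}=c_{xz}$ for all $x,z$; specializing $z=x$ yields $c_{x^2}=c_1$ for every $x\in G$. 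Because $n$ is odd, the squaring map is a bijection on $G$, so $c_y$ is constant in $y$, forcing $e_i\in F\cdot\sum_{y\in G}y=Fe_0=A_0$; but $A_i\cap A_0=0$ and $e_i\ne0$, a contradiction. This is the step where the hypothesis that $n$ is odd is essential.

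Part (2) then follows quickly: each summand in \eqref{FG=} is bar-stable, so bar-fixed points distribute across the direct sum, yielding the stated decomposition, which is an $F$-subalgebra because bar is multiplicative. Summing dimensions using (1), (3), (4) together with the count $n=1+\sum_{i=1}^r\dim A_i+2\sum_{j=1}^s\dim A_{r\+j}$ read off from \eqref{FG=} gives
\[
\dim(FG)^\flat=1+\sum_{i=1}^r\tfrac12\dim A_i+\sum_{j=1}^s\dim A_{r\+j}=1+\tfrac{n-1}{2},
\]
completing the lemma.
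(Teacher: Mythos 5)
Your proof is correct, and part (4) is where you genuinely diverge from the paper. The paper never shows directly that the bar automorphism is non-trivial on $A_i$ for $1\le i\le r$; instead it first computes $\dim(FG)^\flat=1+\tfrac{n-1}{2}$ by an elementary pairing argument on coefficients (for $1\ne x\in G$ one has $\ol x=x^{-1}\ne x$, so the $n-1$ non-identity coefficients pair up into $\tfrac{n-1}{2}$ free choices), notes that in general $\dim A_i^\flat$ equals either $\dim A_i$ or $\tfrac12\dim A_i$ so that $\dim A_i^\flat\ge\tfrac12\dim A_i$, and then squeezes: adding up the contributions against the already-known total $1+\tfrac{n-1}{2}$ forces every inequality to be an equality. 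You instead prove non-triviality directly: if bar were the identity on $A_i$ then $(x-x^{-1})e_i=0$ for all $x$, a coefficient comparison and $z=x$ give $c_{x^2}=c_1$, and since squaring is a bijection on $G$ (here you use $n$ odd) you get $e_i\in Fe_0$, contradicting $A_0\cap A_i=0$; then Galois theory for the finite field $A_i$ gives (4) outright, and (2) follows by summing dimensions. So the logical order is reversed: the paper derives (4) from the dimension formula in (2), while you derive the dimension formula in (2) from (4). Your route is more structurally transparent (it isolates the reason bar is non-trivial on each $A_i$, $i\ge1$) at the cost of one extra computation; the paper's route is slightly shorter and uses only elementary counting, but the squeeze argument obscures the underlying fact that bar restricted to a bar-stable $A_i$ ($i\ge1$) always has order exactly two. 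Both make essential use of $n$ being odd, just in different places.
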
 

\begin{proof}
(1). Obvious.

(2). 
For any $a\in FG$, by Eq.\eqref{FG=},
we assume $a=\sum_{e\in\wh E}a_e$ with $a_e\in FGe$ for $e\in \wh E$.
Then 
$\ol a=\sum_{e\in\wh E}\ol a_e$.
For any $e\in \wh E$,
since $\ol e=e$ and $\ol{FGe}=FG\ol e=FGe$, 
we have $\ol a_e\in FGe$,
then $a=\ol a$ iff $a_e=\ol a_e$ for $e\in \wh E$,
i.e., Eq.\eqref{FG^dag=} holds.

For any $1\ne x\in G$, $\ol x=x^{-1}\ne x$ (because the order $n$ of $G$ is odd). 
Thus, $a=\sum_{x\in G}a_xx \in (FG)^{\flat}$ if and only if
$a_{x^{-1}}=a_x$ for all $1\ne x\in G$. 
In other words, except for $1\in G$, there are 
$\frac{n-1}{2}$ coefficients of $a\in (FG)^{\flat}$ 
which can be chosen from $F$ freely.
That is, $\dim (FG)^{\flat}=1+\frac{n-1}{2}$.

(3). For $a\in A_{r\+j}$ and $b\in \ol A_{r\+j}$,
since $\ol a\in \ol A_{r\+j}$ and $\ol b\in A_{r\+j}$,
then
$$
 a+b\in (A_{r\+j}+\ol A_{r\+j})^\flat\iff
 \ol a+\ol b=a+b \iff \ol a=b~\mbox{and}~\ol b=a. 
$$

(4).
For $i=1,\cdots,r$,
since $A_i$ is a finite field and the ``bar'' map is an automorphism of it
of order $\le 2$,  
then we have
$$
 \dim A_i^\flat=
 \begin{cases}
  \dim A_i, & \mbox{``bar'' is the identity automorphism of $FGe_i$;}\\
  \frac{1}{2}\dim A_i, & \mbox{otherwise.}\\
 \end{cases}
$$
In particular, 
$\frac{1}{2} \dim A_i\le \dim A_i^\flat$.
By (3), 
$$\textstyle\dim (\wh A_{r\+j})^\flat=\dim A_{r\+j}=\frac{1}{2}\dim (A_{r\+j}+\ol A_{r\+j}).$$
By Eq.\eqref{FG^dag=},
\begin{align*}
\textstyle 1+ \frac{n-1}{2}
 &\textstyle =1+\sum_{i=1}^r\dim A_i^\flat
    +\sum_{j=1}^s\dim (\wh A_{r\+j})^\flat\\
 &\textstyle \ge 1+\frac{1}{2}\sum_{i=1}^r\dim A_i 
 +\frac{1}{2}\sum_{j=1}^s(\dim A_{r\+j}+\dim\ol A_{r\+j})
 =1+\frac{n-1}{2}.
\end{align*}
So, for $i=1,\cdots,r$, every 
``$\frac{1}{2} \dim A_i\le \dim A_i^\flat$''  
has to be an equality; that is,
$\frac{1}{2} \dim A_i=\dim A_i^\flat$, for $i=1,\cdots,r$.
\end{proof}

\medskip
Set
\begin{equation}\label{hat E^dag}
\wh{E}^\dag=\wh E-\{e_0\}=
\{e_1,\cdots,e_r, \,\wh e_{r\+1},\cdots, \wh e_{r\+s}\}.
\end{equation}
By Lemma \ref{FG^flat},
for any $e\in\wh E^\dag$,
$\dim FGe$ is even.
Denote 
\begin{align*}
&\textstyle n_e=\frac{1}{2}\dim FGe,~~e\in \wh E^\dag; \\
& n_i=n_{e_i}, ~ i=1,\cdots,r; ~~~ 
 n_{r+j}=n_{\wh e_{r+j}}, ~ j=1,\cdots,s;
\end{align*}
then 
\begin{equation}\label{dim n-1}\textstyle
 n=1+2\sum_{e\in \wh E^\dag}n_e
 = 1+ 2\big(\sum_{i=1}^r n_i+\sum_{j=1}^s n_{r\+j}\big).
\end{equation}

\begin{remark}\rm\label{E_C}
For any ideal  $A\le FG$,
 there is a unique subset $E_A\subseteq E$
such that 
$$\textstyle
 A=\bigoplus_{e\in E_A}FGe=FGe_A,\quad   
 \mbox{where~ } e_A=\sum_{e\in E_A} e.
$$ 
So any ideal $A$ is a ring with identity $e_A$ and with  the unit group
$$\textstyle
 A^\times=\prod_{e\in E_A}(FGe)^\times,
$$
where $(FGe)^\times=FGe-\{0\}$
since  $FGe$ is a finite field. 
Thus, for $a\in A$, $FGa=A$ if and only if $a\in A^\times$, 
if and only if $ae\ne 0$ for all $e\in E_A$.

In particular, the multiplicative unit group of $FG$ is as follows: 
\begin{equation*}\label{FG^times}
(FG)^\times=F^\times e_0\mbox{\,\large$\times$\,}
\Big( \mathop{\mbox{\large$\times$}}\limits_{i=1}^r A_i^\times\Big)
\mbox{\,\large$\times$\,} 
\Big(\mathop{\mbox{\large$\times$}}\limits_{j=1}^s 
 \big(A_{r\+j}^\times\mbox{\large$\times$}\ol A_{r\+j}^\times\big)\Big).
\end{equation*}
\end{remark}

\section{$2$-quasi-abelian codes}\label{2-quasi}

Keep the notation in Section \ref{abelian}.

The outer direct sum 
$(FG)^2:=FG\times FG=\{(a,b)\,|\,a,b\in FG\}$ is an $FG$-module.
Any $FG$-submodule $C$ of $(FG)^2$, denoted by $C\le(FG)^2$, 
 is said to be a {\em $2$-quasi-abelian code},
or a {\em $2$-quasi-$FG$ code}. 
Denote by ${\rm w}(a,b)$ the Hamming weight of $(a,b)\in (FG)^2$,
and by  ${\rm w}(C)$ the minimum Hamming weight  of $C$. 
Note that $C$ is a linear code of length $2n$.
The relative minimum distance $\Delta(C)=\frac{{\rm w}(C)}{2n}$, 
and the rate $R(C)=\frac{\dim C}{2n}$. 

\medskip
For any $(a,b)\in (FG)^2$,
by Eq.\eqref{e hat E} we have 
$$(a,b)=1\cdot (a,b)= e_0(a,b) + e_1(a,b)+\cdots+ e_r(a,b)  
 + \wh e_{r\+1}(a,b)+\cdots+ \wh e_{r\+s}(a,b).$$
 Thus for any $C\le (FG)^2$,
\begin{equation}\label{C=...+...}
C=e_0 C\oplus e_1C\oplus\cdots\oplus e_r C 
 \oplus \wh e_{r\+1}C\oplus\cdots\oplus \wh e_{r\+s}C.
\end{equation}

\begin{lemma}\label{(FG)^2=...oplus...}
{\rm(1)}
As $FG$-modules we have a direct sum decomposition: 
$$(FG)^2=(Fe_0)^2\oplus(FGe_1)^2\oplus\cdots\oplus (FGe_r)^2
 \oplus (FG\wh e_{r\+1})^2\oplus\cdots\oplus (FG\wh e_{r\+s})^2.
$$

{\rm(2)}
For any $e\ne e'\in \wh E
 =\{e_0,e_1,\cdots,e_r, \wh e_{r\+1},\cdots, \wh e_{r\+s}\}$, 
 the submodules $(FG e)^2$ and $(FG e')^2$ of $(FG)^2$ are orthogonal, i.e., 
$\big\langle (FG e)^2,\,(FG e')^2\big\rangle=0$.

{\rm(3)}
For any $e\in\wh E$, 
the orthogonal submodule 
$((F\!Ge)^2)^\bot\!=\!\bigoplus_{e'\in\wh E-\{e\}}\!(F\!G e')^2$.
\end{lemma}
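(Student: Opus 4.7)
The plan is to deduce all three parts from the decomposition Eq.\eqref{FG=} of $FG$ together with the identities $\ol e=e$ and $ee'=0$ (for $e\ne e'$) on $\wh E$ from Eq.\eqref{e hat E}, and the trace-form expression Eq.\eqref{inner} for the inner product. The overall strategy is: establish (1) by passing the last line of Eq.\eqref{FG=} to each coordinate of $(FG)^2$; establish (2) by reducing the inner product of elements in $(FGe)^2$ and $(FGe')^2$ to a single scalar identity $ee'=0$; and then derive (3) from (1) and (2) by a dimension count.

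For part (1), I would use the fourth line of Eq.\eqref{FG=}, namely $FG=FGe_0\oplus\bigoplus_{i=1}^r FGe_i\oplus\bigoplus_{j=1}^s FG\wh e_{r+j}$. Since $1=\sum_{e\in\wh E}e$ by Eq.\eqref{e hat E}, every $(a,b)\in(FG)^2$ is written as $(a,b)=\sum_{e\in\wh E}(ea,eb)$ with $(ea,eb)\in(FGe)^2$, and uniqueness is inherited coordinatewise from $FG$. This is an $FG$-module direct sum because each $(FGe)^2$ is visibly an $FG$-submodule.

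For part (2), fix $e\ne e'$ in $\wh E$ and take arbitrary $(a,b)\in(FGe)^2$ and $(a',b')\in(FGe')^2$. The inner product splits as $\langle(a,b),(a',b')\rangle=\langle a,a'\rangle+\langle b,b'\rangle$. Applying Eq.\eqref{inner}, the commutativity of $FG$, and $\ol{e'}=e'$ from Eq.\eqref{e hat E}, I get
\begin{equation*}
\langle a,a'\rangle=\sigma(a\,\ol{a'})=\sigma\bigl((ea)(\ol{e'\!}\,\ol{a'})\bigr)
  =\sigma\bigl(ee'\cdot a\ol{a'}\bigr)=\sigma(0)=0,
\end{equation*}
since $ee'=0$. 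The same computation yields $\langle b,b'\rangle=0$.

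For part (3), combining (1) and (2) gives the inclusion $\bigoplus_{e'\in\wh E-\{e\}}(FGe')^2\subseteq ((FGe)^2)^\perp$. Since the euclidean inner product on $(FG)^2$ is non-degenerate, $\dim((FGe)^2)^\perp=2n-\dim(FGe)^2$, and part (1) rewrites this as $\sum_{e'\ne e}\dim(FGe')^2=\dim\bigoplus_{e'\ne e}(FGe')^2$, forcing the inclusion to be an equality.

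The lemma is essentially a bookkeeping exercise; the only subtle point — and the single place where some care is needed — is the reduction in part (2) of the orthogonality of $(FGe)^2$ and $(FGe')^2$ to the scalar identity $ee'=0$, which is what makes the properties of $\wh E$ (rather than the finer set $E$) the right tool here.
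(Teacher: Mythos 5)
Your proof is correct and follows essentially the same route as the paper: part (1) from the decomposition of $FG$ applied coordinatewise, part (2) by reducing the inner product via Eq.\eqref{inner} and $\ol{e'}=e'$, $ee'=0$ to zero, and part (3) from (1) and (2). The only difference is that for (3) you spell out the non-degeneracy/dimension-count argument, which the paper leaves implicit with ``this follows from (2) immediately.''
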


\begin{proof}
(1).
This is checked by Eq.\eqref{C=...+...} and the following: 
for $e\in \wh E$,  
$$e(FG)^2=\{(ea,eb)\,|\,a,b\in FG\}=FGe\times FGe=(FGe)^2.$$

(2). 
Note that $\ol {e'}=e'$ and $ee'=0$. 
For $(ae,be)\in(FGe)^2$ and $(a'e',b'e')\in(FGe')^2$, where $a,b,a',b'\in FG$, 
by Eq.\eqref{inner} we have 
\begin{align*}
&\big\langle(ae,be),\,(a'e',b'e')\big\rangle
 =\sigma(ae\,\ol{a'e'})+\sigma(be\,\ol{b'e'})\\
 &  =\sigma(ae\ol {e'}~\ol {a'})+\sigma(be\ol {e'}~\ol {b'})
 =\sigma(aee'\ol {a'})+\sigma(bee'\ol{b'})=0.
\end{align*}

(3). This follows from (2) immediately. 
\end{proof}

Note that, for $e\in\wh E$, $FGe$ is an algebra (with identity $e$)
and $(FGe)^2$ is an $FGe$-module, 
and Lemma \ref{(FG)^2=...oplus...} 
implies that
the restriction to $(FGe)^2$ of the inner product of $(FG)^2$ 
is non-degenerate.

For the decomposition of $C\le(FG)^2$ in Eq.\eqref{C=...+...},
the component $eC$ for $e\in\wh E$ is an $FGe$-submodule of $(FGe)^2$.

\begin{corollary}\label{C orth} 
Let $C\le(FG)^2$ be as in Eq.\eqref{C=...+...}. 
Then the orthogonal submodule $C^\bot$ has a direct decomposition: 
$$
C^\bot=(e_0C)^{\bot_0}\oplus(e_1C)^{\bot_1}\oplus\cdots\oplus(e_rC)^{\bot_r}
 \oplus(\wh e_{r\+1}C)^{\bot_{r\+1}}\oplus\cdots\oplus(\wh e_{r\+s}C)^{\bot_{r\+s}},
$$
where $(e_iC)^{\bot_i}$ for $i=0,1,\cdots,r$ denotes
the orthogonal $FGe_i$-submodule in $(FGe_i)^2$, and
$(\wh e_{r\+j}C)^{\bot_{r\+j}}$ for $j=1,\cdots,s$ denotes
the orthogonal $FG\wh e_{r\+j}$-submodule in $(FG\wh e_{r\+j})^2$. 
\end{corollary}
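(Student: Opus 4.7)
The plan is to verify the claimed decomposition directly from the direct sum and pairwise orthogonality properties established in Lemma \ref{(FG)^2=...oplus...}, without needing any new structural input.

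First, I would establish the inclusion
$\bigoplus_{e\in\wh E}(eC)^{\bot_e}\subseteq C^\bot$.
Take any $v=\sum_{e\in\wh E}v_e$ with $v_e\in(eC)^{\bot_e}\subseteq(FGe)^2$, and any $c\in C$. By Eq.\eqref{C=...+...}, $c=\sum_{e'\in\wh E}c_{e'}$ with $c_{e'}\in e'C\subseteq(FGe')^2$. Then
$\langle v,c\rangle=\sum_{e,e'}\langle v_e,c_{e'}\rangle$;
the cross terms $e\ne e'$ vanish by Lemma \ref{(FG)^2=...oplus...}(2), while the diagonal terms vanish by the definition of $(eC)^{\bot_e}$ since $c_e\in eC$. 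Hence $v\in C^\bot$.

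Next, I would prove the reverse inclusion $C^\bot\subseteq\bigoplus_{e\in\wh E}(eC)^{\bot_e}$. Given $v\in C^\bot$, use Lemma \ref{(FG)^2=...oplus...}(1) to decompose uniquely $v=\sum_{e\in\wh E}v_e$ with $v_e\in(FGe)^2$. To show $v_e\in(eC)^{\bot_e}$, pick any $c_e\in eC$. Since $C$ is an $FG$-module, $c_e=ec_e\in C$, so $\langle v,c_e\rangle=0$. Expanding and using Lemma \ref{(FG)^2=...oplus...}(2) (the terms with $e'\ne e$ vanish because $c_e\in(FGe)^2$ and $v_{e'}\in(FGe')^2$), one is left with $\langle v_e,c_e\rangle=0$. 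Since $c_e\in eC$ was arbitrary, $v_e\in(eC)^{\bot_e}$.

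Finally, the sum on the right is direct because it sits inside the direct sum $\bigoplus_{e\in\wh E}(FGe)^2=(FG)^2$, which gives the claimed decomposition of $C^\bot$. There is no real obstacle: the only point requiring a moment's care is to recognize that $eC\subseteq C$ (immediate from the $FG$-module structure of $C$ together with $e\in FG$), so that testing orthogonality on $eC$ is the same as testing a piece of orthogonality with $C$; everything else is bookkeeping with the orthogonal decomposition of $(FG)^2$ already proved.
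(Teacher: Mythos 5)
Your proof is correct, and it differs from the paper's in one half. You and the paper both establish the inclusion $\bigoplus_{e\in\wh E}(eC)^{\bot_e}\subseteq C^\bot$ by invoking the pairwise orthogonality in Lemma~\ref{(FG)^2=...oplus...}(2). For the reverse inclusion, the paper simply compares dimensions: one checks that both sides have dimension $2n-\dim C$, which relies (implicitly) on the non-degeneracy of the inner product restricted to each $(FGe)^2$, a fact the paper records right after Lemma~\ref{(FG)^2=...oplus...}. You instead prove the reverse inclusion directly: given $v\in C^\bot$ with components $v_e\in(FGe)^2$, you use $eC\subseteq C$ (since $C$ is an $FG$-submodule) and the vanishing of cross terms to isolate $\langle v_e,c_e\rangle=0$ for all $c_e\in eC$. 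Your version is a bit longer but more self-contained; in particular it sidesteps the need for the non-degeneracy remark and the dimension formula for orthogonal complements, whereas the paper's dimension count is shorter but leans on those facts. Both are complete and correct.
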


\begin{proof}
By Lemma \ref{(FG)^2=...oplus...}(2), the right hand side of the wanted equality
is contained in the left hand side. 
By computing the dimensions of the two sides of the wanted equality,
we obtain that the wanted equality does hold. 
\end{proof}

\begin{corollary} \label{C self-dual} 
Let $C\le(FG)^2$ be as above. Then $C$ is self-orthogonal if and only if
$eC$ is self-orthogonal in $(FGe)^2$ for any $e\in\wh E$.
\end{corollary}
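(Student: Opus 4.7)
The plan is to reduce the claim directly to the structural decompositions already established, namely Eq.\eqref{C=...+...} and Corollary \ref{C orth}, together with the mutual orthogonality of the summands $(FGe)^2$ from Lemma \ref{(FG)^2=...oplus...}(2). Since ``$C$ self-orthogonal'' means precisely $C\subseteq C^\bot$, and both $C$ and $C^\bot$ decompose along the same family of submodules $\{(FGe)^2\}_{e\in\wh E}$, the assertion should fall out by comparing components.

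For the forward direction, assuming $C$ is self-orthogonal, I would simply observe that $eC\subseteq C$ for every $e\in\wh E$ (since $e\in FG$ and $C$ is an $FG$-submodule), so the inner product vanishes on $eC\times eC$, hence $eC$ is self-orthogonal in $(FGe)^2$.

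For the converse, assume each $eC$ is self-orthogonal in $(FGe)^2$. Given $u,v\in C$, use Eq.\eqref{C=...+...} to write $u=\sum_{e\in\wh E}u_e$ and $v=\sum_{e\in\wh E}v_e$ with $u_e,v_e\in eC$. Then
\[
 \langle u,v\rangle=\sum_{e,e'\in\wh E}\langle u_e,v_{e'}\rangle;
\]
the cross terms with $e\ne e'$ vanish by Lemma \ref{(FG)^2=...oplus...}(2), and the diagonal terms vanish by hypothesis. So $C\subseteq C^\bot$. Alternatively, and perhaps more transparently, one can invoke Corollary \ref{C orth} directly: the inclusion $C\subseteq C^\bot$ amounts, component by component in the summands $(FGe)^2$, to $eC\subseteq (eC)^{\bot_e}$ for each $e\in\wh E$.

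There is essentially no obstacle here; the content has already been packaged into Lemma \ref{(FG)^2=...oplus...} and Corollary \ref{C orth}. The only thing to be slightly careful about is to make clear that the ``$e$-component of $C$'' appearing in Eq.\eqref{C=...+...} really coincides with $eC$ (which is immediate from $e^2=e$ and $ee'=0$ for $e\ne e'$ in $\wh E$, by Eq.\eqref{e hat E}), so that the two direct sum decompositions of $C$ and $C^\bot$ can be compared summand-wise without ambiguity.
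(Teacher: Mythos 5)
Your proof is correct and follows exactly the route the paper intends: the corollary is stated without its own proof precisely because it is an immediate consequence of Corollary \ref{C orth} (and Lemma \ref{(FG)^2=...oplus...}(2)), which is the reduction you carry out. Both your direct component-by-component computation and the alternative invocation of Corollary \ref{C orth} are valid and amount to the same thing.
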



\medskip
In general, a $2$-quasi-$FG$ code may not be  generated by one element.
For our purpose, in the following we consider the $2$-quasi-$FG$ codes
 generated by one element. 
 For any $(a,b)\in(FG)^2$,
  let 
  \begin{equation}\label{one gen}
  C_{a,b}=\{(ua,ub)\,|\,u\in FG\}
  \end{equation}
be the $FG$-submodule of $(FG)^2$ generated by $(a,b)$.

\begin{lemma}\label{dimension of Ca,b}
As $FG$-modules,
$C_{a,b}\cong FG/{\rm Ann}_{FG}(a,b)$,
where ${\rm Ann}_{FG}(a,b)=\{u\,|\, u\in FG, ua=0=ub\}$
denotes the annihilator of $\{a,b\}$ in $FG$.
In particular, $\dim C_{a,b}\le n$.
\end{lemma}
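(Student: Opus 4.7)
The plan is to set up the natural $FG$-module surjection from $FG$ onto $C_{a,b}$ and identify its kernel. Define
\[
  \varphi:FG\longrightarrow (FG)^2,\qquad u\longmapsto (ua,ub).
\]
First I would check that $\varphi$ is $FG$-linear: for $v,u\in FG$ we have $\varphi(vu)=(vua,vub)=v(ua,ub)=v\varphi(u)$. By the very definition in Eq.\eqref{one gen}, the image of $\varphi$ is exactly $C_{a,b}$, so $\varphi$ is a surjection of $FG$-modules onto $C_{a,b}$.

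Next I would identify the kernel. By definition $\ker\varphi=\{u\in FG\mid ua=0\text{ and }ub=0\}={\rm Ann}_{FG}(a,b)$. Applying the first isomorphism theorem for $FG$-modules then gives
\[
  C_{a,b}\;\cong\;FG/{\rm Ann}_{FG}(a,b).
\]

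For the dimension bound, viewing the isomorphism as one of $F$-vector spaces yields $\dim_F C_{a,b}=\dim_F FG-\dim_F{\rm Ann}_{FG}(a,b)\le \dim_F FG=n$, which is the stated inequality.

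Since every step is a direct application of module-theoretic basics, there is essentially no obstacle here; the only thing to be slightly careful about is to declare $\varphi$ as a left $FG$-module map (the module structure on $(FG)^2$ is the diagonal one, which is exactly how $C_{a,b}$ inherits its $FG$-action), so that $\ker\varphi$ is automatically a left ideal and the quotient makes sense as an $FG$-module. No deeper facts about the semisimple decomposition in Section \ref{abelian} or about the ``bar'' map are needed for this particular lemma.
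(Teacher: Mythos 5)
Your proof is correct and follows exactly the same route as the paper's: define the $FG$-linear map $u\mapsto(ua,ub)$, note it surjects onto $C_{a,b}$, identify the kernel with the annihilator, and apply the first isomorphism theorem, with the dimension bound being an immediate consequence.
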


\begin{proof}
By the construction of $C_{a,b}$,
there is a surjective $FG$-module homomorphism 
from $FG$ to $C_{a,b}$,
mapping $u$ to $(ua,ub)$ for any $u\in FG$.
It is easy to check that the kernel is just ${\rm Ann}_{FG}(a,b)$.
Then we have 
$FG/{\rm Ann}_{FG}(a,b)\cong C_{a,b}$.
\end{proof}

If one of $a$ and $b$ is invertible, then
${\rm Ann}_{FG}(a,b)=0$,
hence $\dim C_{a,b}=n$ obviously. 
Without loss of generality, we assume that $a\in(FG)^\times$ is a unit, 
then $(a,b)=a(1, a^{-1}b)\in C_{1,a^{-1}b}$ and
$C_{a,b}=C_{1, a^{-1}b}$.

\begin{definition}\label{d type I}\rm
For any $b\in FG$, we call $C_{1,b}=\{(u,ub)\,|\,u\in FG\}\le(FG)^2$
a $2$-quasi-$FG$ code {\em of type I}, or a $2$-quasi-abelian code of {\em type I}.
\end{definition}

\begin{remark}\label{r type I}\rm
List the elements of $G$ as 
\begin{equation}\label{list G} 
G=\{x_0=1,x_1,\cdots,x_{n-1}\}. 
\end{equation}
Any $a=\sum_{j=0}^{n-1}a_jx_j\in FG$ corresponds to a sequence $(a_0,a_1,\cdots,a_{n-1})\in F^n$. Thus, 
as a linear code, $C_{1,b}$ is an $F$-subspace of $(FG)^2$ with basis as follows:
\begin{equation}\label{1,b}
 (1,b),~(x_1, x_1b), ~\cdots,~(x_{n-1},x_{n-1}b).
\end{equation}
Of course, $1$ corresponds to the sequence $(1,0,\cdots,0)\in F^n$. 
Set $b=\sum_{j=0}^{n-1}b_jx_j$, i.e., $b$ corresponds to the sequence 
$(b_0, b_1,\cdots,b_{n-1})\in F^n$.
Any $x_i\in G$ provides a Cayley permutation $\rho_i$: 
\begin{equation}\label{rho_i}
 x_i \longmapsto \rho_i\!=\!\begin{pmatrix}
 0 & 1 & \cdots & j & \cdots & n-1\\
 0' & 1' & \cdots & j' & \cdots & (n-1)'
\end{pmatrix},
\end{equation}
where $x_{j'}=x_ix_j$ for $j=0,1,\cdots,n-1$; in particular,  $0'=i$ because $x_i x_0=x_i$.
The map $x_i\mapsto\rho_i$ is said to be 
the {\em Cayley representation} of the group $G$, 
e.g., see~\cite[pp. 28]{AB}.
Clearly, $\rho_0$ is the identity permutation. Then 
$$\textstyle
 x_ib=\sum_{j=0}^{n-1}b_jx_ix_j=\sum_{j=0}^{n-1}b_jx_{j'};
$$ 
i.e.,
  $x_ib$ corresponds to the sequence obtained 
by $\rho_i$-{permuting} on the sequence $(b_0, b_1,\cdots,b_{n-1})$. 
Let $B$ be the $n\times n$ matrix whose $i$'th row is obtained
by $\rho_i$-permuting on $(b_0, b_1,\cdots,b_{n-1})\in F^n$. 
By $I$ we denote the identity matrix.
\end{remark}

\begin{lemma}\label{l type I} 
Keep the notation in Remark \ref{r type I}.
Then a linear code $C$ of length $2n$ and dimension $n$
has a generating matrix $(I~B)$ if and only if $C=C_{1,b}$. 
\end{lemma}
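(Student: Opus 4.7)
The plan is to verify both implications by writing out the given basis of $C_{1,b}$ explicitly in coordinates and comparing it row by row with the matrix $(I\mid B)$ constructed in Remark~\ref{r type I}. The key observation is that the ordered basis $(x_i,x_ib)$, $i=0,1,\ldots,n-1$, of $C_{1,b}$ already displays the block structure of $(I\mid B)$, so the lemma reduces to a bookkeeping check.

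First I would confirm that $C_{1,b}$ has dimension exactly $n$ as an $F$-space. Since $1\in FG$ is a unit, the annihilator $\mathrm{Ann}_{FG}(1,b)$ is zero, so Lemma~\ref{dimension of Ca,b} yields $\dim_F C_{1,b}=n$; in particular $C_{1,b}$ is a linear code of length $2n$ and dimension $n$, matching the hypothesis imposed on $C$, and the $n$ vectors listed in \eqref{1,b} are $F$-linearly independent.

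Next I would write each basis vector $(x_i,x_ib)$ as an element of $F^{2n}$ using the ordering $x_0,x_1,\ldots,x_{n-1}$ of $G$ on each of the two $FG$-factors. The first $n$ coordinates of $(x_i,x_ib)$ are the coordinates of $x_i$, namely the standard unit vector with a $1$ in position $i$; stacking these rows for $i=0,\ldots,n-1$ produces the identity matrix $I$. The last $n$ coordinates are the coordinates of $x_ib$, which by the computation at the end of Remark~\ref{r type I} are precisely obtained from $(b_0,\ldots,b_{n-1})$ by the Cayley permutation $\rho_i$, i.e.\ they form the $i$-th row of $B$. Hence the $n\times 2n$ matrix with rows $(x_i,x_ib)$ equals $(I\mid B)$, giving the ``only if'' direction.

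For the ``if'' direction, if $C$ is a linear code of length $2n$ and dimension $n$ with generating matrix $(I\mid B)$, then by the previous paragraph the rows of this matrix coincide with $(x_i,x_ib)$, $i=0,\ldots,n-1$, all of which lie in $C_{1,b}$. Thus $C\subseteq C_{1,b}$, and since both spaces have $F$-dimension $n$ they coincide. There is no serious obstacle in the argument; the statement is essentially a translation between the module-theoretic description of $C_{1,b}$ and its matrix presentation, and the only subtlety worth flagging explicitly is the use of the unit $1\in FG$ to guarantee $\dim C_{1,b}=n$.
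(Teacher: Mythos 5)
Your proof is correct and follows essentially the same route as the paper: both identify the rows of $(I\mid B)$ with the basis vectors $(x_i, x_ib)$ of $C_{1,b}$ via the Cayley representation from Remark~\ref{r type I} and conclude by comparing the two spans. You add a slightly more explicit dimension check using Lemma~\ref{dimension of Ca,b}, but the underlying argument is the same.
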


\begin{proof}
Assume that $C=C_{1,b}$. By the above analysis, $(x_i,x_ib)$ in Eq.\eqref{1,b}
corresponds to the $i$'th row of the matrix $(I~B)$.
So $C$ is a $[2n,n]$ linear code having $(I~B)$ as a generating matrix.

Conversely, assume that $C$ is a $[2n,n]$ linear code 
having $(I~B)$ as a generating matrix.
The $i$'th row of the matrix $(I~B)$ corresponds to the element $(x_i,x_ib)\in(FG)^2$.
Thus $C$ can be regarded as an $F$-subspace of $(FG)^2$
with basis Eq.\eqref{1,b}; hence $C=C_{1,b}$.
\end{proof}

\begin{example}\label{double circulant}\rm
A typical example is to take $G=\{1,x,x^2,\cdots,x^{n-1}\}$ 
to be a cyclic group of order $n$. Then 
$\rho_1$ is just the cyclic permutation, $\rho_i=\rho_1^i$, 
and $B$ is just the usual circulant matrix with first row $(b_0,b_1,\cdots,b_{n-1})$. 
The code $C_{1,b}$ with generating matrix $(I~B)$ is said to be a
{\em double circulant code} in literature, e.g., see \cite{AOS, TTHAA}. 
\end{example}

We consider self-dual $2$-quasi-$FG$ codes of type I.

\begin{lemma}\label{orth}
For any $C_{a,b}\le(FG)^2$ as in Eq.\eqref{one gen},  
the following three are equivalent to each other: 
\begin{itemize}
\item[\rm(1)] \vskip-5pt
 $C_{a,b}$ is self-orthogonal;
\item[\rm(2)] \vskip-3pt
 $a\ol a+b\ol b=0$;
\item[\rm(3)]  \vskip-3pt
the following two hold:
\begin{itemize}
 \item[\rm(3.i)] \vskip-4pt 
  $ae_i\ol{ae_i}+be_i\ol{be_i}=0$, for $i=0,1,\cdots,r$; 
 \item[\rm(3.ii)] \vskip-2pt
  $a\wh e_{r\+j}\ol{a\wh e_{r\+j}}+b\wh e_{r\+j}\ol{b\wh e_{r\+j}}=0$, 
  for $j=1,\cdots,s$.
 \end{itemize}
\end{itemize}
\end{lemma}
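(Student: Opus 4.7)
The strategy is to first establish (1)$\Leftrightarrow$(2) by a single calculation using Eq.\eqref{inner} and the commutativity of $FG$, and then to pass from (2) to (3) by decomposing along the ``bar''-invariant idempotents $\wh E$ introduced in Eq.\eqref{d hat E}.

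For (1)$\Leftrightarrow$(2), I would observe that $C_{a,b}=\{(ua,ub)\mid u\in FG\}$ is self-orthogonal iff $\langle(ua,ub),(va,vb)\rangle=0$ for every $u,v\in FG$. By Eq.\eqref{inner} and the commutativity of $FG$,
\[
\langle(ua,ub),(va,vb)\rangle=\sigma(ua\ol{va})+\sigma(ub\ol{vb})=\sigma\bigl(u\ol v\,(a\ol a+b\ol b)\bigr).
\]
Specializing $v=1$ and letting $u$ range over $FG$, self-orthogonality reduces to $\sigma\bigl(u(a\ol a+b\ol b)\bigr)=0$ for every $u\in FG$. Since $\sigma(uc)=\langle u,\ol c\rangle$ by Eq.\eqref{inner} and the euclidean inner product on $FG$ is non-degenerate, this forces $a\ol a+b\ol b=0$. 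The converse implication is immediate from the same displayed identity.

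For (2)$\Leftrightarrow$(3), the key inputs are the relations in Eq.\eqref{e hat E}: $1=\sum_{e\in\wh E}e$, $ee'=0$ for distinct $e,e'\in\wh E$, and $\ol e=e$ for each $e\in\wh E$. Since ``bar'' is an algebra automorphism, $\ol{ae}=\ol a\,\ol e=\ol a e$, and combining with commutativity and $e^2=e$ gives
\[
(a\ol a+b\ol b)e=(ae)\ol{(ae)}+(be)\ol{(be)}.
\]
By the direct sum decomposition $FG=\bigoplus_{e\in\wh E}FGe$ of Eq.\eqref{FG=}, vanishing of $a\ol a+b\ol b$ is equivalent to vanishing of each $FGe$-component. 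Letting $e$ run through $\wh E=\{e_0,e_1,\ldots,e_r,\wh e_{r\+1},\ldots,\wh e_{r\+s}\}$ yields precisely the family (3.i) together with (3.ii).

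The one step that requires genuine care is the non-degeneracy argument in (1)$\Rightarrow$(2), where one must be sure that the bilinear-in-$(u,v)$ condition really pins the single element $a\ol a+b\ol b$ down to zero; everything else is a bookkeeping exercise using commutativity of $FG$, the multiplicativity of the ``bar'' map, and the orthogonal idempotent structure of $\wh E$. I do not anticipate any deeper obstacle.
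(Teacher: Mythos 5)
Your proof is correct and follows essentially the same route as the paper's. For (1)$\Leftrightarrow$(2), both rely on the identity $\langle u(a,b),v(a,b)\rangle=\sigma\bigl(u\ol v(a\ol a+b\ol b)\bigr)$; the paper phrases the reverse direction by explicitly picking $v=x_0$ to extract the $x_0$-coefficient of $a\ol a+b\ol b$, while you invoke non-degeneracy of the euclidean form via $\sigma(uc)=\langle u,\ol c\rangle$ -- these are the same argument in two guises. The passage (2)$\Leftrightarrow$(3) by projecting with the idempotents of $\wh E$ is also exactly the paper's step, so there is nothing to bridge.
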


\begin{proof}
(2)$\Rightarrow$(1)
For any elements $u(a,b),u'(a,b)\in C_{(a,b)}$,
the inner product
\begin{align*}
  \big \langle u(a,b),u'(a,b)\big \rangle
=\sigma (ua\cdot \overline{u'a})+\sigma(ub\cdot\overline{u'b})
=\sigma\big(u\overline{u'}(a\ol a+b\ol b)\big)=0.
\end{align*}

(1)$\Rightarrow$(2).
The proof is similar to \cite[Lemma II.3]{FL20}.
Suppose $a\bar a+b\bar b=\sum_{x\in G}a_xx$
with a coefficient $a_{x_0}\neq 0$.
Then the inner product of $(a,b)\in C_{a,b}$ and $x_0(a,b)\in C_{a,b}$ is 
$$\textstyle 
\big \langle(a,b),x_0(a,b)\big\rangle
=\sigma\big(\overline{x_0}(a\ol a+b\ol b)\big)
=\sigma\big({x_0}^{-1}\cdot\sum_{x\in G}a_xx\big)
=a_{x_0}\neq 0,
$$
which contradicts to (1). 

(2)$\Leftrightarrow$(3).
Since
\begin{align*}
&a=ae_0+ae_1+\cdots+ae_r+a\wh e_{r\+1}+\cdots+a\wh e_{r\+s}, \\
&b=be_0+be_1+\cdots+be_r+b\wh e_{r\+1}+\cdots+b\wh e_{r\+s},
\end{align*}
then
\begin{align*}
a\ol a+b\ol b
&=(ae_0\ol{ae_0}+be_0\ol{be_0})
  +(ae_1\ol{ae_1}+be_1\ol{be_1})+\cdots+(ae_r\ol{ae_r}+be_r\ol{be_r})\\
&~~~+(a\wh e_{r\+1}\ol{a\wh e_{r\+1}}+b\wh e_{r\+1}\ol{b\wh e_{r\+1}})+\cdots+(a\wh e_{r\+s}\ol{a\wh e_{r\+s}}+b\wh e_{r\+s}\ol{b\wh e_{r\+s}}).
\end{align*}
Hence, (2) and (3) are equivalent.
\end{proof}

\begin{corollary}\label{self-dual}
Let $C_{1,b}$ be a $2$-quasi-$FG$ code of type I. 
The following are equivalent to each other: 
\begin{itemize}
\item[\rm(1)]\vskip-5pt
 $C_{1,b}$ is self-dual; 
\item[\rm(2)] \vskip-3pt
 $b\ol b=-1$;
\item[\rm(3)]\vskip-3pt
the following two hold:
\begin{itemize}
 \item[\rm(3.i)] \vskip-4pt
  $be_i\ol{be_i}=-e_i$, for $i=0,1,\cdots,r$; 
 \item[\rm(3.ii)] \vskip-2pt
  $b\wh e_{r\+j}\ol{b\wh e_{r\+j}}=-\wh e_{r\+j}$, 
  for $j=1,\cdots,s$.
 \end{itemize}
\end{itemize}
\end{corollary}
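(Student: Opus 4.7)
The plan is to deduce this corollary from Lemma \ref{orth} by specializing to $a=1$, together with a dimension count that promotes ``self-orthogonal'' to ``self-dual'' for codes of type~I. The essential observation is that $1\in FG$ is a unit, so the annihilator of $\{1,b\}$ in $FG$ is trivial; by Lemma \ref{dimension of Ca,b} we then have $\dim C_{1,b}=n$, which is exactly half of the ambient length $2n$.

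First I would establish the equivalence of (1) and (2). Because $\dim C_{1,b}=n$, the containment $C_{1,b}\subseteq C_{1,b}^{\bot}$ already forces equality by dimension, hence ``self-dual'' is in this situation equivalent to ``self-orthogonal''. Applying Lemma \ref{orth} with $a=1$ (so $a\ol a=1$) we obtain that $C_{1,b}$ is self-orthogonal if and only if $1+b\ol b=0$, i.e., $b\ol b=-1$. This yields (1)$\Leftrightarrow$(2).

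Next I would deduce (2)$\Leftrightarrow$(3) by specializing Lemma \ref{orth}(3) to $a=1$. For any $e\in\wh E$ one has $\ol e=e$ (by the definition of $\wh E$ in Eq.\eqref{d hat E}), hence $1\cdot e\cdot\overline{1\cdot e}=e\cdot \ol e=e\cdot e=e$. Therefore condition (3.i) of Lemma~\ref{orth}, namely $e_i+be_i\,\ol{be_i}=0$, becomes $be_i\,\ol{be_i}=-e_i$; similarly (3.ii) becomes $b\wh e_{r\+j}\,\ol{b\wh e_{r\+j}}=-\wh e_{r\+j}$. Thus the component-wise equations in (3) of the corollary correspond exactly, via Lemma \ref{orth}(2)$\Leftrightarrow$(3), to the single equation $b\ol b=-1$ in~(2).

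There is no real obstacle here; the corollary is essentially a transcription of Lemma \ref{orth} in the type-I setting, with the only genuine new input being the dimension argument $\dim C_{1,b}=n$ that upgrades self-orthogonality to self-duality. One could alternatively prove (2)$\Leftrightarrow$(3) directly via Eq.\eqref{e hat E}, noting that $\sum_{e\in\wh E}e=1$ and the $e$'s are pairwise orthogonal idempotents fixed by the ``bar'' map, so that $b\ol b+1=\sum_{e\in\wh E}(be\,\ol{be}+e)$ decomposes into independent components in the summands $FGe$, and each summand must vanish separately.
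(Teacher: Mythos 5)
Your proof is correct and follows the same approach as the paper's own (very terse) proof, which simply notes that $\dim C_{1,b}=n$ and invokes Lemma \ref{orth}. You have merely spelled out the details of the specialization $a=1$, including the computation $e\,\ol e=e$ for $e\in\wh E$ that converts Lemma \ref{orth}(3) into the form stated in the corollary; this is the intended argument.
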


\begin{proof}
Since $\dim C_{1,b}=n$ (Lemma \ref{l type I}),  it follows from Lemma \ref{orth}.
\end{proof}

\section{Self-dual $2$-quasi-abelian codes of Type I}\label{self-dual type I}

Keep the notation in Section \ref{2-quasi}.
In the following we always denote
\begin{equation}\label{cal D}
\begin{array}{l}
{\cal D}=\big\{C_{1,b}\,\big|\, b\in FG,\, b\ol b=-1\big\},
\end{array}\end{equation}
which is the set of all self-dual $2$-quasi-$FG$ codes of type I, 
see Corollary \ref{self-dual}.
We always assume that $\delta$ is a real number such that
$0\le\delta\le 1-q^{-1}$, and set
\begin{equation}\label{D^le}
\textstyle
 {\cal D}^{\le\delta}=
\big\{C_{1,b}\,\big|\,C_{1,b}\in{\cal D},\, 
  \Delta(C_{1,b})=\frac{{\rm w}(C_{1,b})}{2n}\le\delta\big\}.
\end{equation}

\subsection{Counting $\big|{\cal D}\big|$}

\begin{theorem}\label{counting D} 
Let the notation be as in Eq.\eqref{d hat E}, 
Eq.\eqref{hat E^dag} and Eq.\eqref{dim n-1}.
Then
$$
\big|{\cal D}\big|=
\begin{cases}
 \prod_{i=1}^r (q^{n_i}+1)\prod_{j=1}^s (q^{n_{r\+j}}-1), & \mbox{$q$ is even};\\[4pt]
 2\prod_{i=1}^r (q^{n_i}+1)\prod_{j=1}^s (q^{n_{r\+j}}-1), & q\equiv 1~({\rm mod}~4); \\[4pt]
0, & q\equiv -1~({\rm mod}~4).
\end{cases}
$$
\end{theorem}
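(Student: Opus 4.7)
The plan is to reduce the count to a product of local counts over the decomposition of $FG$ given in Eq.\eqref{FG=}, using Corollary \ref{self-dual}(3) to split the equation $b\bar b=-1$ componentwise.

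First I would observe that the map $b\mapsto C_{1,b}$ is injective: if $C_{1,b}=C_{1,b'}$, then $(1,b')\in C_{1,b}$ forces $(1,b')=u(1,b)$, hence $u=1$ and $b'=b$. So $|\mathcal D|$ equals the number of $b\in FG$ with $b\bar b=-1$. Writing $b=be_0+\sum_{i=1}^r be_i+\sum_{j=1}^s b\wh e_{r\+j}$ according to Eq.\eqref{FG=}, Corollary \ref{self-dual}(3) shows that $b\bar b=-1$ is equivalent to the system consisting of $be_i\cdot\overline{be_i}=-e_i$ for $i=0,1,\dots,r$ and $b\wh e_{r\+j}\cdot\overline{b\wh e_{r\+j}}=-\wh e_{r\+j}$ for $j=1,\dots,s$. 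Since the components are chosen independently, $|\mathcal D|$ factors as the product of the counts for each component.

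Next I would compute each factor. For the component $Fe_0$: here $\bar e_0=e_0$, so writing $be_0=ce_0$ the equation becomes $c^2=-1$ in $F$, which has $1$ solution if $\mathrm{char}\,F=2$, $2$ solutions if $q\equiv 1\pmod 4$, and $0$ solutions if $q\equiv -1\pmod 4$. This is where the three cases of the theorem come from, and in the third case the total count collapses to $0$, explaining the last branch. For each $A_i$ with $1\le i\le r$: by Lemma \ref{FG^flat}(4) the bar map restricts to a nontrivial order-$2$ automorphism of the finite field $A_i$ (of $F$-dimension $2n_i$), with fixed subfield $A_i^\flat$ of $F$-dimension $n_i$. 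Thus the map $x\mapsto x\bar x$ is the Galois norm $A_i^\times\to(A_i^\flat)^\times$, which for finite fields is surjective with kernel of order $\frac{q^{2n_i}-1}{q^{n_i}-1}=q^{n_i}+1$. Since $-e_i\in(A_i^\flat)^\times$, the number of solutions $be_i$ is exactly $q^{n_i}+1$.

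Finally, for each $\wh A_{r\+j}=A_{r\+j}\oplus\bar A_{r\+j}$: write $b\wh e_{r\+j}=y+z$ with $y\in A_{r\+j}$, $z\in\bar A_{r\+j}$. Then $\bar y\in\bar A_{r\+j}$, $\bar z\in A_{r\+j}$, and since $e_{r\+j}\bar e_{r\+j}=0$ kills the "diagonal" terms, we get $(y+z)(\bar y+\bar z)=y\bar z+z\bar y$, with $y\bar z\in A_{r\+j}$ and $z\bar y\in\bar A_{r\+j}$. Equating to $-\wh e_{r\+j}=-e_{r\+j}-\bar e_{r\+j}$ yields the single equation $y\bar z=-e_{r\+j}$ (its image under bar is automatically the other). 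For any $y\in A_{r\+j}^\times$ this uniquely determines $\bar z=-y^{-1}e_{r\+j}$, hence $z$; and $y$ must be a unit, giving $|A_{r\+j}^\times|=q^{n_{r\+j}}-1$ solutions. Multiplying the $e_0$-factor ($1$, $2$, or $0$) by $\prod_{i=1}^r(q^{n_i}+1)$ and $\prod_{j=1}^s(q^{n_{r\+j}}-1)$ yields the three cases of the theorem.

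The only delicate point is keeping track of what lives in which summand when computing the cross products in the $\wh A_{r\+j}$ component — in particular the vanishing $A_{r\+j}\cdot\bar A_{r\+j}=0$ and the fact that the two "half" equations obtained from $y\bar z+z\bar y=-\wh e_{r\+j}$ are bar-conjugate and therefore equivalent. Everything else is a direct application of norm surjectivity for finite-field extensions and the semisimple decomposition already established in Section \ref{abelian}.
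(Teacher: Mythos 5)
Your proposal follows exactly the paper's route: reduce to counting $b$ with $b\ol b=-1$ via Corollary~\ref{self-dual}(3), split across the $\wh E$-decomposition, and compute the local counts ($1$/$2$/$0$ on $Fe_0$, then $q^{n_i}+1$ on $A_i$, then $q^{n_{r+j}}-1$ on $\wh A_{r+j}$). The only local variation is in the $A_i$ factor, where you invoke surjectivity of the Galois norm $A_i^\times\to(A_i^\flat)^\times$ to count preimages of $-e_i$, whereas the paper (Lemma~\ref{T_r}) counts roots of $\beta^{q^{n_i}+1}=-e_i$ directly; these are the same finite-field fact in two guises, so the proof is essentially identical to the paper's.
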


\begin{proof}
The cardinality $|{\cal D}|$ is equal to the number of 
the choices of $b\in FG$ such that $b\ol b=-1$. Let 
\begin{equation}\label{cal T}
\begin{array}{l}
{\cal T}_i=\{\beta\,|\,\beta\in FGe_i,\;\beta\ol\beta=-e_i\},
   \quad i=0,1,\cdots,r; \\[4pt]
{\cal T}_{r\+j}=\{\beta\,|\,\beta\in FG\wh e_{r\+j},\;\beta\ol\beta=-\wh e_{r\+j}\},
   \quad j=1,\cdots,s.
\end{array} 
\end{equation}
By Corollary \ref{self-dual}, 
$$\textstyle
 \big|{\cal D}\big|=|{\cal T}_0|\cdot\prod_{i=1}^{r}|{\cal T}_i|
 \cdot\prod_{j=1}^{s}|{\cal T}_{r+j}|.
$$
Thus the proof of the theorem will be completed by the following
Lemma \ref{T_0}, Lemma \ref{T_r} and Lemma \ref{T_s}.
\end{proof}

\begin{lemma}\label{T_0}
Let the notation be as in Eq.\eqref{cal T}. Then 
$$\big|{\cal T}_0\big|=
\begin{cases}
 1, & \mbox{$q$ is even};\\[4pt]
 2, & q\equiv 1~({\rm mod}~4); \\[4pt]
 0, & q\equiv -1~({\rm mod}~4).
\end{cases}$$
\end{lemma}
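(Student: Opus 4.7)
The plan is to exploit the fact, recorded in Lemma \ref{FG^flat}(1), that $A_0 = FGe_0 = Fe_0$ is one-dimensional and fixed by the bar map, so the equation $\beta\bar\beta = -e_0$ reduces to a scalar equation in $F$.

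First I would write any $\beta \in FGe_0$ as $\beta = \alpha e_0$ with $\alpha \in F$. Since $\bar e_0 = e_0$ (as noted before Eq.\eqref{E}) and $e_0^2 = e_0$, a direct computation gives $\beta\bar\beta = \alpha^2 e_0$. Hence $\beta \in \mathcal T_0$ if and only if $\alpha^2 = -1$ in $F$, and $|\mathcal T_0|$ equals the number of square roots of $-1$ in $F$.

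Next I would split into three cases according to $q$. If $q$ is even, then $-1 = 1$ in $F$, and in characteristic $2$ the Frobenius $\alpha \mapsto \alpha^2$ is a bijection, so $\alpha^2 = 1$ has the unique solution $\alpha = 1$; thus $|\mathcal T_0| = 1$. If $q$ is odd, then $-1$ is a square in the cyclic group $F^\times$ of order $q-1$ exactly when the element of order $2$ lies in the subgroup of squares, which happens iff $4 \mid q-1$, i.e., $q \equiv 1 \pmod 4$; in that case there are precisely two square roots, giving $|\mathcal T_0| = 2$, while for $q \equiv -1 \pmod 4$ there are none, giving $|\mathcal T_0| = 0$.

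There is no real obstacle here: the content of the lemma is a one-line reduction to ``count the square roots of $-1$ in $F$'', plus the standard classification of when $-1$ is a square in a finite field. The only point requiring a small amount of care is the characteristic-$2$ case, where one must observe that $-1 = 1$ and that the Frobenius is injective, so that a seemingly ``degenerate'' equation still has a single solution rather than none or two.
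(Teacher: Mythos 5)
Your proof is correct and follows essentially the same path as the paper's: reduce to counting solutions of $X^2=-1$ in $F$ using $FGe_0=Fe_0$ and $\bar\beta=\beta$, then split by parity of $q$. The only cosmetic difference is that in characteristic $2$ you invoke bijectivity of the Frobenius, while the paper notes that $q-1$ is odd so the squaring map on $F^\times$ is a bijection; both observations settle that case equally well.
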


\begin{proof}
Since $FGe_0=Fe_0$ and $\ol\beta=\beta$ for all $\beta\in Fe_0$,
$|{\cal T}_0|$ is equal to the number of the solutions in $F$
of the equation $X^2=-1$.
Note that the unit group $F^\times$ is a cyclic group of order $q-1$.

If $q$ is even, then $-1=1$, $q-1$ is odd, 
hence $|{\cal T}_0|=1$.
 
Assume that $q$ is odd.  
Then $-1$ is the unique element of order $2$ in $F^\times$. 
Hence, $X^2=-1$ has solutions if and only if 
$F^\times$ has an element of order~$4$, i.e., $4\,|\,q-1$. 
Thus, if $4\,|\,q-1$ then $|{\cal T}_0|=2$; 
otherwise $|{\cal T}_0|=0$. 
\end{proof}

\begin{lemma}\label{T_r}
Keep the notation in Eq.\eqref{cal T}. If $1\le i\le r$ then
$\big|{\cal T}_i\big|=q^{n_i}+1$.
\end{lemma}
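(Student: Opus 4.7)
The plan is to identify $A_i = FGe_i$ as a quadratic extension of finite fields, recognize the restricted ``bar'' map as its nontrivial Galois automorphism, and then read off $|{\cal T}_i|$ as the size of a fiber of the norm map.

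First, by the setup preceding the lemma, $A_i$ is a finite field. By Lemma \ref{FG^flat}(4), $\dim_F A_i = 2n_i$ and $\dim_F A_i^\flat = n_i$, so the ``bar'' map restricted to $A_i$ is a non-identity $F$-algebra automorphism of order $2$ whose fixed subring $A_i^\flat$ is therefore a subfield of $A_i$. Consequently $|A_i^\flat| = q^{n_i}$ and $|A_i| = q^{2n_i}$. Since the Galois group of the quadratic extension $A_i/A_i^\flat$ of finite fields is cyclic of order $2$ generated by the $q^{n_i}$-power Frobenius, we must have $\bar\beta = \beta^{q^{n_i}}$ for every $\beta \in A_i$.

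Consequently, the defining condition $\beta\bar\beta = -e_i$ becomes the norm equation
\[
 N_{A_i/A_i^\flat}(\beta) \;=\; \beta^{q^{n_i}+1} \;=\; -e_i,
\]
in which $\beta = 0$ is automatically excluded since $-e_i \ne 0$ in the field $A_i$. The target $-e_i$ lies in $(A_i^\flat)^\times$, as it is nonzero and satisfies $\overline{-e_i} = -\overline{e_i} = -e_i$. Since the norm map $N: A_i^\times \to (A_i^\flat)^\times$ between finite fields is surjective with every fiber of size $(q^{2n_i}-1)/(q^{n_i}-1) = q^{n_i}+1$, the claimed count $|{\cal T}_i| = q^{n_i}+1$ follows.

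The only delicate point is justifying that the abstract ``bar'' involution on $A_i$ coincides with the arithmetic Frobenius $\beta \mapsto \beta^{q^{n_i}}$; this is forced by the uniqueness of the nontrivial $A_i^\flat$-automorphism of the quadratic finite-field extension $A_i/A_i^\flat$, after which the conclusion is just the standard count of preimages under a finite-field norm.
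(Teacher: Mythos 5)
Your argument is correct and follows the same main line as the paper: identify $A_i=FGe_i$ as a finite field, observe via Lemma~\ref{FG^flat}(4) that the restricted ``bar'' map is a non-identity automorphism of order $2$ fixing the index-$2$ subfield $A_i^\flat$, conclude $\ol\beta=\beta^{q^{n_i}}$, and reduce the defining equation to $\beta^{q^{n_i}+1}=-e_i$. Where you diverge is in the final count. The paper argues directly inside the cyclic group $A_i^\times$ by a parity case split: for $q$ even it notes $-e_i=e_i$ and counts roots of $X^{q^{n_i}+1}=1$; for $q$ odd it counts roots of $X^{2(q^{n_i}+1)}=1$ and subtracts. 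You instead invoke the surjectivity of the finite-field norm $N_{A_i/A_i^\flat}\colon A_i^\times\to(A_i^\flat)^\times$, verify $-e_i\in(A_i^\flat)^\times$, and read off the fiber size $(q^{2n_i}-1)/(q^{n_i}-1)=q^{n_i}+1$. Both are correct; your norm-fiber argument is slightly cleaner in that it handles the even and odd characteristic cases uniformly, at the cost of appealing to the (standard) surjectivity of the norm, whereas the paper's version is self-contained from the cyclicity of $A_i^\times$ alone.
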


\begin{proof} 
Since $FGe_i$ is a finite field with $|FG e_i|=q^{2n_i}$, 
and $e_i$ is the identity element of the field,
then the unit group $(F\!Ge_i)^\times$ is cyclic, and
 $|(F\!Ge_i)^\times|=q^{2n_i}-1=(q^{n_i}+1)(q^{n_i}-1)$.
  Note that $\ol e_i=e_i$. 
Hence by Lemma \ref{FG^flat}(4),
the ``bar'' map is a Galois automorphism of $FGe_i$ of order $2$.
So, for any $\beta\in FGe_i$,
$\ol\beta=\beta^{q^{n_i}}$.
Then the equation $\beta\ol \beta=-e_i$ turns into: 
\begin{equation}\label{g e bar}
 \beta^{q^{n_i}+1}=-e_i.
\end{equation}

Case 1: $q$ is even. The equation 
$X^{q^{n_i}+1}=1$ has exactly $q^{n_i}+1$ roots in the finite field $FGe_i$.
The lemma holds.

Case 2: $q$ is odd. Then $2\,|\,(q^{n_i}-1)$, 
the equation $X^{2(q^{n_i}+1)}=1$ has exactly $2(q^{n_i}+1)$ roots.
Hence,  $X^{q^{n_i}+1}=-1$ has exactly $q^{n_i}+1$ roots in the finite field $FGe_i$.
We are done.
\end{proof}

\begin{lemma}\label{T_s}
Keep the notation in Eq.\eqref{cal T}. If $1\le j\le s$ then
$\big|{\cal T}_{r\+j}\big|=q^{n_{r\+j}}-1$.
\end{lemma}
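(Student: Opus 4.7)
The plan is to use the decomposition $FG\wh e_{r\+j}=A_{r\+j}\oplus\ol A_{r\+j}$ and the fact that the ``bar'' map exchanges these two components (because $\ol e_{r\+j}\neq e_{r\+j}$ and bar is an $F$-algebra automorphism of $FG$). First I would parameterize any $\beta\in FG\wh e_{r\+j}$ in the form $\beta=\alpha+\ol{\alpha'}$ with $\alpha,\alpha'\in A_{r\+j}$; this is legitimate because the ``bar'' map restricts to an $F$-algebra isomorphism $A_{r\+j}\to\ol A_{r\+j}$, so every element of $\ol A_{r\+j}$ is uniquely of the form $\ol{\alpha'}$ for some $\alpha'\in A_{r\+j}$. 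Applying bar once more gives $\ol\beta=\ol\alpha+\alpha'$.

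Next I would expand $\beta\ol\beta=(\alpha+\ol{\alpha'})(\ol\alpha+\alpha')$ and simplify using the orthogonality $e_{r\+j}\ol e_{r\+j}=0$. This annihilates the cross terms $\alpha\ol\alpha\in A_{r\+j}\ol A_{r\+j}=0$ and $\ol{\alpha'}\alpha'\in\ol A_{r\+j}A_{r\+j}=0$, leaving
\begin{equation*}
\beta\ol\beta=\alpha\alpha'+\ol{\alpha'}\,\ol\alpha=\alpha\alpha'+\ol{\alpha\alpha'}.
\end{equation*}
Since $\alpha\alpha'\in A_{r\+j}$ and $\ol{\alpha\alpha'}\in\ol A_{r\+j}$ lie in complementary direct summands, the equation $\beta\ol\beta=-\wh e_{r\+j}=-e_{r\+j}+\bigl(-\ol e_{r\+j}\bigr)$ splits into the single condition $\alpha\alpha'=-e_{r\+j}$ (the second summand then follows automatically by applying bar).

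Finally, since $A_{r\+j}=FGe_{r\+j}$ is a field with identity $e_{r\+j}$, the equation $\alpha\alpha'=-e_{r\+j}$ forces $\alpha\in A_{r\+j}^\times$, and then $\alpha'=-\alpha^{-1}$ is uniquely determined by $\alpha$. By the discussion following Eq.\eqref{hat E^dag}, $\dim_F A_{r\+j}=n_{r\+j}$, hence $|A_{r\+j}^\times|=q^{n_{r\+j}}-1$, giving $|{\cal T}_{r\+j}|=q^{n_{r\+j}}-1$ as desired. There is no real obstacle here; the only subtle point is to verify that the cross terms in $\beta\ol\beta$ indeed vanish and that the required identity splits cleanly along the direct summands $A_{r\+j}$ and $\ol A_{r\+j}$, which is immediate from $e_{r\+j}\ol e_{r\+j}=0$.
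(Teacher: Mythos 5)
Your proof is correct and follows essentially the same route as the paper: decompose $\beta$ over the two summands $A_{r\+j}\oplus\ol A_{r\+j}$, expand $\beta\ol\beta$, kill the cross terms via $e_{r\+j}\ol e_{r\+j}=0$, and observe that the surviving condition is a single equation in the field $A_{r\+j}$ whose solutions are parameterized freely by $A_{r\+j}^\times$. Your parameterization $\beta=\alpha+\ol{\alpha'}$ is just the change of variables $\beta'=\alpha$, $\beta''=\ol{\alpha'}$ from the paper's $\beta=\beta'+\beta''$, so the two arguments are the same up to notation.
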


\begin{proof}
 Write
$\beta=\beta' + \beta''$ with 
$\beta'\in FGe_{r\+j}$ and $\beta''\in FG\ol e_{r\+j}$. 
Then $\beta\ol\beta=-\wh e_{r\+j}$ for $\beta\in{\cal T}_{r\+j}$ is rewritten as
$$
-e_{r\+j}-\ol e_{r\+j}=(\beta'+\beta'')\ol{(\beta'+\beta'')}
=(\beta'+\beta'')(\ol{\beta'}+\ol{\beta''})
=\beta'\ol{\beta''} + \ol{\beta'}\beta'',
$$
which is equivalent to 
\begin{equation}\label{g e not bar}
\beta'\ol{\beta''}=-e_{r\+j} \quad
(\mbox{equivalently,~ } \ol{\beta'}\beta''=-\ol e_{r\+j}).
\end{equation}
Take any $\beta'\in(FGe_{r\+j})^\times$ 
there is a unique element $\gamma\in(FGe_{r\+j})^\times$ such that
$\beta'\gamma =-e_{r\+j}$; 
and, there is a unique $\beta''\in(FG\ol e_{r\+j})^\times$ 
such that $\ol{\beta''}= \gamma$. 
Note that $\big|(FGe_{r\+j})^\times\big|=q^{n_{r\+j}}-1$.
In conclusion,  
there are exactly $q^{n_{r\+j}}-1$ elements 
$\beta=\beta'+\beta''\in FG(e_{r\+j} +\ol e_{r\+j})$ 
such that $\beta\ol \beta =-e_{r\+j}-\ol e_{r\+j}=-\wh e_{r\+j}$.
\end{proof}

\begin{corollary}\label{exist iff}
The following four are equivalent to each other:

{\rm(1)} The self-dual 
$2$-quasi abelian codes exist.

{\rm(2)}  The self-dual $2$-quasi abelian codes of Type I exist.

{\rm(3)} $-1$ is a square element of $F$.

{\rm(4)} Either $q$ is even, or $4\,|\,q-1$.
\end{corollary}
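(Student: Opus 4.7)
The plan is to organize the four-fold equivalence as the elementary pair $(3)\iff(4)$ together with the cycle $(3)\Rightarrow(2)\Rightarrow(1)\Rightarrow(3)$. The equivalence $(3)\iff(4)$ is a standard calculation in the cyclic group $F^\times$: in even characteristic $-1=1$ is automatically a square, while in odd characteristic $F^\times$ is cyclic of order $q-1$ with $-1$ its unique element of order $2$, and this order-$2$ element belongs to the squares (the unique index-$2$ subgroup) precisely when that subgroup has even order, i.e.\ $4\mid q-1$. The implication $(2)\Rightarrow(1)$ is immediate from Definition \ref{d type I}, and $(3)\Rightarrow(2)$ is a direct reading of Theorem \ref{counting D}: when $-1$ is a square (equivalently, $q$ is even or $q\equiv 1\pmod{4}$), every factor in the product formula is strictly positive because $n_i,n_{r\+j}\ge 1$, so $|{\cal D}|\ge 1$ and a Type I self-dual code exists.

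The substantive step is $(1)\Rightarrow(3)$, where a self-dual $C\le(FG)^2$ need not be of Type I. The plan is to decompose $C=\bigoplus_{e\in\wh E}eC$ via Eq.\eqref{C=...+...}: by Corollary \ref{C orth} each summand $eC$ is self-orthogonal in $(FGe)^2$, and since $\dim C=n=\sum_{e\in\wh E}\dim FGe$ while each $\dim eC\le\dim FGe$, a dimension count forces equality for every $e$, i.e.\ each $eC$ is actually self-dual in $(FGe)^2$. Specializing to $e=e_0$, the component $e_0C$ is a one-dimensional self-orthogonal subspace of the two-dimensional space $(Fe_0)^2$. Writing a generator as $(\alpha e_0,\beta e_0)$ and using $\ol{e_0}=e_0$, $e_0^2=e_0$, and $\sigma(e_0)=1/n$, the inner-product formula Eq.\eqref{inner} reduces self-orthogonality to $\alpha^2+\beta^2=0$ with $(\alpha,\beta)\ne(0,0)$, which immediately exhibits $-1$ as a square in $F$.

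The main obstacle I anticipate is the upgrade from component-wise self-orthogonality to component-wise self-duality in the previous paragraph; once that dimension bookkeeping is made clean via Lemma \ref{(FG)^2=...oplus...}(1), the extraction of a square root of $-1$ from the trivial component $Fe_0$ is immediate, and the remaining implications are either elementary or are direct consequences of the counting Theorem \ref{counting D} already established.
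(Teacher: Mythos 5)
Your proof is correct and takes essentially the same route as the paper: both reduce the hard implication to the $e_0$-component of the decomposition from Corollary \ref{C orth} and extract a square root of $-1$ from the self-duality of that component, with the other implications following from Theorem \ref{counting D} and elementary field theory. The only cosmetic difference is that the paper reads off $e_0C=(e_0C)^{\bot_0}$ directly from the equality of the two decompositions in Corollary \ref{C orth} (no dimension count needed) and then routes through $|{\cal T}_0|>0$ and Lemma \ref{T_0}, whereas you obtain component self-duality via a dimension bookkeeping step and exhibit the square root of $-1$ explicitly.
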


\begin{proof}
(3)$\Leftrightarrow$(4). 
Both (3) and (4) are equivalent to that the equation $X^2=-1$
has solutions in $F$, see the proof of Lemma \ref{T_0}.

(1)$\Rightarrow$(4). 
Let $C\le(FG)^2$ such that $C=C^\bot$. By Corollary \ref{C orth},
\begin{align*}
& e_0C\oplus e_1C\oplus\cdots\oplus e_rC
 \oplus \wh e_{r\+1}C \oplus\cdots\oplus \wh e_{r\+s}C
=C=C^\bot\\
= & (e_0C)^{\bot_0}\oplus(e_1C)^{\bot_1}\oplus\cdots\oplus(e_rC)^{\bot_r}
 \oplus(\wh e_{r\+1}C)^{\bot_{r\+1}}\oplus\cdots\oplus(\wh e_{r\+s}C)^{\bot_{r\+s}}.
\end{align*}
In particular, $e_0C=(e_0C)^{\bot_0}$, 
which implies that there is a 
 $\beta\in Fe_0$ such that $\beta\ol\beta=-e_0$; so $|{\cal T}_0|>0$.
By Lemma \ref{T_0}, (4) holds.

(4)$\Rightarrow$(2). By Theorem \ref{counting D}, $|{\cal D}|>0$.

(2)$\Rightarrow$(1). Trivial. 
\end{proof}\vskip-25mm

\subsection{Estimating $\big|{\cal D}^{\le\delta}\big|$}

Before going on, we list two lemmas which will be cited later. The following
\begin{equation}\label{entropy}
 h_q(\delta)=\delta\log_q(q-1)-\delta\log_q\delta-(1-\delta)\log_q(1-\delta),
 ~~~~ \delta\in[0,1-q^{-1}], 
\end{equation}
is the so-called {\em $q$-entropy function}, which is increasing and
concave on the interval $[0,1-q^{-1}]$ with $h_q(0)=0$ and $h_q(1-q^{-1})=1$.  
For any subset $S\subseteq (FG)^2$, we denote 
$\textstyle S^{\le\delta}
 =\big\{(a,b)\;\big|\;(a,b)\in S,~\frac{{\rm w}(a,b)}{2n}\le\delta\big\}.
$

\begin{lemma}\label{balanced}
If $A\le FG$, then $A\times A\le(FG)^2$ and
$\big|(A\times A)^{\le\delta}\big|
 \le q^{h_q(\delta)\cdot \dim (A\times A)}.$
\end{lemma}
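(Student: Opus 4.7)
The plan is to establish the counting bound by exploiting the $F$-linear structure of the subspace $A\times A\subseteq F^{2n}$, so that the exponent in the entropy estimate involves the code dimension $\dim(A\times A)=2\dim A$ rather than the ambient $2n$.

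The submodule claim $A\times A\le(FG)^2$ is immediate: if $A$ is an $FG$-submodule of $FG$ then the diagonal action $r\cdot(a,b):=(ra,rb)$ preserves $A\times A$.

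For the counting, a naive embedding into the ambient space yields only the crude estimate
\[
|(A\times A)^{\le\delta}| \le \sum_{i=0}^{\lfloor 2n\delta\rfloor}\binom{2n}{i}(q-1)^i \le q^{h_q(\delta)\cdot 2n}
\]
(valid for $\delta\in[0,1-q^{-1}]$), which uses only the ambient dimension $2n=\dim F^{2n}$ and is strictly weaker than what is claimed whenever $A\ne FG$. To sharpen the exponent to $\dim(A\times A)$, I would pick an information set $I\subseteq\{1,\ldots,2n\}$ of size $K:=\dim(A\times A)$ so that the restriction $\pi_I\colon A\times A\to F^I$ is an $F$-linear isomorphism, and then bound $|\pi_I((A\times A)^{\le\delta})|$ using the analogous $q$-entropy estimate in $F^K$.

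The principal obstacle is that $\pi_I$ is not in general weight-preserving: a codeword $v\in A\times A$ of ambient weight $\le 2n\delta$ may project to $\pi_I(v)\in F^K$ of relative weight up to $\min(1,2n\delta/K)$, which can exceed $\delta$ and even leave the admissible range $[0,1-q^{-1}]$ of $h_q$. To overcome this, I would exploit the transitive $G$-action that permutes the $n$ coordinates of each copy of $FG$: this lets one either average over the family of information sets obtained by translating $I$ by elements of $G$, or reduce via the primitive-idempotent decomposition of Lemma \ref{(FG)^2=...oplus...} to the components $(FGe)\times(FGe)$ (each of which is a product of two copies of a field extension of $F$) and estimate each factor separately before recombining through the direct-sum decomposition of $A\times A$. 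This transitivity --- the "balanced" structure signaled by the lemma's label --- is precisely what should convert the naive $2n$-exponent estimate into the sharper $\dim(A\times A)$-exponent one.
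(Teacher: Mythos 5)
Your diagnosis of the difficulty is exactly right, but your proposal stops where the actual work begins. The paper does not prove this lemma from scratch: its proof is a one-line citation of \cite[Corollaries 3.4 and 3.5]{FL15}, which establish precisely the bound $|C^{\le\delta}|\le q^{h_q(\delta)\dim C}$ for \emph{balanced} codes, i.e.\ codes admitting a blockwise coordinate-transitive permutation automorphism group --- and $A\times A$ is such a code because $G$ acts by translation transitively on the $n$ coordinates of each copy of $FG$ and preserves the ideal $A$. So the structural feature you identified is indeed the whole point. What is missing is the proof of that balanced-code estimate itself: the averaging-over-translated-information-sets argument you gesture at is the genuine content of the Bazzi--Mitter lemma and its $q$-ary generalization in \cite{FL15}, and it is not a routine verification --- one must show that the words of weight at most $\delta\cdot 2n$ can be mapped, essentially injectively, into words of relative weight at most $\delta$ in $F^{\dim(A\times A)}$, which requires a careful combinatorial argument over the orbit of information sets. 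As written, your plan asserts that transitivity ``should convert'' the crude exponent $2n$ into $\dim(A\times A)$ without exhibiting the conversion.

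Your fallback route is moreover flawed: reducing via the decomposition of Lemma \ref{(FG)^2=...oplus...} to the components $(FGe)^2$ and ``estimating each factor separately before recombining'' cannot work at the level of Hamming weights, because the simple ideals $FGe$ are \emph{not} supported on disjoint subsets of the $n$ coordinates --- a nonzero element of $FGe$ typically has full support in $F^n$ --- so the weight of $a=\sum_e ea$ is not the sum of weights of coordinate-disjoint pieces, and the low-weight sets of the components do not recombine into $(A\times A)^{\le\delta}$. If you want a self-contained proof, you must either reproduce the balanced-code argument of \cite{FL15} or cite it, as the paper does.
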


\begin{proof}
It follows from \cite[Corollary 3.4 and Corollary 3.5]{FL15}.
\end{proof}

\begin{lemma}\label{q^k-1,q^k+1}
Let $q\ge 2$, $k_1,\cdots, k_m$ be integers.
If $k_i\ge \log_q\!m$ for $i=1,\cdots,m$, then 

{\rm (1)}~ $
(q^{k_1}-1)\cdots(q^{k_m}-1)\ge q^{k_1+\cdots+k_m-2};$

{\rm(2)}~ $(q^{k_1}+1)\cdots(q^{k_m}+1)\le q^{k_1+\cdots+k_m+2}.$
\end{lemma}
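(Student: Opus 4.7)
The plan is to reduce both inequalities to elementary estimates of the classical sequences $(1\pm 1/m)^m$. First, factoring out $q^{k_i}$ from each factor gives
\[
\prod_{i=1}^m(q^{k_i}\pm 1)=q^{k_1+\cdots+k_m}\prod_{i=1}^m(1\pm q^{-k_i}),
\]
so (1) and (2) become the claims $\prod_{i=1}^m(1-q^{-k_i})\ge q^{-2}$ and $\prod_{i=1}^m(1+q^{-k_i})\le q^2$ respectively. The hypothesis $k_i\ge\log_q m$ is precisely $q^{-k_i}\le 1/m$, so the first product is bounded below by $(1-1/m)^m$ and the second above by $(1+1/m)^m$. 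Since $q\ge 2$ gives $q^2\ge 4$ and $q^{-2}\le 1/4$, it suffices to prove $(1-1/m)^m\ge 1/4$ and $(1+1/m)^m\le 4$ for $m\ge 2$.

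Part (2) then follows at once from the standard bound $(1+1/m)^m<e<4$.

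For part (1), I would invoke the dual classical fact that $(1+1/n)^{n+1}$ is decreasing in $n$ (converging to $e$ from above). Setting $n=m-1\ge 1$ yields $(1+1/(m-1))^m\le (1+1/1)^2=4$ for $m\ge 2$, and hence
\[
(1-1/m)^m=\Bigl(\frac{m-1}{m}\Bigr)^m=\frac{1}{(1+1/(m-1))^m}\ge\frac{1}{4},
\]
as required. The only substantive step is this $1/4$ bound; the factoring-out reduction and part (2) are routine. I would also note that for integer $k_i$ the hypothesis $k_i\ge\log_q m$ forces $k_i\ge 1$ whenever $m\ge 2$, so the degenerate possibility $q^{k_i}-1=0$ does not arise.
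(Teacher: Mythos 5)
Your proof is correct for $m\ge 2$, which is the only regime the paper uses. Your reduction in part (2) is essentially the same as the paper's: you bound each factor $1+q^{-k_i}$ by $1+1/m$ and invoke $(1+1/m)^m<e<4\le q^2$, whereas the paper (ordering $k_1\le\cdots\le k_m$) bounds each factor by $1+q^{-k_1}$ and then uses $m\le q^{k_1}$ to get $(1+q^{-k_1})^m\le(1+1/q^{k_1})^{q^{k_1}}<3<q^2$; both arrive at a single term of the form $(1+1/N)^N$ and then appeal to the classical bound, so the approaches differ only in which $N$ you choose. For part (1) the paper simply cites \cite[Lemma~2.9]{FL20} without reproducing an argument, so your proof via the dual fact that $(1+1/n)^{n+1}$ is decreasing (hence $\ge$ its limit $e$ from above, giving $(1-1/m)^m\ge 1/4$) is a genuine self-contained addition; it is correct and in the same elementary spirit.

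One point worth flagging explicitly, which you only partly address: when $m=1$ the hypothesis $k_1\ge\log_q 1=0$ permits $k_1=0$, and then part (1) fails outright ($q^0-1=0\not\ge q^{-2}$). Your bound $(1-1/m)^m\ge 1/4$ is derived from $n=m-1\ge 1$, i.e.\ $m\ge 2$, and you observe that $m\ge 2$ forces the integers $k_i\ge 1$; but you do not say what happens for $m=1$. In fact no proof can rescue the $m=1,k_1=0$ case: it is a latent gap in the lemma's statement (inherited from FL20), harmless here because every application in the paper has $k_i\ge 1$. Making that exclusion explicit, either by restricting to $m\ge 2$ or by separately handling $m=1$ with $k_1\ge 1$, would tighten the argument.
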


\begin{proof}
(1). Please see {\cite[Lemma 2.9]{FL20}}.

(2). Assume that $k_1\le\cdots\le k_m$. 
Since $\log_q m\le k_1$, $\frac{m}{q^{k_1}}\le 1$.
\begin{align*}  
\textstyle\frac{(q^{k_1}+1)\cdots(q^{k_m}+1)}{(q^{k_1})\cdots(q^{k_m})}
\textstyle
  =\left(1+\frac{1}{q^{k_1}}\right)\cdots\left(1+\frac{1}{q^{k_m}}\right)
\textstyle
  \le\left(1+\frac{1}{q^{k_1}}\right)^m
   \le \left(1+\frac{1}{q^{k_1}}\right)^{q^{k_1}}.
\end{align*}
Since the sequence $(1+\frac{1}{k})^k$ for $k=1,2,\cdots$ is increasing and
bounded above by~$3$ and $3<q^2$, we obtain the wanted inequality.  
\end{proof}

\medskip
Recall that
$\wh{E}=\{e_0,e_1,\cdots,e_r, \wh e_{r\+1},\cdots, \wh e_{r\+s}\}$, 
$\wh{E}^\dag=\wh E-\{e_0\}$, 
 and for $e\in\wh E^\dag$, $n_e=\frac{1}{2}\dim FG e$, 
 see Eq.\eqref{d hat E}, Eq.\eqref{hat E^dag} and Eq.\eqref{dim n-1}.

For any $a\in FG$, we denote: 
\begin{equation}\label{ell_a}
\begin{array}{l}
 {\wh E}_a=\big\{e\,|\,e\in{\wh E},\, ea\ne 0\big\},~~~~ 
 \wh E^\dag_a=\big\{e\,|\,e\in\wh E^\dag,\, ea\ne 0\big\};
\\[6pt]
 L_{a}:=\bigoplus_{e\in{\wh E}^\dag_a}FGe\le FG,~~~~ 
 \ell_a:=\sum_{e\in{\wh E}^\dag_a}n_e ~~
 (\mbox{hence $\dim L_{a}= 2\ell_a$}).
\end{array}
\end{equation}

\begin{lemma}\label{D_a,d}
For $(a,d)\in (FG)^2$,
let $${\cal D}_{(a,d)}=\{C_{1,b}\,|\,C_{1,b}\in{\cal D}, (a,d)\in C_{1,b}\}.$$
If ${\cal D}_{(a,d)}\ne\emptyset$, then
${\wh E}_a={\wh E}_d$ and 
$$\textstyle
\big|{\cal D}_{(a,d)}\big|
\le 2\prod_{e\in\wh E^\dag-\wh E^\dag_a}(q^{n_e}+1).$$
\end{lemma}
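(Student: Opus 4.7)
The plan is to unpack the membership $(a,d)\in C_{1,b}$ as $d=ab$, so that $\mathcal{D}_{(a,d)}$ is in bijection with $\mathcal{B}=\{b\in FG : b\ol b=-1,\ ab=d\}$, and then to count $\mathcal{B}$ componentwise using the decomposition $FG=\bigoplus_{e\in\wh E}FGe$ of Eq.\eqref{FG=}. Both defining conditions on $b$ are preserved by this decomposition, so the count factors over $e\in\wh E$.

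First I would prove $\wh E_a=\wh E_d$. The condition $b\ol b=-1$ gives $(be)\ol{be}=b\ol b\,e=-e\ne 0$ for every $e\in\wh E$, hence $be\ne 0$ there. Sharpening to a primitive idempotent $\epsilon$ refining $e$, the same identity shows $b\epsilon\ne 0$ in the field $FG\epsilon$, so $b\epsilon$ is a unit. From $d\epsilon=(a\epsilon)(b\epsilon)$ we conclude $d\epsilon=0\Leftrightarrow a\epsilon=0$ for every primitive $\epsilon$; summing over the primitives refining each $e\in\wh E$ yields $\wh E_a=\wh E_d$.

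Next I would count the choices of $be$ for each $e\in\wh E$ independently, following the pattern of Theorem \ref{counting D}. For $e_0$, Lemma \ref{T_0} gives $|\mathcal{T}_0|\le 2$ choices, dropped to at most $1$ when $ae_0\ne 0$. For $e=e_i$ with $1\le i\le r$, Lemma \ref{T_r} gives $q^{n_i}+1$ choices, reduced to the single value $(ae_i)^{-1}de_i$ when $ae_i\ne 0$ since $FGe_i$ is a field. For $e=\wh e_{r+j}$, Lemma \ref{T_s} gives $q^{n_{r+j}}-1$ choices; when $a\wh e_{r+j}\ne 0$ at least one of $ae_{r+j}, a\ol e_{r+j}$ is a unit in its field and determines the corresponding component of $b$ by $ab=d$, and Eq.\eqref{g e not bar} then forces the paired component, giving at most one choice. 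Multiplying the local bounds and using $q^{n_{r+j}}-1\le q^{n_{r+j}}+1$ absorbs the $\wh e_{r+j}$-factors into $\prod_{e\in\wh E^\dag-\wh E^\dag_a}(q^{n_e}+1)$, while the leading $2$ accounts for the worst case $|\mathcal{T}_0|=2$.

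The most delicate point I expect is the uniqueness in the $\wh e_{r+j}$ case when exactly one of $ae_{r+j}, a\ol e_{r+j}$ vanishes: one must verify that the paired component produced via Eq.\eqref{g e not bar} is consistent with $ab=d$ on the vanishing side, which is automatic because both sides there are zero by $\wh E_a=\wh E_d$. Apart from this bookkeeping, the argument reduces cleanly to the component-by-component count already carried out for $|\mathcal{D}|$.
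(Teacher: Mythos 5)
Your proposal is correct and takes essentially the same route as the paper: reduce to counting $\{b : b\ol b=-1,\ ab=d\}$, show componentwise that each $\beta_e=be$ is a unit in order to deduce $\wh E_a=\wh E_d$, and then count the choices of $\beta_e$ case by case (at most $2$ at $e_0$; exactly $1$ for $e\in\wh E^\dag_a$; $q^{n_e}\pm 1$ for $e\in\wh E^\dag-\wh E^\dag_a$). One small slip worth flagging: for a primitive $\epsilon\in\{e_{r+j},\ol e_{r+j}\}$ the identity $(b\epsilon)\ol{b\epsilon}=b\ol b\,\epsilon$ fails because $\epsilon\ol\epsilon=0$, but $b\ol b\,\epsilon=\ol b\cdot(b\epsilon)=-\epsilon\ne 0$ still forces $b\epsilon\ne 0$, so the needed conclusion survives unchanged.
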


\begin{proof}
Assume that $C_{1,b}\in{\cal D}_{(a,d)}$,
then $C_{(1,b)}\in \cal D$,
i.e., $b\ol{b}=-1$ and
$(a,d)=u(1,b)$ for some $u\in FG$. Then
$a=u$ and $d=ub=ab$.
Assume  $b=\sum_{e\in\wh E}\beta_e$ with $\beta_e\in FGe$.
Then,  for $e\in \wh E$, 
\begin{equation}\label{beta_e}
 \beta_e\ol\beta_e=-e, \qquad ed=ea\beta_e.
\end{equation} 
Hence, $\beta_e\in(FGe)^\times$.
So $ea\ne 0$ iff $ed=ea\beta_e\ne 0$.
That is, $\wh E_a=\wh E_d$.

Note that there already exist such $\beta_e$'s satisfying Eq.\eqref{beta_e}
because ${\cal D}_{(a,d)}\ne\emptyset$. 
What we are computing is how many choices of them.    

{\it Case 1}: $e=e_0$. We have seen from Lemma \ref{T_0} that
there are at most two choices of $\beta_{e_0}$.

{\it Case 2}:  $e\in \wh E^\dag_a$. There are two subcases. 

{\it Subcase 2.1}: $e=e_i$ for some $1\le i\le r$. 
Then $FGe_i$ is a field; and $ae_i\ne 0\ne de_i$.
So, $de_i=ae_i\beta_{e_i}$ implies that
$\beta_{e_i}=(ae_i)^{-1}(de_i)$ is uniquely determined.

{\it Subcase 2.2}: $e=\wh e_{r\+j}=e_{r\+j}+\ol e_{r\+j}$ for some $1\le j\le s$. 
Then $\beta_{\wh e_{r\+j}}=\beta'+\beta''$ with
$\beta'\in (FGe_{r\+j})^\times$ and $\beta''\in (FG\ol e_{r\+j})^\times$
such that $\beta'\ol{\beta''}=-e_{r\+j}$,  see~Eq.\eqref{g e not bar}.
Since $\wh e_{r\+j}a\ne 0$, $e_{r\+j}a\ne 0$ or $\ol e_{r\+j}a\ne 0$ (or both).
Without loss of generality, assume that $e_{r\+j} a\ne 0$.
In $FGe_{r\+j}$, by Eq.\eqref{beta_e}, $e_{r\+j} d=e_{r\+j} a\beta'$, 
hence $\beta'=(e_{r\+j} a)^{-1}(e_{r\+j} d)$ is uniquely determined.
Consequently, ${\beta''}=\ol{-\beta'^{-1}}$ is also uniquely determined.

Combining the two subcases, we conclude that,
if $e\in{\wh E}^\dag_a$, 
there is a unique $\beta_e\in FGe$ satisfying Eq.\eqref{beta_e}.

{\it Case 3}:  $e\in{\wh E}^\dag-{\wh E}^\dag_a$. 
Then $ae=0=de$, hence the second equality of Eq.\eqref{beta_e} always holds.
Thus, there are $q^{n_e}+1$ choices for $\beta_e$ (see Lemma \ref{T_r}), 
or $q^{n_e}-1$ choices for $\beta_e$ (see Lemma \ref{T_s}).

Summarizing the above three cases, we get the inequalities of the lemma.
\end{proof}

\begin{corollary}\label{|D|<}
Keep the notation as above. 
$$ |{\cal D}_{(a,d)}|\le q^{\frac{n-1}{2}-\ell_a+3}.$$
\end{corollary}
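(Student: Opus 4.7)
The plan is to start from the bound in Lemma \ref{D_a,d}, convert the product of $q^{n_e}+1$ into a single power of $q$ via Lemma \ref{q^k-1,q^k+1}(2), and then absorb the leading factor $2$ into the exponent using $2\le q$.

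First, Lemma \ref{D_a,d} supplies
$$|{\cal D}_{(a,d)}|\le 2\!\!\prod_{e\in\wh E^\dag-\wh E^\dag_a}\!(q^{n_e}+1).$$
The dimension identity Eq.\eqref{dim n-1} says $\sum_{e\in\wh E^\dag}n_e=\tfrac{n-1}{2}$, and by the very definition of $\ell_a$ in Eq.\eqref{ell_a} we have $\sum_{e\in\wh E^\dag_a}n_e=\ell_a$; subtracting yields
$$\sum_{e\in\wh E^\dag-\wh E^\dag_a}n_e=\tfrac{n-1}{2}-\ell_a.$$
So once I can convert the product into $q^{\,\text{sum of exponents}\,+\,\text{constant}}$, the shape of the claim is automatic.

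Next I would apply Lemma \ref{q^k-1,q^k+1}(2) to the exponents $\{n_e\}_{e\in\wh E^\dag-\wh E^\dag_a}$, getting
$$\prod_{e\in\wh E^\dag-\wh E^\dag_a}(q^{n_e}+1)\le q^{\,(n-1)/2-\ell_a+2}.$$
Multiplying through by $2$ and using $2\le q$ (any prime power is at least $2$) adds at most one more unit to the exponent, which gives exactly the claimed bound $q^{(n-1)/2-\ell_a+3}$.

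The only non-cosmetic step is verifying the hypothesis of Lemma \ref{q^k-1,q^k+1}(2), namely $n_e\ge\log_q m$ with $m=|\wh E^\dag-\wh E^\dag_a|$. Each $n_e$ arises as (half of) the $F$-dimension of a simple component attached to a non-trivial $q$-cyclotomic coset of $\Bbb Z_n$, so $n_e\ge\mu_q(n)/2$; meanwhile $m$ is bounded by the total number of non-trivial cosets, which is at most $(n-1)/\mu_q(n)$. The required inequality therefore reduces to a mild condition comparing $\mu_q(n)$ with $\log_q n$, which is guaranteed by the standing hypothesis $\log_q n_i/\mu_q(n_i)\to 0$ of Theorem \ref{main result} for all sufficiently large $n$. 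This is the one spot that deserves a careful sentence in the actual proof, but it is forced by the framework in which the corollary will be used.
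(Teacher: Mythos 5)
Your proposal is correct and follows the paper's own proof step for step: start from Lemma \ref{D_a,d}, apply Lemma \ref{q^k-1,q^k+1}(2) to get $\prod_{e\in\wh E^\dag-\wh E^\dag_a}(q^{n_e}+1)\le q^{2+\sum_{e\in\wh E^\dag-\wh E^\dag_a}n_e}$, substitute $\sum_{e\in\wh E^\dag-\wh E^\dag_a}n_e=\frac{n-1}{2}-\ell_a$ from Eq.\eqref{dim n-1}, and absorb the factor $2\le q$ into one extra power of $q$. The additional observation you make --- that the hypothesis $n_e\ge\log_q m$ of Lemma~\ref{q^k-1,q^k+1}(2) needs checking and is only guaranteed once $\log_q n/\mu_q(n)\le\tfrac12$ --- is a genuine point that the paper itself silently glosses over; it is immaterial in the asymptotic regime where the corollary is eventually applied (Theorems~\ref{D^le <}, \ref{D^le/D}, \ref{D asymptotically good}), but strictly speaking the corollary as stated inherits that mild side condition.
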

\begin{proof}
By Eq.\eqref{dim n-1},
$\frac{n-1}{2}
=\sum_{e\in{\wh E}^\dag} n_e
=\sum_{e\in{\wh E_a}^\dag}n_e
+\sum_{e\in\wh E^\dag-\wh E^\dag_a}n_e.$
By the above lemma 
and Lemma~\ref{q^k-1,q^k+1}(2),
we have: $\big|{\cal D}_{(a,d)}\big|\le 
2\prod_{e\in{\wh E}^\dag-{\wh E}^\dag_a}(q^{n_e}+1)
\le 2\cdot q^2\cdot q^{\sum_{e\in{\wh E}^\dag-{\wh E}^\dag_a}n_e}
\le q^3\cdot q^{\frac{n-1}{2}-\ell_a}$.
\end{proof}

It is known by~\cite{AKS} that
\begin{equation}\label{mu_q(n)}
 \min\big\{\dim FG e\,\big|\,e\in E-\{e_0\}\big\}=\mu_q(n),
\end{equation}
where $\mu_q(n)$ is the minimal size of non-trivial $q$-cyclotomic cosets
on ${\Bbb Z}_n$. For $e\in\wh E^\dag$, by Lemma \ref{FG^flat}, 
$\dim (FGe)$ is even and $\dim (FGe)\ge\mu_q(n)$.

For an integer $\ell$ with $\mu_q(n)\le 2\ell\le n-1$, we denote
\begin{equation}\label{Omega}
\begin{array}{l}
\Omega_{2\ell}=\big\{J\le F\!G\,\big|\,
 \mbox{$J$ is a direct sum of some of $(F\!G e)$'s for $e\in{\wh E}^\dag$},
   \dim J\!=\!2\ell\big\}. \\
\end{array}
\end{equation}

\begin{lemma}\label{tilde Omega}
 $\big|\Omega_{2\ell}\big|
  \le \big|{\wh E}^\dag\big|^{2\ell/\mu_q(n)}\le n^{2\ell/\mu_q(n)}$.
\end{lemma}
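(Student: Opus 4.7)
The plan is to pass from submodules to subsets of $\wh E^\dag$ and then do a purely combinatorial count. By Eq.\eqref{FG=} and the uniqueness of the primitive idempotent decomposition, every $J\in\Omega_{2\ell}$ is of the form $J=\bigoplus_{e\in S}FGe$ for a unique $S=S_J\subseteq\wh E^\dag$, so $J\mapsto S_J$ embeds $\Omega_{2\ell}$ injectively into the family of subsets of $\wh E^\dag$.

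Next I would control the size of $S_J$. Using Lemma~\ref{FG^flat}(3)--(4) and Eq.\eqref{mu_q(n)}, each summand satisfies $\dim(FGe)=2n_e\ge\mu_q(n)$ for $e\in\wh E^\dag$. Summing over $e\in S_J$ and using $\dim J=2\ell$ gives $|S_J|\,\mu_q(n)\le 2\ell$, hence $|S_J|\le K$, where $K:=\lfloor 2\ell/\mu_q(n)\rfloor\ge 1$; the positivity of $K$ is precisely the standing hypothesis $\mu_q(n)\le 2\ell$, and is what makes the padding step below well-defined.

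It then remains to bound the number of subsets $S\subseteq\wh E^\dag$ with $1\le|S|\le K$ by $|\wh E^\dag|^K$. I would fix an arbitrary total order on $\wh E^\dag$ and send a nonempty $S=\{f_1<\cdots<f_t\}$ to the $K$-tuple $(f_1,\dots,f_t,f_t,\dots,f_t)\in(\wh E^\dag)^K$ padded by copies of the last element~$f_t$. The underlying set of the tuple recovers $S$, so this encoding is injective; thus $|\Omega_{2\ell}|\le|\wh E^\dag|^K\le|\wh E^\dag|^{2\ell/\mu_q(n)}$. For the second inequality of the lemma, I would observe that $|\wh E^\dag|=r+s\le (n-1)/2<n$, which follows immediately from Eq.\eqref{dim n-1} together with $n_e\ge 1$. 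The argument has no real obstacle; the only point deserving care is the combinatorial encoding, since a naive binomial sum $\sum_{t\le K}\binom{|\wh E^\dag|}{t}$ is not obviously bounded by $|\wh E^\dag|^K$, whereas the padded-tuple trick gives the desired bound cleanly.
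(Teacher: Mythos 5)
Your proof is correct and takes essentially the same route as the paper: bound the number of direct summands of $J$ by $2\ell/\mu_q(n)$ using $\dim FGe=2n_e\ge\mu_q(n)$, then count (ordered) choices of those summands from $\wh E^\dag$. The only differences are cosmetic: you make explicit the padded-tuple encoding that the paper leaves implicit in ``the number of choices of each direct summand is at most $r+s$'', and you use the slightly sharper bound $|\wh E^\dag|\le(n-1)/2$ where the paper writes $|\wh E^\dag|\le n-1$; both suffice for $|\wh E^\dag|\le n$.
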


\begin{proof} 
For $J\in\Omega_{2\ell}$, $\dim J=2\ell$ and $J$ is a direct sum of some of 
$(FG e)$'s for $e\in{\wh E}^\dag$. Since $\dim FG e=2n_e\ge \mu_q(n)$, 
 the number of the direct summands in $J$ is at most $2\ell/\mu_q(n)$.
And, since $|{\wh E}^\dag|=r+s\le n-1$, 
the number of the choices of each direct summand of $J$ is at most $r+s$. 
\end{proof}

For $J\in\Omega_{2\ell}$, 
we denote (where $\ell_a$ and $L_a$ are defined in Eq.\eqref{ell_a})
\begin{equation*}
\begin{array}{l}
 \tilde J=FGe_0+J~~\mbox{(hence $\dim\tilde J=2\ell+1$)};
  \\[5pt]
 \tilde J^*=\big\{ a\,\big|\, a\in\tilde J,~L_a=J
 \mbox{ (equivalently, $\ell_a=\ell$)}\big\}.
\end{array}
\end{equation*}

\begin{lemma}\label{D le} ~
${\cal D}^{\le\delta}
= \bigcup_{\ell=\frac{1}{2}\mu_q(n)}^{(n-1)/2}
   \bigcup_{J\in\Omega_{2\ell}}
   \bigcup_{(a,d)\in (\tilde J^*\times\tilde J^*)^{\le\delta}}{\cal D}_{(a,d)}$.
\end{lemma}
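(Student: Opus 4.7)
The plan is to prove the two inclusions of the displayed equality separately. For $\supseteq$, if $C_{1,b}\in{\cal D}_{(a,d)}$ with $(a,d)\in(\tilde J^*\times\tilde J^*)^{\le\delta}$ and $J\in\Omega_{2\ell}$ for some $\ell\ge\frac{1}{2}\mu_q(n)$, then the codeword $(a,d)\in C_{1,b}$ is nonzero (since $a\in\tilde J^*$ forces $L_a=J\ne 0$, hence $a\ne 0$) and satisfies $\frac{{\rm w}(a,d)}{2n}\le\delta$; therefore $\Delta(C_{1,b})\le\delta$ and $C_{1,b}\in{\cal D}^{\le\delta}$, as required.

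For the inclusion $\subseteq$, given $C_{1,b}\in{\cal D}^{\le\delta}$, I would pick any nonzero codeword $(a,d)\in C_{1,b}$ with $\frac{{\rm w}(a,d)}{2n}\le\delta$ and exhibit the corresponding $J\in\Omega_{2\ell}$. The first step is the observation that $b\ol b=-1$ (Corollary~\ref{self-dual}) makes $b$ a unit of $FG$ (with inverse $-\ol b$), so $eb\in(FGe)^\times$ for every $e\in\wh E$; combining this with $(a,d)=(a,ab)$ and the idempotency of $e$ yields $ed=(ea)(eb)$, hence $ea\ne 0\iff ed\ne 0$. Consequently $\wh E^\dag_a=\wh E^\dag_d$ and $L_a=L_d=:J$. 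Setting $\ell:=\ell_a$, the submodule $J=\bigoplus_{e\in\wh E^\dag_a}FGe$ is visibly a direct sum of some of the $FGe$'s with $\dim J=2\ell$, so $J\in\Omega_{2\ell}$; furthermore $a,d\in Fe_0+J=\tilde J$ with $L_a=L_d=J$, so $(a,d)\in\tilde J^*\times\tilde J^*$ by construction, and $C_{1,b}\in{\cal D}_{(a,d)}$ appears on the right-hand side.

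The crux of the proof is to verify the index range $\frac{1}{2}\mu_q(n)\le\ell\le\frac{n-1}{2}$: the upper bound is immediate from Eq.~\eqref{dim n-1}, but the lower bound requires $\wh E^\dag_a\ne\emptyset$, for which I plan a contradiction argument. Suppose $\wh E^\dag_a=\emptyset$; then $a=ce_0$ for some $c\in F^\times$. Applying $e_0$ to both sides of $b\ol b=-1$ and noting $e_0b=\sigma_be_0$ where $\sigma_b:=\sum_{x\in G}b_x$ (and similarly $e_0\ol b=\sigma_be_0$, since $\ol{\;\cdot\;}$ merely permutes coefficients) forces $\sigma_b^2=-1$, so $\sigma_b\ne 0$ and $d=ab=c\sigma_be_0\ne 0$. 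However, both $ce_0=\frac{c}{n}\sum_{x\in G}x$ and $c\sigma_be_0$ have full Hamming weight $n$, giving ${\rm w}(a,d)=2n$ and $\Delta(C_{1,b})=1>\delta$, contradicting $\delta\le 1-q^{-1}<1$. Once $\wh E^\dag_a\ne\emptyset$ is established, Eq.~\eqref{mu_q(n)} gives $2n_e=\dim FGe\ge\mu_q(n)$ for any $e\in\wh E^\dag_a$, hence $\ell\ge n_e\ge\frac{1}{2}\mu_q(n)$. The only substantive step is this weight calculation excluding $a\in Fe_0$; it is precisely where the full-support shape of the trivial idempotent $e_0$ is essential.
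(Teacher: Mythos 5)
Your proposal is correct and follows essentially the same route as the paper's proof: both show $\supseteq$ directly and establish $\subseteq$ by choosing a minimum-weight nonzero codeword $(a,d)$, arguing $\wh E^\dag_a\ne\emptyset$ by excluding the case $a\in Fe_0$ on weight grounds, and identifying $J=L_a\in\Omega_{2\ell_a}$. The only cosmetic difference is that you re-derive $\wh E^\dag_a=\wh E^\dag_d$ and the nonvanishing of $d$ from $b$ being a unit (and via the $\sigma_b^2=-1$ computation in the $e_0$-component), where the paper simply cites Lemma~\ref{D_a,d} and observes that $(\alpha e_0,\alpha'e_0)$ with $\alpha,\alpha'\ne 0$ has weight $2n$.
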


\begin{proof}
If $(a,d)\in (\tilde J^*\times\tilde J^*)^{\le\delta}$, then
$0<\frac{{\rm w}(a,d)}{2n}\le\delta$.
For any $C_{1,b}\in{\cal D}_{(a,d)}$,
we have
$C_{1,b}\in\cal D$ and $\frac{\rm w(C_{1,b})}{2n}\le \delta$,
i.e., $C_{1,b}\in \cal D^{\le\delta}$.
Hence
${\cal D}_{(a,d)}\subseteq {\cal D}^{\le\delta}$.

Let $C_{1,b}\in{\cal D}^{\le\delta}$.
There is  $0\ne (a, d)\in(FG)^2$ such that 
$$(a,d)\in C_{1,b}~~~\mbox{and}~~~0<{\rm w}(a,d)\le 2\delta n.$$
Then
$C_{1,b}\in {\cal D}_{(a,d)}$ and
 ${\wh E}^\dag_a\ne\emptyset$
(otherwise $(a,d)=(\alpha e_0,\alpha' e_0)$ with 
$0\ne \alpha\in F$ and $0\ne \alpha'\in F$, hence ${\rm w}(a,d)=2n>2\delta n$). 
By Lemma \ref{D_a,d},
$\wh E^\dag_d=\wh E^\dag_a$.
Then $(a,d)\in \tilde J^*\times\tilde J^*$ with $J=L_a\in\Omega_{2\ell_a}.$
Thus, the left hand side of the equality in the lemma
 is contained in the right hand side.  
\end{proof}

\begin{lemma}\label{in tilde Omega}
Let $\frac{1}{2}\mu_q(n)\le\ell\le\frac{n-1}{2}$ and  $J\in\Omega_{2\ell}$.
Then
$$\textstyle
\big|\bigcup_{(a,d)\in (\tilde J^*\times\tilde J^*)^{\le\delta}}{\cal D}_{(a,d)}\big|
\le q^{\frac{n-1}{2}+4\ell\left(h_q(\delta)-\frac{1}{4}\right)+2h_q(\delta)+3}.
$$
\end{lemma}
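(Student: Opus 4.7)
The plan is to bound the cardinality on the left by a union-bound-then-product argument: bound the number of candidate pairs $(a,d)$ using the $q$-entropy estimate of Lemma \ref{balanced}, and for each such pair bound $|{\cal D}_{(a,d)}|$ using Corollary \ref{|D|<}.

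First I would observe that $\tilde J=FGe_0+J$ is an $FG$-submodule (in fact an ideal) of $FG$, because $J$ is by construction a direct sum of minimal ideals of the form $FGe$ with $e\in\wh E^\dag$, and $FGe_0=Fe_0$ is itself a minimal ideal; hence $\tilde J\le FG$ and $\dim\tilde J=2\ell+1$. Therefore Lemma \ref{balanced}, applied to $A=\tilde J$, yields
$$\bigl|(\tilde J\times\tilde J)^{\le\delta}\bigr|\le q^{h_q(\delta)\cdot\dim(\tilde J\times\tilde J)}=q^{h_q(\delta)(4\ell+2)}.$$
Since $\tilde J^*\subseteq\tilde J$, the same bound holds for $(\tilde J^*\times\tilde J^*)^{\le\delta}$.

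Next, for any $(a,d)\in\tilde J^*\times\tilde J^*$, the definition of $\tilde J^*$ forces $L_a=J$, so $\ell_a=\ell$. Thus Corollary \ref{|D|<} gives the uniform bound
$$\bigl|{\cal D}_{(a,d)}\bigr|\le q^{\frac{n-1}{2}-\ell+3}.$$
Applying the union bound over the pairs $(a,d)$ and combining the two estimates yields
$$\Bigl|\bigcup_{(a,d)\in(\tilde J^*\times\tilde J^*)^{\le\delta}}{\cal D}_{(a,d)}\Bigr|\le q^{h_q(\delta)(4\ell+2)}\cdot q^{\frac{n-1}{2}-\ell+3}=q^{\frac{n-1}{2}+4\ell\bigl(h_q(\delta)-\frac{1}{4}\bigr)+2h_q(\delta)+3},$$
which is exactly the asserted inequality.

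There is no serious obstacle here once the two ingredients are in place: the argument is essentially a union bound. The only point that needs a moment of care is verifying that $\tilde J$ really is an $FG$-submodule so that Lemma \ref{balanced} applies — but this is immediate from the decomposition Eq.\eqref{FG=} and the way $\Omega_{2\ell}$ is defined in Eq.\eqref{Omega}. The slightly subtle bookkeeping is tracking the exponent of $q$ through the product; writing $-\ell=4\ell\cdot(-\frac{1}{4})$ groups the $\ell$-dependent terms into the single factor $4\ell(h_q(\delta)-\frac{1}{4})$, which is the form needed later for asymptotic analysis (this is also why the hypothesis $\delta<1-q^{-1}$ will eventually matter, so that $h_q(\delta)-\frac{1}{4}$ can be made negative for small~$\delta$).
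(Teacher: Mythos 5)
Your proof is correct and matches the paper's argument exactly: apply Lemma \ref{balanced} to the ideal $\tilde J$ (dimension $2\ell+1$) to bound the number of pairs $(a,d)$, use Corollary \ref{|D|<} together with $\ell_a=\ell$ for $a\in\tilde J^*$ to bound each $|{\cal D}_{(a,d)}|$, and combine via a union bound. The only cosmetic difference is that you spell out why $\tilde J$ is an $FG$-submodule, which the paper leaves implicit.
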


\begin{proof} 
Since 
 $\dim(\tilde J\times\tilde J)= 2(2\ell+1)=4\ell+2$, 
 by Lemma \ref{balanced} we have 
$$ \big|(\tilde J\times\tilde J)^{\le\delta}\big|
\le q^{(4\ell+2)h_q(\delta)}.$$
For $(a,d)\in (\tilde J^*\times \tilde J^*)^{\le\delta}$, 
$\ell_a=\ell$ and 
$\big|{\cal D}_{(a,d)}\big|\le q^{\frac{n-1}{2}-\ell+3}$
(see Corollary~\ref{|D|<}).
So
\begin{align*}
&\textstyle
\big|\bigcup_{(a,d)\in (\tilde J^*\times \tilde J^*)^{\le\delta}}{\cal D}_{(a,d)}\big|
\le\sum_{(a,d)\in (\tilde J^*\times \tilde J^*)^{\le\delta}}
   \big|{\cal D}_{(a,d)}\big|\\[4pt]
&\textstyle \le \sum_{(a,d)\in (\tilde J^*\times \tilde J^*)^{\le\delta}}
   q^{\frac{n-1}{2}-\ell+3}
\le \sum_{(a,d)\in (\tilde J\times \tilde J)^{\le\delta}}
     q^{\frac{n-1}{2}-\ell+3}\\[4pt]
&=\big|(\tilde J\times \tilde J)^{\le\delta}\big|
     \cdot q^{\frac{n-1}{2}-\ell+3}
\le q^{(4\ell+2)h_q(\delta)}\cdot q^{\frac{n-1}{2}-\ell+3}\\[4pt]
&=q^{\frac{n-1}{2}+4\ell\left(h_q(\delta)-\frac{1}{4}\right)+2h_q(\delta)+3}.
\end{align*}
We are done.
\end{proof}

\begin{theorem}\label{D^le <}
Assume that 
$\frac{1}{4}-h_q(\delta)-\frac{\log_q n}{2\mu_q(n)}>0$. 
Then
$$
\big|{\cal D}^{\le\delta}\big|\le 
q^{\frac{n-1}{2}-2\mu_q(n) 
     \left(\frac{1}{4}-h_q(\delta)-\frac{\log_q n}{\mu_q(n)}\right) +4}.
$$
\end{theorem}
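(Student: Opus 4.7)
The plan is to combine the three preceding lemmas. By Lemma~\ref{D le} together with the triple union bound, and by plugging in Lemma~\ref{in tilde Omega} for the innermost cardinality and Lemma~\ref{tilde Omega} (rewritten as $|\Omega_{2\ell}|\le q^{2\ell\log_q n/\mu_q(n)}$) for the middle count, one obtains
$$
\big|{\cal D}^{\le\delta}\big|\le\sum_{\ell=\mu_q(n)/2}^{(n-1)/2}q^{E(\ell)},
\qquad
E(\ell):=\tfrac{n-1}{2}+2h_q(\delta)+3+\ell\bigl(4h_q(\delta)-1+2\log_q n/\mu_q(n)\bigr).
$$
The remaining work is to extract the maximum of $E(\ell)$ over the admissible range and absorb the leading constants.

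Set $\alpha:=\tfrac14-h_q(\delta)-\tfrac{\log_q n}{2\mu_q(n)}$. The hypothesis of the theorem is exactly $\alpha>0$, and a direct check shows that the coefficient of $\ell$ in $E(\ell)$ equals $-4\alpha<0$. Hence $E(\ell)$ is a decreasing linear function on $[\mu_q(n)/2,(n-1)/2]$, and its maximum is attained at the lower endpoint $\ell=\mu_q(n)/2$, giving $E(\mu_q(n)/2)=\tfrac{n-1}{2}+2h_q(\delta)+3-2\mu_q(n)\alpha$. Since the sum has at most $(n-1)/2<n$ terms, multiplying by that factor adds at most $\log_q n$ to the exponent.

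The final step is the algebraic identity
$$
-2\mu_q(n)\alpha+\log_q n=-2\mu_q(n)\bigl(\tfrac14-h_q(\delta)-\tfrac{\log_q n}{\mu_q(n)}\bigr),
$$
which yields precisely the form of the exponent stated in the theorem. The leftover additive constant $2h_q(\delta)+3$ is absorbed into the final ``$+4$'': the hypothesis $\alpha>0$ already forces $h_q(\delta)<1/4$, so $2h_q(\delta)+3<3.5\le 4$.

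Everything is routine bookkeeping; the only subtle point to watch is the doubling trick in the last step, i.e.\ recognizing that the extra $\log_q n$ from counting summands conspires with $-2\mu_q(n)\alpha$ to convert the $\tfrac{\log_q n}{2\mu_q(n)}$ inside $\alpha$ into the $\tfrac{\log_q n}{\mu_q(n)}$ printed inside the parentheses of the statement.
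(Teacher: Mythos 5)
Your proof is correct and follows essentially the same route as the paper's: sum the bound from Lemma~\ref{in tilde Omega} over $J\in\Omega_{2\ell}$ using Lemma~\ref{tilde Omega}, observe that the hypothesis makes the exponent decreasing in $\ell$ so the leading term sits at $\ell=\mu_q(n)/2$, bound the number of summands by $n$, and absorb the extra $\log_q n$ into the parenthesis by the ``doubling'' identity. The only cosmetic difference is that you use $2h_q(\delta)+3<3.5$ while the paper uses $2h_q(\delta)\le 1$; both follow from $h_q(\delta)<1/4$ and give the same $+4$.
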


\begin{proof} Denote $\mu=\mu_q(n)$.
By Lemma \ref{D le},  Lemma \ref{in tilde Omega} and Lemma \ref{tilde Omega}, 
\begin{align*}
|{\cal D}^{\le\delta}|
&\textstyle
 \le\sum_{\ell=\frac{1}{2}\mu}^{\frac{n-1}{2}}\sum_{ J\in \Omega_{2\ell}} 
 q^{\frac{n-1}{2}+4\ell\left(h_q(\delta)-\frac{1}{4}\right)+2h_q(\delta)+3}\\
&\textstyle
 \le\sum_{\ell=\frac{1}{2}\mu}^{\frac{n-1}{2}}n^{2\ell/\mu}\cdot 
 q^{\frac{n-1}{2}+4\ell\left(h_q(\delta)-\frac{1}{4}\right)+2h_q(\delta)+3}\\
&\textstyle
 =\sum_{\ell=\frac{1}{2}\mu}^{\frac{n-1}{2}} q^{\frac{n-1}{2}
  -4\ell\left(\frac{1}{4}-h_q(\delta)-\frac{\log_q n}{2\mu}\right)
  +2h_q(\delta)+3}.
\end{align*}
Since $2\ell\ge \mu$ and $\frac{1}{4}-h_q(\delta)-\frac{\log_q n}{2\mu}>0$
(hence $2h_q(\delta)\le 1$),
\begin{align*}
|{\cal D}^{\le\delta}|
&\textstyle
 \le \sum_{\ell=\frac{1}{2}\mu}^{\frac{n-1}{2}}
  q^{\frac{n-1}{2}-2\mu \left(\frac{1}{4}-h_q(\delta)-\frac{\log_q n}{2\mu}\right)
  +1+3}\\
&\le n\cdot
q^{\frac{n-1}{2}-2\mu \left(\frac{1}{4}-h_q(\delta)-\frac{\log_q n}{2\mu}\right)
  +4}\\
&=q^{\frac{n-1}{2}-2\mu \left(\frac{1}{4}-h_q(\delta)-\frac{\log_q n}{\mu}\right)
  +4}. 
\end{align*}
We are done.
\end{proof}

\subsection{Asymptotic goodness}

\begin{theorem} \label{D^le/D}
Assume that $q$ is even or $4\,|\,(q-1)$, and 
$\frac{1}{4}-h_q(\delta)-\frac{\log_q n}{2\mu_q(n)}>0$. 
Then
$$
\left.|{\cal D}^{\le\delta}|\right/|{\cal D}|\le 
q^{-2\mu_q(n) \left(\frac{1}{4}-h_q(\delta)-\frac{\log_q n}{\mu_q(n)}\right)+6}.
$$
\end{theorem}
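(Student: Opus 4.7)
The plan is to combine the upper bound on $|\mathcal{D}^{\le\delta}|$ from Theorem \ref{D^le <} with a matching lower bound on $|\mathcal{D}|$ derived from the counting formula in Theorem \ref{counting D}. Since the hypothesis $q$ is even or $4\mid(q-1)$ corresponds exactly to the first two cases of Theorem \ref{counting D}, we have in either case
\[
 |\mathcal{D}|\;\ge\;\prod_{i=1}^r(q^{n_i}+1)\prod_{j=1}^s(q^{n_{r\+j}}-1).
\]

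First I would dispose of the $(q^{n_i}+1)$ factors by the trivial estimate $q^{n_i}+1\ge q^{n_i}$, giving $\prod_{i=1}^r(q^{n_i}+1)\ge q^{\sum_{i=1}^r n_i}$. The $(q^{n_{r\+j}}-1)$ factors are handled by Lemma \ref{q^k-1,q^k+1}(1), but one first has to verify its hypothesis $n_{r\+j}\ge\log_q s$. Since $2n_{r\+j}=\dim FG\wh e_{r\+j}\ge\mu_q(n)$ by Eq.\eqref{mu_q(n)} and Lemma \ref{FG^flat}, and since the standing hypothesis $\frac14-h_q(\delta)-\frac{\log_q n}{2\mu_q(n)}>0$ forces $\mu_q(n)>4\log_q n$, we get $n_{r\+j}\ge\mu_q(n)/2>2\log_q n\ge\log_q s$ (using $s\le r+s=|\wh E^\dag|<n$). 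Thus Lemma \ref{q^k-1,q^k+1}(1) applies and yields $\prod_{j=1}^s(q^{n_{r\+j}}-1)\ge q^{\sum_{j=1}^s n_{r\+j}-2}$.

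Assembling these two estimates and invoking Eq.\eqref{dim n-1}, which reads $\sum_{i=1}^r n_i+\sum_{j=1}^s n_{r\+j}=\frac{n-1}{2}$, gives the clean lower bound
\[
 |\mathcal{D}|\;\ge\;q^{\frac{n-1}{2}-2}.
\]
Dividing the upper bound from Theorem \ref{D^le <} by this lower bound, the $q^{(n-1)/2}$ factors cancel, the exponents $+4$ and $-(-2)=+2$ add to $+6$, and what remains is precisely
\[
 \frac{|\mathcal{D}^{\le\delta}|}{|\mathcal{D}|}\;\le\;
  q^{-2\mu_q(n)\bigl(\frac14-h_q(\delta)-\frac{\log_q n}{\mu_q(n)}\bigr)+6}.
\]

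The main (minor) obstacle is checking the hypothesis of Lemma \ref{q^k-1,q^k+1}(1); everything else is bookkeeping of exponents. Note in particular that no new case analysis on the parity of $q$ is required beyond what is already captured by Theorem \ref{counting D}, because under the hypothesis of the present theorem the ``0'' case of that theorem is excluded.
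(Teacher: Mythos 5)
Your proof follows the same route as the paper's: divide the bound from Theorem \ref{D^le <} by the lower bound $|\mathcal{D}|\ge q^{(n-1)/2-2}$ obtained from Theorem \ref{counting D} and Lemma \ref{q^k-1,q^k+1}(1); the only cosmetic difference is that the paper crudely replaces each factor $q^{n_i}+1$ by $q^{n_i}-1$ and applies the lemma to all $r+s$ factors at once, whereas you keep $q^{n_i}+1\ge q^{n_i}$ and apply the lemma only to the $s$ factors $q^{n_{r+j}}-1$, but both yield exactly $q^{(n-1)/2-2}$. One small arithmetic slip worth noting: the standing hypothesis $\frac14-h_q(\delta)-\frac{\log_q n}{2\mu_q(n)}>0$ forces $\mu_q(n)>2\log_q n$, not $4\log_q n$ as you wrote, but the chain $n_{r+j}\ge\mu_q(n)/2>\log_q n\ge\log_q s$ still goes through, so the hypothesis check you supply (which the paper leaves implicit) is valid.
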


\begin{proof} 
By Theorem \ref{cal D} and Lemma \ref{q^k-1,q^k+1} ,
\begin{equation*}
\textstyle
|{\cal D}|\ge \prod_{e\in{\wh E}^\dag}(q^{n_e}-1)
\ge q^{\sum_{e\in{\wh E}^\dag}n_e -2}=q^{\frac{n-1}{2}-2}.
\end{equation*}
By Theorem \ref{D^le <},  
\begin{align*}
 |{\cal D}^{\le\delta}|\Big/|{\cal D}|\;
 & \le\;  q^{\frac{n-1}{2}-2\mu_q(n) 
  \left(\frac{1}{4}-h_q(\delta)-\frac{\log_q n}{\mu_q(n)}\right) +4}
\Big/ q^{\frac{n-1}{2}-2}\\
 &=\; q^{-2\mu_q(n) \left(\frac{1}{4}-h_q(\delta)-\frac{\log_q n}{\mu_q(n)}\right)+6}.
\end{align*}
We are done.
\end{proof}

\begin{lemma}[{\cite[Lemma 2.6]{FL20}}]\label{n_i}
There are infinitely many positive odd integers $n_1,n_2,\cdots$ coprime to $q$
such that $\lim\limits_{i\to\infty}\frac{\log_q n_i}{\mu_q(n_i)}=0$;
in particular, $\mu_q(n_i)\to\infty$. 
\end{lemma}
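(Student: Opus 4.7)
The plan is to realize each $n_i$ as a single odd prime $\ell_i$ which is a \emph{primitive} prime divisor of $q^{m_i}-1$ for a carefully chosen exponent $m_i$. Such a prime has order exactly $m_i$ modulo itself, yet is bounded above by the cyclotomic value $\Phi_{m_i}(q)$, whose size is governed by $\varphi(m_i)$ rather than by $m_i$. The gap between these two quantities is precisely what will drive $\log_q n_i/\mu_q(n_i)$ to $0$.

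First I would choose the exponents $m_i$ to be the product of the first $i$ odd primes not dividing $q$. Then $m_i$ is odd, coprime to $q$, tends to infinity, and by Mertens' theorem (after at most one prime is excluded),
$$\frac{\varphi(m_i)}{m_i}=\prod_{p\mid m_i}\left(1-\frac{1}{p}\right)\longrightarrow 0.$$
Next, since each $m_i$ is an odd integer $\ge 3$, it avoids all the exceptional cases of Zsygmondy's theorem, so $q^{m_i}-1$ admits a primitive prime divisor $\ell_i$, i.e.\ a prime $\ell_i\mid q^{m_i}-1$ with $\ell_i\nmid q^{j}-1$ for every $j<m_i$. By definition this means $\mathrm{ord}_{\ell_i}(q)=m_i$; the prime $\ell_i$ is coprime to $q$ automatically, and is odd because $\ell_i\equiv 1\pmod{m_i}$ rules out $\ell_i=2$ once $m_i\ge 3$. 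I set $n_i:=\ell_i$.

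The identification $\mu_q(n_i)=m_i$ is then immediate: since $\ell_i$ is prime, every non-zero residue modulo $\ell_i$ generates a $q$-cyclotomic coset of the same size $\mathrm{ord}_{\ell_i}(q)=m_i$. For the upper bound on $n_i$ itself I would use $\ell_i\mid \Phi_{m_i}(q)$ together with the trivial estimate
$$\Phi_{m_i}(q)=\prod_{\zeta}(q-\zeta)\le (q+1)^{\varphi(m_i)},$$
the product running over the primitive $m_i$-th roots of unity. Hence $\log_q n_i\le \varphi(m_i)\log_q(q+1)\le 2\varphi(m_i)$, so
$$\frac{\log_q n_i}{\mu_q(n_i)}\le \frac{2\varphi(m_i)}{m_i}\longrightarrow 0,$$
as required.

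The only non-trivial input is Zsygmondy's theorem for the existence of $\ell_i$; everything else is bookkeeping with the cyclotomic polynomial evaluated at $q$. The main thing one has to be careful about---but it is easily checked---is that the Zsygmondy exceptional pairs, namely $(q,m)=(2,6)$ and $m=2$ with $q+1$ a power of $2$, are all avoided because each $m_i$ is an odd integer $\ge 3$. Note that this construction crucially side-steps Artin's primitive root conjecture: we never demand that $q$ be a primitive root modulo $\ell_i$, only that it has \emph{some} large order, and Zsygmondy supplies that order unconditionally.
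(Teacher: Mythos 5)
Your proof is correct. A few points worth checking, all of which you handle properly: (i) since $\ell_i$ is prime, \emph{every} non-trivial $q$-cyclotomic coset of $\mathbb{Z}_{\ell_i}$ has size $\mathrm{ord}_{\ell_i}(q)$, so $\mu_q(\ell_i)=m_i$ exactly; (ii) $m_i=\mathrm{ord}_{\ell_i}(q)\mid\ell_i-1$ forces $\ell_i\ge m_i+1\ge 4$, so $\ell_i$ is odd; (iii) the distinctness of the $m_i$ makes the $\ell_i$ distinct, so the sequence is genuinely infinite; (iv) a primitive prime divisor of $q^{m}-1$ necessarily divides $\Phi_m(q)$, and each complex factor $q-\zeta$ has modulus at most $q+1$, so the bound $\log_q\ell_i\le\varphi(m_i)\log_q(q+1)\le 2\varphi(m_i)$ holds; and (v) Zsygmondy applies because $m_i$ is odd and $\ge 3$, which rules out all exceptional cases (for $b=1$: $m=1$ with $q=2$, $m=2$ with $q+1$ a power of two, and $(q,m)=(2,6)$).

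Note, though, that this paper does not prove the lemma; it is imported verbatim from \cite[Lemma 2.6]{FL20}, so there is no in-paper argument to compare against. The standard proof in that line of work (going back to Bazzi--Mitter \cite{BM}) is an \emph{elementary counting argument}: any prime with $\mathrm{ord}_p(q)\le M$ divides $\prod_{j\le M}(q^j-1)<q^{M(M+1)/2}$, so there are at most $O_q(M^2)$ such primes; combined with the prime number theorem, this shows that among the first $O(M^2)$ primes there is one of $q$-order exceeding $M$, whence $\log_q p/\mathrm{ord}_p(q)=O(\log M/M)\to 0$. Your route trades that count for Zsygmondy's theorem (a deeper input) plus Mertens (to force $\varphi(m_i)/m_i\to 0$), and in exchange is more constructive: for each prescribed odd $m_i$ you exhibit a specific prime $\ell_i$ with a \emph{known} $q$-order $m_i$ and an upper bound $\ell_i\le(q+1)^{\varphi(m_i)}$ on its size. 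Both approaches unconditionally avoid Artin's primitive root conjecture, as you correctly emphasize: one never needs $q$ to be a primitive root modulo $\ell_i$, only to have controllably large order, and either argument supplies that.
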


\begin{remark}\label{r n_i}\rm
By \cite[Lemma 2.6]{FL20}, there are infinitely many primes
$p_1,p_2,\cdots$
such that $\lim\limits_{i\to\infty}\frac{\log_q p_i}{\mu_q(p_i)}=0$.
We just remark that the $n_i$'s in the lemma are not necessarily primes
(which implies that 
the abelian groups $G_i$ of order $n_i$ are not necessarily cyclic). 
Because: by \cite{AKS}, for any integer $m>1$ coprime to $q$, 
$$
 \mu_q(m)=\min\big\{\,\mu_q(p)\,\big|\; 
 \mbox{$p$ runs over the prime divisors of $m$}\big\};
$$
therefore, if we take, for example, $n_i=p_i^2$, we still have
 $\lim\limits_{i\to\infty}\frac{\log_q n_i}{\mu_q(n_i)}=0$.
\end{remark}

\begin{theorem}\label{D asymptotically good}
Let $n_1,n_2,\cdots$ be as in Lemma \ref{n_i}. 
Let $G_i$ be any abelian group of order $n_i$ for $i=1,2,\cdots$.
If $q$ is even or $q\equiv 1~({\rm mod}~4)$, then
there exist self-dual $2$-quasi-$FG_i$ codes $C_i$ of type I, $i=1,2,\cdots$, 
such that the code sequence $C_1,C_2,\cdots$ is asymptotically good.
\end{theorem}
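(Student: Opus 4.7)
The plan is to deduce the theorem from the quotient estimate in Theorem \ref{D^le/D} by a pigeonhole-style argument: if the proportion of ``bad'' codes (those with small relative distance) in $\mathcal{D}$ is less than $1$, then a ``good'' code must exist. The rate is automatic, because every $C_{1,b}$ of type I has dimension $n_i$ in the ambient length $2n_i$, so $R(C_i) = 1/2$ for free; the real work is on the distance side.

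First, I would fix $\delta > 0$ once and for all, chosen small enough that $\frac{1}{4} - h_q(\delta) > 0$. This choice is possible because the $q$-entropy function $h_q$ defined in Eq.\eqref{entropy} satisfies $h_q(0)=0$ and is continuous (indeed concave) on $[0, 1-q^{-1}]$, so for every constant $c<1/4$ we can find $\delta$ with $h_q(\delta) < c$. Crucially, $\delta$ depends only on $q$, not on $i$.

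Next, I would use the hypothesis $\lim_{i\to\infty} \frac{\log_q n_i}{\mu_q(n_i)} = 0$ (which, by Lemma \ref{n_i} and Remark \ref{r n_i}, also forces $\mu_q(n_i) \to \infty$) to verify the applicability hypotheses of Theorem \ref{D^le/D}. For $i$ large enough, $\frac{\log_q n_i}{2\mu_q(n_i)}$ is so small that
\[
\kappa_i := \tfrac{1}{4} - h_q(\delta) - \tfrac{\log_q n_i}{2\mu_q(n_i)} > 0,
\]
so Theorem \ref{D^le/D} applies (using the hypothesis that $q$ is even or $4 \mid (q-1)$, which guarantees $|\mathcal{D}| > 0$ via Corollary \ref{exist iff}). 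Moreover since $\mu_q(n_i)\to\infty$ while $\frac{1}{4}-h_q(\delta)$ stays bounded below by a positive constant and $\frac{\log_q n_i}{\mu_q(n_i)} \to 0$, the exponent
\[
-2\mu_q(n_i)\!\left(\tfrac{1}{4} - h_q(\delta) - \tfrac{\log_q n_i}{\mu_q(n_i)}\right) + 6
\]
tends to $-\infty$. Hence for all sufficiently large $i$, Theorem \ref{D^le/D} gives $|\mathcal{D}^{\le\delta}|/|\mathcal{D}| < 1$, so $\mathcal{D} \setminus \mathcal{D}^{\le\delta} \neq \emptyset$.

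Finally, for each such $i$ I would pick any $C_i \in \mathcal{D} \setminus \mathcal{D}^{\le\delta}$. By definition of $\mathcal{D}$ (Eq.\eqref{cal D}) and $\mathcal{D}^{\le\delta}$ (Eq.\eqref{D^le}), $C_i$ is a self-dual $2$-quasi-$FG_i$ code of type I with $\Delta(C_i) > \delta$, and its rate is $R(C_i) = n_i/(2n_i) = 1/2$. For the finitely many small $i$ for which the estimate does not yet bite, I would simply pick any self-dual $2$-quasi-$FG_i$ code of type I (which exists by Corollary \ref{exist iff}), since only the tail of the sequence matters for asymptotic goodness. Thus $R(C_i) \ge 1/2$ and $\Delta(C_i) > \delta$ for all large $i$, with $2n_i \to \infty$, proving asymptotic goodness. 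The main conceptual obstacle is already absorbed into Theorem \ref{D^le/D}; at this level the only care needed is to select $\delta$ independently of $i$ and then verify that both the positivity hypothesis $\kappa_i > 0$ and the divergence of the exponent follow cleanly from $\frac{\log_q n_i}{\mu_q(n_i)} \to 0$.
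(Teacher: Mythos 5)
Your proposal is correct and follows essentially the same route as the paper: fix $\delta$ with $0 < h_q(\delta) < \tfrac14$, observe that $\tfrac{\log_q n_i}{\mu_q(n_i)}\to 0$ together with $\mu_q(n_i)\to\infty$ forces the bound in Theorem~\ref{D^le/D} to tend to $0$, and pick $C_i\in\mathcal D_i\setminus\mathcal D_i^{\le\delta}$. The only cosmetic difference is that you explicitly handle the finitely many small $i$ by invoking Corollary~\ref{exist iff}, whereas the paper folds this into a ``we can further assume'' step; both are fine since asymptotic goodness only concerns the tail.
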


\begin{proof}
Let ${\cal D}_i$ be the set of all self-dual $2$-quasi-$FG_i$ codes of type I, 
let ${\cal D}_i^{\le\delta}$ be as in Eq.\eqref{D^le}. 
By the property of $h_q(\delta)$ in Eq.\eqref{entropy},
we can take $\delta\in(0,1-q^{-1})$ 
satisfying that $0<h_q(\delta)<\frac{1}{4}$.
Since $\lim\limits_{i\to\infty}\frac{\log_q n_i}{\mu_q(n_i)}=0$,
we can further assume that 
$\frac{1}{4}-h_q(\delta)-\frac{\log_q n_i}{\mu_q(n_i)}>\varepsilon>0$
for a positive real number~$\varepsilon$.
Since $\mu_q(n_i)\to\infty$, by~Theorem \ref{D^le/D} we have
$$
\lim\limits_{i\to\infty}\left.|{\cal D}_i^{\le\delta}|\right/ |{\cal D}_i|
\le \lim\limits_{i\to\infty} q^{-2\mu_q(n_i) 
 \left(\frac{1}{4}-h_q(\delta)-\frac{\log_q n_i}{\mu_q(n_i)}\right)+6}
 =0.
$$
Thus we can take $C_i\in{\cal D}_i-{\cal D}_i^{\le\delta}$ for $i=1,2,\cdots$.
The self-dual $2$-quasi-$FG_i$ codes $C_i$ of type I satisfy the following: 
\begin{itemize}
\item \vskip-5pt
the length $2n_i$ of $C_i$ is going to $\infty$;
\item\vskip-5pt
the rate $R(C_i)=\frac{1}{2}$ for $i=1,2,\cdots$;
\item\vskip-5pt
the relative minimum distance $\Delta(C_i)>\delta$  for $i=1,2,\cdots$.  
\end{itemize}
\vskip-5pt
That is, the sequence of codes $C_1,C_2,\cdots$ is asymptotically good.
\end{proof}

\section{Self-orthogonal $2$-quasi-abelian codes}\label{self-orth type I^dag}

Keep the notation in Sections 2-4.

The existence of self-dual $2$-quasi-$FG$ codes 
(i.e., the self-orthogonal $2$-quasi-$FG$ codes of dimension $n$)
is conditional, see Corollary \ref{exist iff}. However, in the following 
we show that the self-orthogonal $2$-quasi-$FG$ codes 
of dimension $n-1$ always exist, 
and they are asymptotically good.

As exhibited in the proof of Theorem \ref{counting D}, 
the existence of self-dual $2$-quasi-$FG$ codes depends only on
the computation in the $e_0$-component (Lemma~\ref{T_0}).
Similarly to Eq.\eqref{hat E^dag},
by removing the $e_0$-component we set  
\begin{equation*}
\begin{array}{l}
1^{\!\dag}=1-e_0,~~{\rm i.e.,} ~~
1^{\!\dag}=e_1+\cdots+e_r+\wh e_{r\+1}+\cdots+\wh e_{r\+s};
\\[5pt]
FG^\dag=FG\cdot 1^{\!\dag}
  =\bigoplus_{e\in \wh E^\dag}FGe.
\end{array}
\end{equation*}
Then the $e_0$-component of any $b^\dag\in FG^{\dag}$ vanishes, 
i.e., 
$e_0b^\dag=0$ and $b^\dag=1^{\!\dag} b^\dag$. 

\begin{lemma}\label{self-orth n-1}
For any $b^\dag\in FG^\dag$ with $b^{\dag}\ol{b^{\dag}}=-1^{\!\dag}$,
we have
$$C_{1^{\!\dag}\!, b^{\dag}}=\{(u1^{\!\dag},\,ub^{\dag})\,|\,u\in FG\}$$ 
is a self-orthogonal $2$-quasi-$FG$ code of dimension $n-1$. 
\end{lemma}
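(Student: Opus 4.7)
The plan is to verify the two claims (dimension $n-1$, and self-orthogonality) separately, using two tools already in place: the annihilator description of one-generator codes in Lemma~\ref{dimension of Ca,b}, and the self-orthogonality criterion in Lemma~\ref{orth}. Before either, I would record two small idempotent identities for $1^{\!\dag}=1-e_0=\sum_{e\in\wh E^\dag}e$: first, $1^{\!\dag}\cdot 1^{\!\dag}=1^{\!\dag}$, because the $e\in\wh E^\dag$ are pairwise orthogonal idempotents (Eq.~\eqref{e hat E}); second, $\ol{1^{\!\dag}}=1^{\!\dag}$, because each $e\in\wh E^\dag$ is ``bar''-invariant, i.e.\ $\ol e=e$. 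These two facts are what make the proposed $1^{\!\dag}$ a good substitute for the identity $1$.

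Next I would handle the dimension. By Lemma~\ref{dimension of Ca,b}, $\dim C_{1^{\!\dag}\!,b^\dag}=n-\dim {\rm Ann}_{FG}(1^{\!\dag},b^\dag)$, so it suffices to show that the annihilator is $Fe_0$. Decomposing $u\in FG$ along the primitive idempotents, $u\cdot 1^{\!\dag}=u-ue_0$ vanishes precisely when $u\in FGe_0=Fe_0$; and for any such $u=\alpha e_0$, the equation $ub^\dag=0$ holds automatically, since $b^\dag\in FG^\dag$ forces $e_0b^\dag=0$. Hence ${\rm Ann}_{FG}(1^{\!\dag},b^\dag)=Fe_0$ is one-dimensional, and $\dim C_{1^{\!\dag}\!,b^\dag}=n-1$.

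For self-orthogonality, I would invoke Lemma~\ref{orth}(2) with $a=1^{\!\dag}$ and $b=b^\dag$, which reduces the claim to the single identity $1^{\!\dag}\ol{1^{\!\dag}}+b^\dag\ol{b^\dag}=0$. Using the two idempotent identities above together with the hypothesis $b^\dag\ol{b^\dag}=-1^{\!\dag}$, this sum equals $1^{\!\dag}+(-1^{\!\dag})=0$. That completes the proof.

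I do not foresee any substantive obstacle: the statement is essentially the ``$e_0$-deleted'' analogue of Corollary~\ref{self-dual}, engineered so as to bypass the obstruction in the $e_0$-component identified in Lemma~\ref{T_0} and Corollary~\ref{exist iff}. The only care required is to keep the two roles of $1^{\!\dag}$ distinct — as an idempotent replacing the identity in Lemma~\ref{orth}, and as a ``bar''-invariant element so that $1^{\!\dag}\ol{1^{\!\dag}}=1^{\!\dag}$ — both of which are immediate from the decomposition set up in Section~\ref{abelian}.
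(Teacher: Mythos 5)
Your proof is correct and takes essentially the same route as the paper: both reduce self-orthogonality to Lemma~\ref{orth}(2) via the identity $1^{\!\dag}\ol{1^{\!\dag}}+b^\dag\ol{b^\dag}=1^{\!\dag}-1^{\!\dag}=0$, and both obtain the dimension from Lemma~\ref{dimension of Ca,b} by noting ${\rm Ann}_{FG}(1^{\!\dag},b^\dag)=Fe_0$. You merely spell out the idempotent calculations that the paper leaves implicit.
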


\begin{proof}
%
By Lemma \ref{orth}, $C_{1^{\!\dag}\!, b^{\dag}}$ is self-orthogonal.
Since ${\rm Ann}_{FG}(1^{\!\dag}\!, b^{\dag})=Fe_0$,
by Lemma~\ref{dimension of Ca,b}, $\dim C_{1^{\!\dag}\!, b^{\dag}}=n-1$.
\end{proof}

Similar to Eq.\eqref{cal D}, we set
\begin{equation}\label{cal D^dag}
\begin{array}{l}
{\cal D}^{\dag}=\big\{C_{1^{\!\dag}\!, b^{\dag}}\,\big|\,
b^{\dag}\in FG^\dag, \, b^{\dag}\ol{b^{\dag}}=-1^{\!\dag}\big\}. 
\end{array}\end{equation}
And, similar to Eq.\eqref{D^le},
we assume that $0<\delta<1-q^{-1}$,  and denote
\begin{equation}\label{dag D^le}
\textstyle
 ({\cal D}^\dag)^{\le\delta}=
\big\{C_{1^{\!\dag}\!,b^\dag}\,\big|\,C_{1^{\!\dag}\!,b^\dag}\in{\cal D}^\dag,\, 
  \frac{ {\rm w} ( C_{ 1^{\!\dag}\!,b^\dag } ) } {2n} \le \delta\big\}.
\end{equation}

\begin{theorem}\label{counting D^dag}  
$\big|{\cal D}^\dag\big|=
\prod_{i=1}^r (q^{n_i}+1)\prod_{j=1}^s (q^{n_{r\+j}}-1)$.
\end{theorem}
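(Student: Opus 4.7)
The plan is to mimic the proof of Theorem \ref{counting D} almost verbatim, but with the $e_0$-component excised. The whole point of replacing $1$ by $1^{\!\dag}=1-e_0$ is precisely to bypass the ${\cal T}_0$ factor from Lemma \ref{T_0}, which was the only source of conditionality; removing it should make the count unconditional and equal to the product over $\wh E^\dag$.

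First I would verify that the map $b^\dag \mapsto C_{1^{\!\dag}\!,b^\dag}$ is injective on the set $\{b^\dag \in FG^\dag : b^\dag \ol{b^\dag}=-1^{\!\dag}\}$, so that $|{\cal D}^\dag|$ literally equals the number of such $b^\dag$. If $C_{1^{\!\dag}\!,b_1^\dag}=C_{1^{\!\dag}\!,b_2^\dag}$, then $(1^{\!\dag},b_1^\dag)=u(1^{\!\dag},b_2^\dag)$ for some $u\in FG$. The equation $u\cdot 1^{\!\dag}=1^{\!\dag}$ forces $ue=e$ for every $e\in\wh E^\dag$, hence $u=1^{\!\dag}+\alpha e_0$ for some $\alpha\in F$; since $e_0b_2^\dag=0$, the second component then gives $b_1^\dag=b_2^\dag$.

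Next, writing $b^\dag=\sum_{e\in\wh E^\dag}\beta_e$ with $\beta_e\in FGe$, I would use that $\ol e=e$ for $e\in\wh E$ (see Eq.\eqref{e hat E}) and the orthogonality $ee'=0$ for $e\ne e'$ to decompose
\[
 b^\dag\ol{b^\dag}=\sum_{e\in\wh E^\dag}\beta_e\ol{\beta_e},
 \qquad -1^{\!\dag}=-\sum_{e\in\wh E^\dag}e,
\]
so that the single equation $b^\dag\ol{b^\dag}=-1^{\!\dag}$ is equivalent to the system $\beta_e\ol{\beta_e}=-e$ for all $e\in\wh E^\dag$. Consequently
\[
 |{\cal D}^\dag|=\prod_{i=1}^{r}|{\cal T}_i|\cdot\prod_{j=1}^{s}|{\cal T}_{r\+j}|,
\]
with ${\cal T}_i$ and ${\cal T}_{r\+j}$ defined in Eq.\eqref{cal T}. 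Plugging in $|{\cal T}_i|=q^{n_i}+1$ from Lemma \ref{T_r} and $|{\cal T}_{r\+j}|=q^{n_{r\+j}}-1$ from Lemma \ref{T_s} gives the stated formula.

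There is no real obstacle here; the content was already established in Section \ref{self-dual type I}. The only subtle point is the injectivity step above, which is what justifies treating the counting $|{\cal D}^\dag|$ as simply counting admissible $b^\dag$ in $FG^\dag$. Everything else is a direct specialization of the arguments used for ${\cal D}$ to the index set $\wh E^\dag$ rather than $\wh E$, which automatically removes the case analysis on the parity of $q$ modulo $4$.
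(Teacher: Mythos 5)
Your proof is correct and takes essentially the same route as the paper: decompose $b^\dag\ol{b^\dag}=-1^{\dag}$ componentwise over $\wh E^\dag$ and apply Lemmas \ref{T_r} and \ref{T_s}. The only addition is your explicit injectivity check for $b^\dag\mapsto C_{1^{\!\dag}\!,b^\dag}$, which the paper leaves implicit (it is trivially true for $b\mapsto C_{1,b}$ in Theorem \ref{counting D} since $u\cdot 1=1$ forces $u=1$, while here there is a genuine, if small, point to note); the rest matches the paper's argument.
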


\begin{proof}
For $b^{\dag}\in FG^\dag$ such that $b^{\dag}\ol{b^{\dag}}=-1^{\!\dag}$,  
the $e_0$-components of both $1^{\!\dag}$ and $b^\dag$ vanish.
Thus $\big|{\cal D}^\dag\big|=\prod_{k=1}^{r+s}|{\cal T}_k|$,
and the theorem follows from Lemma \ref{T_r} and Lemma \ref{T_s}.
\end{proof}

\begin{remark}\rm
If $C_{1,b}\in{\cal D}$ and $b^\dag=b-be_0$, then
$b^\dag\ol{b^\dag}=-1^{\!\dag}$ and
$C_{1^{\!\dag}\!,b^\dag}\in{\cal D}^{\dag}$. 
In other words, $C_{1^{\!\dag}\!,b^\dag}$ can be obtained by
removing the $e_0$-component of $C_{1,b}$.
However, it may happen that ${\cal D}=\emptyset$, 
see Theorem \ref{counting D}. 
As a comparison, it is always true that ${\cal D}^\dag\ne\emptyset$.
We call $C_{1^{\!\dag}\!,b^\dag}\in{\cal D}^{\dag}$ a
self-orthogonal $2$-quasi-$FG$ code of {\em type $I^\dag$}.
\end{remark}

For $(a,d)\in (FG)^2$,  
let ${\cal D}^\dag_{(a,d)}=
\{C_{1^{\!\dag}\!,b^\dag}\,|\,C_{1^{\!\dag}\!,b^\dag}\in{\cal D}^\dag, 
 (a,d)\in C_{1^{\!\dag}\!,b^\dag}\}$.
And, similar to Lemma \ref{D_a,d}, we have

\begin{lemma}\label{D^dag_a,d}
Let the notation be as above. 
If ${\cal D}^\dag_{(a,d)}\ne\emptyset$, then $ae_0=0=de_0$, 
${\wh E}^\dag_a={\wh E}^\dag_d$ and 
$$\textstyle
 \big|{\cal D}^\dag_{(a,d)}\big|
\le \prod_{e\in\wh E^\dag-\wh E^\dag_a}(q^{n_e}+1).$$

\end{lemma}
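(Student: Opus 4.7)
The plan is to follow exactly the same strategy as in the proof of Lemma~\ref{D_a,d}, only restricted now to the idempotents in $\wh E^\dag$, since the $e_0$-component has been deliberately excised. Suppose $C_{1^{\!\dag}\!, b^\dag}\in {\cal D}^\dag_{(a,d)}$, so that there exists $u\in FG$ with $(a,d)=u(1^{\!\dag}\!,\,b^\dag)$. Since $1^{\!\dag} e_0=0$ and $b^\dag e_0=0$, I would first observe that $ae_0=u\cdot 1^{\!\dag} e_0=0$ and $de_0=ub^\dag e_0=0$, which already disposes of the first claim.

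Next I would write $b^\dag=\sum_{e\in\wh E^\dag}\beta_e$ with $\beta_e\in FGe$. Applying Lemma~\ref{orth} together with the relation $b^\dag\ol{b^\dag}=-1^{\!\dag}=-\sum_{e\in\wh E^\dag}e$ (and the orthogonality of the idempotents in $\wh E$) forces
\[
 \beta_e\ol{\beta_e}=-e, \qquad\forall~e\in\wh E^\dag,
\]
so every $\beta_e$ lies in $(FGe)^\times$. From $(a,d)=u(1^{\!\dag}\!,b^\dag)$ we get $a=u 1^{\!\dag}$ and $d=ub^\dag$, hence for any $e\in\wh E^\dag$,
\[
 ed=ue\,\beta_e=(ea)\beta_e.
\]
Because $\beta_e$ is a unit in $FGe$, we have $ea\ne 0$ iff $ed\ne 0$, proving $\wh E^\dag_a=\wh E^\dag_d$.

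For the cardinality estimate, I would now simply count the number of admissible tuples $(\beta_e)_{e\in\wh E^\dag}$, exactly as in the proof of Lemma~\ref{D_a,d}, but without the Case~1 (namely the $e_0$-component) that produced the factor of~$2$ in that lemma. Concretely, for $e\in\wh E^\dag_a$ the element $ea$ is a unit in the corresponding component (either the field $FGe_i$, or in one of the two summands of $FG\wh e_{r\+j}$), so the relation $ed=(ea)\beta_e$ together with $\beta_e\ol{\beta_e}=-e$ pins down $\beta_e$ uniquely (the two subcases being handled precisely as in Subcase~2.1 and Subcase~2.2 there). For $e\in\wh E^\dag-\wh E^\dag_a$ the equation $ed=(ea)\beta_e$ is automatically satisfied (both sides vanish), and one is only counting solutions of $\beta_e\ol{\beta_e}=-e$ in $FGe$: by Lemma~\ref{T_r} this number is $q^{n_e}+1$ when $e=e_i$, and by Lemma~\ref{T_s} it is $q^{n_e}-1\le q^{n_e}+1$ when $e=\wh e_{r\+j}$. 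Multiplying over all $e\in\wh E^\dag-\wh E^\dag_a$ gives the desired bound $\prod_{e\in\wh E^\dag-\wh E^\dag_a}(q^{n_e}+1)$.

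There is no real obstacle here; the argument is a cleaner version of the proof of Lemma~\ref{D_a,d}, since the delicate Case~1 (which required $-1$ to be a square in $F$ and delivered the factor~$2$) is absent. The only point worth double-checking is that the decomposition of $b^\dag$ really does force $\beta_e\ol{\beta_e}=-e$ for \emph{each} $e\in\wh E^\dag$ separately, which follows because $\ol{\cdot}$ preserves each $FGe$ for $e\in\wh E^\dag$ (recall $\ol e=e$ in $\wh E$) and the summands are mutually annihilating.
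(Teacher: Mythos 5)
Your proof is correct and follows exactly the route the paper takes: it derives $ae_0=0=de_0$ from $1^{\!\dag}e_0=0=b^\dag e_0$, reads off $\wh E^\dag_a=\wh E^\dag_d$ from invertibility of each $\beta_e$, and then repeats the case-by-case count from Lemma~\ref{D_a,d} with Case~1 deleted. The paper's own proof simply states these same facts and defers the count to Lemma~\ref{D_a,d} with the remark that Case~1 is absent; you have just spelled out what that deferral means.
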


\begin{proof} If $C_{1^{\!\dag}\!,b^\dag}\in{\cal D}^\dag$ and
$(a,d)\in C_{1^{\!\dag}\!,b^\dag}$, then there is a $u\in FG$ such that
$a=u1^{\!\dag}$ and $d=ub^\dag$, hence
$ae_0=u1^{\!\dag}e_0=0$,  $de_0=ub^\dag e_0=0$, 
and ${\wh E}^\dag_a={\wh E}^\dag_d$. 
Similar to the proof of Lemma \ref{D_a,d}, 
we can complete the proof
(just note that Case 1 of the proof of Lemma \ref{D_a,d}
 is no longer present here). 
\end{proof}

Similar to Corollary \ref{|D|<} and Lemma \ref{D le}, we have

\begin{corollary}\label{|D^dag|<}~
$|{\cal D}^\dag_{(a,d)}|\le q^{\frac{n-1}{2}-\ell_a+2}$.
\end{corollary}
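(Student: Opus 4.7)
This corollary is the exact analogue of Corollary \ref{|D|<}, with the bound improved by one in the exponent because the $e_0$-component contributes nothing in the $I^\dag$ setting. My plan is therefore to follow the same three-step combinatorial estimate, simplified by the sharper form of Lemma \ref{D^dag_a,d}.

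First I would invoke Lemma \ref{D^dag_a,d} as the starting estimate:
$$|{\cal D}^\dag_{(a,d)}|\le\prod_{e\in\wh E^\dag-\wh E^\dag_a}(q^{n_e}+1).$$
Next I would apply Lemma \ref{q^k-1,q^k+1}(2) to bound this product by $q^{S+2}$, where $S=\sum_{e\in\wh E^\dag-\wh E^\dag_a}n_e$. Finally I would split Eq.\eqref{dim n-1} as $\frac{n-1}{2}=\sum_{e\in\wh E^\dag}n_e=\ell_a+S$, so that $S=\frac{n-1}{2}-\ell_a$, giving the stated bound $q^{\frac{n-1}{2}-\ell_a+2}$.

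I do not anticipate a real obstacle. The single point to verify is the hypothesis of Lemma \ref{q^k-1,q^k+1}(2), namely that each $n_e$ is at least $\log_q m$ where $m=|\wh E^\dag-\wh E^\dag_a|$; this is trivial when the product is empty and otherwise follows from $2n_e\ge\mu_q(n)$ together with $m\le r+s\le n-1$ in the asymptotic regime where the corollary will be applied (exactly as in the proof of Corollary \ref{|D|<}). The only numerical change from that earlier proof is the disappearance of the factor $2$ (which there came from Case 1 treating the $e_0$-component), and this is precisely what lets the exponent tighten from $+3$ to $+2$.
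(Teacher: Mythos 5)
Your proof is correct and matches the paper's intended argument: the paper gives no separate proof for this corollary, stating only that it follows ``similar to Corollary \ref{|D|<}.'' You have correctly reproduced precisely that argument --- Lemma \ref{D^dag_a,d} in place of Lemma \ref{D_a,d}, then Lemma \ref{q^k-1,q^k+1}(2), then the splitting of $\frac{n-1}{2}$ from Eq.\eqref{dim n-1} --- and correctly identified that the exponent improves from $+3$ to $+2$ only because the factor of~$2$ contributed by the $e_0$-case in Lemma \ref{D_a,d} is absent here. Your parenthetical caveat about the hypothesis of Lemma \ref{q^k-1,q^k+1}(2) is a fair observation, and it applies equally to the paper's own Corollary \ref{|D|<}; it holds in the regime where these bounds are ultimately used.
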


\begin{lemma}\label{D^dag le}~
$({\cal D^\dag})^{\le\delta}
= \bigcup_{\ell=\frac{1}{2}\mu_q(n)}^{(n-1)/2}
 \bigcup_{J\in\Omega_{2\ell}}
\bigcup_{(a,d)\in (J^*\times J^*)^{\le\delta}}{\cal D}^\dag_{(a,d)}$,
where
$ J^*=\big\{ a\,\big|\, a\in J,~L_a=J\big\}.$
\end{lemma}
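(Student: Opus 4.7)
The plan is to mirror the proof of Lemma \ref{D le}, adapted to the simpler type $I^\dag$ setting in which the $e_0$-component is absent by construction, and to establish the set equality by proving two inclusions. The direction $\supseteq$ should be immediate: for any $(a,d)\in(J^*\times J^*)^{\le\delta}$ and any $C_{1^{\!\dag}\!,b^\dag}\in{\cal D}^\dag_{(a,d)}$, the codeword $(a,d)$ already satisfies ${\rm w}(a,d)/(2n)\le\delta$, so the minimum weight of $C_{1^{\!\dag}\!,b^\dag}$ meets the same bound and $C_{1^{\!\dag}\!,b^\dag}\in({\cal D}^\dag)^{\le\delta}$.

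For the direction $\subseteq$, I would start with $C_{1^{\!\dag}\!,b^\dag}\in({\cal D}^\dag)^{\le\delta}$ and pick a nonzero $(a,d)\in C_{1^{\!\dag}\!,b^\dag}$ with ${\rm w}(a,d)\le 2\delta n$. Writing $(a,d)=u(1^{\!\dag},b^\dag)$ and using $1^{\!\dag}e_0=0$ together with $b^\dag e_0=0$, I get $ae_0=de_0=0$ at once; this is exactly the simplification compared with Lemma \ref{D le}, where one additionally had to exclude codewords of the form $(\alpha e_0,\alpha'e_0)$. Next, Lemma \ref{D^dag_a,d} yields $\wh E^\dag_a=\wh E^\dag_d$, and setting $J=L_a=L_d$ gives $\dim J=2\ell_a$. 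The bound $\dim FGe\ge\mu_q(n)$ for each $e\in\wh E^\dag$ together with $\sum_{e\in\wh E^\dag}n_e=(n-1)/2$ confines $\ell_a$ to the range $\mu_q(n)/2\le\ell_a\le(n-1)/2$, so $J\in\Omega_{2\ell_a}$. Consequently $(a,d)\in(J^*\times J^*)^{\le\delta}$ and $C_{1^{\!\dag}\!,b^\dag}\in{\cal D}^\dag_{(a,d)}$, placing $C_{1^{\!\dag}\!,b^\dag}$ inside the claimed union.

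The only mildly delicate point is verifying $\wh E^\dag_a\ne\emptyset$ (so that $\ell_a\ge\mu_q(n)/2$) together with the equality $\wh E^\dag_a=\wh E^\dag_d$. Both rest on the observation that $b^\dag e$ is a unit in the field $FGe$ for every $e\in\wh E^\dag$ (since $b^\dag e\cdot\ol{b^\dag e}=-e\ne 0$), whence $ae=ue$ and $de=ue\cdot b^\dag e$ vanish simultaneously; consequently $(a,d)\ne 0$ forces both $a$ and $d$ to be nonzero on $FG^\dag$. This is already packaged in Lemma \ref{D^dag_a,d}, so no further work is required and the proof should be no harder than its type I counterpart.
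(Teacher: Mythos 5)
Your proposal is correct and takes essentially the same route the paper implicitly intends: the paper states this lemma without proof as the analogue of Lemma \ref{D le}, and your argument mirrors that proof line by line, correctly identifying that the simplification is the automatic vanishing of the $e_0$-components of $a$ and $d$ (so the clause in Lemma \ref{D le} excluding codewords of the form $(\alpha e_0,\alpha'e_0)$ is no longer needed). One small imprecision: for $e=\wh e_{r\+j}$ the algebra $FGe$ is a product of two fields rather than a field, but your key step — that $b^\dag e\,\ol{b^\dag e}=-e$ being a unit forces $b^\dag e$ to be a unit — is valid in any commutative ring and so goes through unchanged.
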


And Lemma \ref{in tilde Omega} is revised as follows.

\begin{lemma}\label{in Omega}
Let $\frac{1}{2}\mu_q(n)\le\ell\le\frac{n-1}{2}$ and  $J\in\Omega_{2\ell}$.
Then
$$\textstyle
\big|\bigcup_{(a,d)\in (J^*\times J^*)^{\le\delta}}{\cal D}^\dag_{(a,d)}\big|
\le q^{\frac{n-1}{2}+4\ell\left(h_q(\delta)-\frac{1}{4}\right)+2}.
$$
\end{lemma}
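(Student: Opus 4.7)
The plan is to adapt the argument of Lemma \ref{in tilde Omega} to the self-orthogonal type $I^\dag$ setting, with two small bookkeeping changes that account for the vanishing of the $e_0$-component.

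First I would bound the number of candidate pairs $(a,d)$. Since $J\in\Omega_{2\ell}$ is a direct sum of some $FGe$'s with $e\in\wh E^\dag$, it satisfies $\dim J=2\ell$ (with no $e_0$-summand added, in contrast to $\tilde J$ in the self-dual case). Thus $\dim(J\times J)=4\ell$, and Lemma \ref{balanced} applied to $J\le FG$ gives
$$\bigl|(J\times J)^{\le\delta}\bigr|\le q^{4\ell\, h_q(\delta)}.$$
Since $(J^*\times J^*)^{\le\delta}\subseteq (J\times J)^{\le\delta}$, the same bound holds for the smaller set.

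Next I would bound $|{\cal D}^\dag_{(a,d)}|$ uniformly over $(a,d)\in J^*\times J^*$. By definition of $J^*$, every such $a$ satisfies $L_a=J$, so $\ell_a=\ell$. Corollary \ref{|D^dag|<} then yields
$$\bigl|{\cal D}^\dag_{(a,d)}\bigr|\le q^{\frac{n-1}{2}-\ell+2}.$$
Note the exponent $+2$ here in place of the $+3$ appearing in Corollary \ref{|D|<}; this is the one place where the absence of the $e_0$-component (no factor of $|{\cal T}_0|\le 2$) saves a constant, and this is precisely what produces the $+2$ in the target bound instead of the $+2h_q(\delta)+3$ seen in the self-dual case.

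Finally, I would combine the two estimates by a union bound:
\begin{align*}
\Bigl|\bigcup_{(a,d)\in(J^*\times J^*)^{\le\delta}}\!\!{\cal D}^\dag_{(a,d)}\Bigr|
&\le \sum_{(a,d)\in(J^*\times J^*)^{\le\delta}}\!\!\bigl|{\cal D}^\dag_{(a,d)}\bigr|
\le \bigl|(J\times J)^{\le\delta}\bigr|\cdot q^{\frac{n-1}{2}-\ell+2}\\
&\le q^{4\ell\,h_q(\delta)}\cdot q^{\frac{n-1}{2}-\ell+2}
= q^{\frac{n-1}{2}+4\ell\left(h_q(\delta)-\frac14\right)+2},
\end{align*}
which is the asserted inequality. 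There is no real obstacle: the entire argument is mechanical once one recognizes the two modifications (replace $\tilde J$ by $J$, saving a summand of dimension one on each side, and invoke Corollary \ref{|D^dag|<} instead of Corollary \ref{|D|<}). The only care needed is to verify that $\ell_a=\ell$ on $J^*\times J^*$, which is immediate from the definition of $J^*$.
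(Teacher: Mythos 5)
Your proposal is correct and matches the paper's own proof: the same union bound over $(a,d)$, the same uniform estimate $|{\cal D}^\dag_{(a,d)}|\le q^{\frac{n-1}{2}-\ell+2}$ from Corollary \ref{|D^dag|<}, and the same application of Lemma \ref{balanced} with $\dim(J\times J)=4\ell$. The two bookkeeping observations you flag (no $e_0$-summand, so $J$ rather than $\tilde J$; the saving of a factor $|\mathcal T_0|$ giving $+2$ instead of $+3$) are exactly the adjustments the paper makes.
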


\begin{proof} 
It is similar to the computation in the proof of Lemma \ref{in tilde Omega} 
(but $\tilde J$ is replaced by $J$, etc.): 
\begin{align*}
&\textstyle
\big|\bigcup_{(a,d)\in (J^*\times J^*)^{\le\delta}}{\cal D}^\dag_{(a,d)}\big|
\le\sum_{(a,d)\in (J^*\times J^*)^{\le\delta}}
   \big|{\cal D}^\dag_{(a,d)}\big|\\[4pt]
&\textstyle \le \sum_{(a,d)\in (J^*\times J^*)^{\le\delta}}
   q^{\frac{n-1}{2}-\ell+2}
\le \sum_{(a,d)\in (J\times J)^{\le\delta}}
     q^{\frac{n-1}{2}-\ell+2}\\[4pt]
&=\big|(J\times J)^{\le\delta}\big|\cdot q^{\frac{n-1}{2}-\ell+2}
\le q^{4\ell h_q(\delta)}\cdot q^{\frac{n-1}{2}-\ell+2}\\[4pt]
&=q^{\frac{n-1}{2}+4\ell\left(h_q(\delta)-\frac{1}{4}\right)+2},
\end{align*}
where the last second line follows from Lemma~\ref{balanced} 
 and $\dim(J\times J)= 4\ell$.
\end{proof}

Then, similar to Theorem \ref{D^le <} and Theorem \ref{D^le/D},
 we can get the following results. 

\begin{theorem}\label{D^dag^le <}
Assume that 
$\frac{1}{4}-h_q(\delta)-\frac{\log_q n}{2\mu_q(n)}>0$. 
Then
$$
\big|({\cal D}^\dag)^{\le\delta}\big|\le 
q^{\frac{n-1}{2}-2\mu_q(n) 
     \left(\frac{1}{4}-h_q(\delta)-\frac{\log_q n}{\mu_q(n)}\right) +2}.
$$
\end{theorem}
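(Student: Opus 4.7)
The plan is to parallel the argument of Theorem~\ref{D^le <} almost verbatim, feeding in the ``dagger'' versions of the intermediate lemmas that are already in place. First I would use the decomposition Lemma~\ref{D^dag le} to write $({\cal D}^\dag)^{\le\delta}$ as a double union indexed by $\ell$ with $\frac{1}{2}\mu_q(n)\le\ell\le\frac{n-1}{2}$ and by $J\in\Omega_{2\ell}$, which reduces the task to bounding each inner piece.

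Next I would invoke Lemma~\ref{in Omega} to bound each inner union by $q^{(n-1)/2+4\ell(h_q(\delta)-1/4)+2}$, and Lemma~\ref{tilde Omega} to bound $|\Omega_{2\ell}|\le n^{2\ell/\mu_q(n)}$. Rewriting $n^{2\ell/\mu_q(n)}=q^{2\ell\log_q n/\mu_q(n)}$ and collecting exponents should produce
\[
\big|({\cal D}^\dag)^{\le\delta}\big|\;\le\; \sum_{\ell=\frac{1}{2}\mu_q(n)}^{\frac{n-1}{2}} q^{\frac{n-1}{2}-4\ell\left(\frac{1}{4}-h_q(\delta)-\frac{\log_q n}{2\mu_q(n)}\right)+2}.
\]

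Under the hypothesis $\frac{1}{4}-h_q(\delta)-\frac{\log_q n}{2\mu_q(n)}>0$ the exponent is strictly decreasing in $\ell$, so the summand is maximized at $\ell=\frac{1}{2}\mu_q(n)$. I would then bound the sum by its number of terms (at most $n\le q^{\log_q n}$) times the maximum summand; this contributes one extra $\log_q n$ in the exponent, which combines with the $\log_q n$ already present to convert $\frac{\log_q n}{2\mu_q(n)}$ into $\frac{\log_q n}{\mu_q(n)}$, exactly as in the proof of Theorem~\ref{D^le <}, and delivers the asserted bound with the constant~$+2$.

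The argument is a near-mechanical shadow of Theorem~\ref{D^le <}, and I anticipate no genuine obstacle. The constant improves from $+4$ to $+2$ because (i)~Corollary~\ref{|D^dag|<} saves one unit over Corollary~\ref{|D|<} (there is no $e_0$-component to fix), and (ii)~Lemma~\ref{in Omega} uses $\dim(J\times J)=4\ell$ rather than $\dim(\tilde J\times\tilde J)=4\ell+2$, which removes the $2h_q(\delta)$-term and its associated rounding that appeared in Theorem~\ref{D^le <}. If there is any subtlety, it is only in keeping track of which summaries of the $e_0$-contribution are suppressed in the dagger setting; but this bookkeeping is already encoded in the cited lemmas.
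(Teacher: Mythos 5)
Your proof is correct and is exactly the paper's intended argument: the paper itself omits the dagger version's proof with the remark that it is "similar to Theorem \ref{D^le <} and Theorem \ref{D^le/D}," and your derivation from Lemma \ref{D^dag le}, Lemma \ref{in Omega}, and Lemma \ref{tilde Omega} reproduces that computation step for step, including the correct accounting of where the improvement from $+4$ to $+2$ originates (the absent $e_0$-factor in Lemma \ref{D^dag_a,d}/Corollary \ref{|D^dag|<}, and $\dim(J\times J)=4\ell$ in place of $4\ell+2$).
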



\begin{theorem} \label{D^dag^le/D^dag}
Assume that 
$\frac{1}{4}-h_q(\delta)-\frac{\log_q n}{2\mu_q(n)}>0$. 
Then
$$
\left.|({\cal D}^\dag)^{\le\delta}|\right/|{\cal D}^\dag|\le 
q^{-2\mu_q(n) \left(\frac{1}{4}-h_q(\delta)-\frac{\log_q n}{\mu_q(n)}\right)+4}.
$$
\end{theorem}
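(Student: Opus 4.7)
The plan is to mimic the argument for Theorem \ref{D^le/D}, combining the upper bound on $|({\cal D}^\dag)^{\le\delta}|$ from Theorem \ref{D^dag^le <} with a lower bound on $|{\cal D}^\dag|$ obtained by applying the ``$q^k-1$'' half of Lemma \ref{q^k-1,q^k+1} to the product formula in Theorem \ref{counting D^dag}.

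First I would bound $|{\cal D}^\dag|$ from below. By Theorem \ref{counting D^dag},
\[
 |{\cal D}^\dag|=\prod_{i=1}^r (q^{n_i}+1)\prod_{j=1}^s (q^{n_{r\+j}}-1)
 \ge \prod_{e\in\wh E^\dag}(q^{n_e}-1).
\]
Since every $2n_e=\dim FGe\ge\mu_q(n)$ and the number of factors $|\wh E^\dag|=r+s\le n$, each $n_e$ exceeds $\tfrac{1}{2}\log_q(r+s)$, well above $\log_q(r+s)$ after simple bookkeeping; more to the point, the hypothesis $\tfrac{1}{4}-h_q(\delta)-\tfrac{\log_q n}{2\mu_q(n)}>0$ forces $\mu_q(n)$ (hence every $2n_e$) to be large enough to apply Lemma \ref{q^k-1,q^k+1}(1). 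This yields
\[
 |{\cal D}^\dag|\;\ge\; q^{\sum_{e\in\wh E^\dag}n_e-2}\;=\;q^{\tfrac{n-1}{2}-2},
\]
by the identity $\sum_{e\in\wh E^\dag}n_e=\tfrac{n-1}{2}$ from Eq.\eqref{dim n-1}.

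Then I would simply divide the estimate of Theorem \ref{D^dag^le <} by this lower bound:
\[
 \frac{|({\cal D}^\dag)^{\le\delta}|}{|{\cal D}^\dag|}
 \;\le\;
 \frac{q^{\tfrac{n-1}{2}-2\mu_q(n)\left(\tfrac{1}{4}-h_q(\delta)-\tfrac{\log_q n}{\mu_q(n)}\right)+2}}
      {q^{\tfrac{n-1}{2}-2}}
 \;=\;
 q^{-2\mu_q(n)\left(\tfrac{1}{4}-h_q(\delta)-\tfrac{\log_q n}{\mu_q(n)}\right)+4},
\]
which is exactly the claimed inequality.

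The argument is essentially bookkeeping, so no real obstacle arises; the only place requiring a small check is the applicability of Lemma \ref{q^k-1,q^k+1}(1), i.e.\ verifying $n_e\ge\log_q(r+s)$ for every $e\in\wh E^\dag$. This is immediate once $\mu_q(n)$ is large compared to $\log_q n$, which is guaranteed by the standing hypothesis $\tfrac{1}{4}-h_q(\delta)-\tfrac{\log_q n}{2\mu_q(n)}>0$ (for sufficiently large $n$ in our asymptotic application; for small $n$ the inequality is vacuous since the right-hand side exceeds $1$). Compared with Theorem \ref{D^le/D}, the exponent improves from $+6$ to $+4$ because the $e_0$-component is absent from ${\cal D}^\dag$, so the factor $|{\cal T}_0|\le 2$ and the ``$+1$'' from $\dim\tilde J$ versus $\dim J$ no longer appear, matching the gain of $2$ already recorded between Theorem \ref{D^le <} and Theorem \ref{D^dag^le <}.
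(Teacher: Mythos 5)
Your proposal is correct and is exactly the argument the paper intends (the paper just says ``similar to Theorem \ref{D^le/D}'' without spelling it out): bound $|{\cal D}^\dag|$ below by $q^{(n-1)/2-2}$ via Theorem \ref{counting D^dag} and Lemma \ref{q^k-1,q^k+1}(1), then divide into the bound of Theorem \ref{D^dag^le <}. The only slightly muddled sentence is the claim that ``each $n_e$ exceeds $\tfrac{1}{2}\log_q(r+s)$, well above $\log_q(r+s)$''; what you actually want and do get from the hypothesis is $\mu_q(n)>2\log_q n\ge 2\log_q(r+s)$, hence $n_e\ge\mu_q(n)/2>\log_q(r+s)$, which is the condition Lemma \ref{q^k-1,q^k+1}(1) requires.
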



\begin{theorem}\label{D^dag asymptotically good}
Let $n_1,n_2,\cdots$ be odd positive integers coprime to $q$ such that
$\lim\limits_{i\to\infty}\frac{\log_q n_i}{\mu_q(n_i)}=0$. 
Let $G_i$ be any abelian group of order $n_i$ for $i=1,2,\cdots$.
Then for $i=1,2,\cdots$ there exist 
self-orthogonal $2$-quasi-$FG_i$ codes $C_i$ of type $I^\dag$
(hence $\dim C_i=n_i-1$) 
such that the code sequence $C_1,C_2,\cdots$ is asymptotically good.
\end{theorem}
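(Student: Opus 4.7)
The plan is to mimic the proof of Theorem \ref{D asymptotically good} essentially verbatim, replacing the sets $\mathcal{D}_i$ and $\mathcal{D}_i^{\le\delta}$ by their self-orthogonal counterparts $\mathcal{D}_i^\dag$ and $(\mathcal{D}_i^\dag)^{\le\delta}$ from Eq.~\eqref{cal D^dag} and Eq.~\eqref{dag D^le}, and invoking Theorem~\ref{D^dag^le/D^dag} in place of Theorem~\ref{D^le/D}. A crucial difference is that here we do not need the parity condition on $q$: the estimates in Theorem~\ref{counting D^dag} and Theorem~\ref{D^dag^le <} hold for every finite field $F$, because the troublesome $e_0$-component (which is where the dependence on whether $-1$ is a square in $F$ entered) has been excised.

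First I would use the properties of the $q$-entropy function $h_q$ recorded in Eq.~\eqref{entropy} (continuous, strictly increasing on $[0,1-q^{-1}]$, with $h_q(0)=0$) to fix some $\delta\in(0,1-q^{-1})$ with $0<h_q(\delta)<1/4$. Next, since by hypothesis $\frac{\log_q n_i}{\mu_q(n_i)}\to 0$, for all sufficiently large~$i$ we have
\[
\tfrac{1}{4}-h_q(\delta)-\tfrac{\log_q n_i}{2\mu_q(n_i)}>\varepsilon
\]
for some fixed $\varepsilon>0$, and in particular the hypothesis of Theorem~\ref{D^dag^le/D^dag} is met. Applying that theorem yields
\[
\frac{|(\mathcal{D}_i^\dag)^{\le\delta}|}{|\mathcal{D}_i^\dag|}\le
q^{-2\mu_q(n_i)\left(\frac{1}{4}-h_q(\delta)-\frac{\log_q n_i}{\mu_q(n_i)}\right)+4},
\]
and since $\mu_q(n_i)\to\infty$ (cf.\ Lemma~\ref{n_i} and Remark~\ref{r n_i}), the exponent tends to $-\infty$, so the ratio tends to $0$. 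In particular for all sufficiently large $i$ the set $\mathcal{D}_i^\dag\setminus (\mathcal{D}_i^\dag)^{\le\delta}$ is non-empty, and we may pick any $C_i$ from it.

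Finally I would verify asymptotic goodness of the sequence $C_1,C_2,\ldots$ thus chosen. By Lemma~\ref{self-orth n-1} each $C_i$ is a self-orthogonal $2$-quasi-$FG_i$ code of type $I^\dag$ with $\dim C_i=n_i-1$; its ambient length is $2n_i\to\infty$; its rate is $R(C_i)=\frac{n_i-1}{2n_i}\to\frac{1}{2}$, which is bounded below by a positive constant; and by the choice $C_i\notin (\mathcal{D}_i^\dag)^{\le\delta}$ its relative minimum distance satisfies $\Delta(C_i)>\delta>0$. This gives Theorem~\ref{main result}(1) as well.

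I do not expect any essential obstacle, as all heavy lifting has been done in Theorems~\ref{counting D^dag}, \ref{D^dag^le <} and \ref{D^dag^le/D^dag}. The one subtle point worth underlining in the write-up is that the argument works unconditionally in $q$ (unlike Theorem~\ref{D asymptotically good}, which required $q$ even or $q\equiv1\pmod 4$): indeed $\mathcal{D}_i^\dag\neq\emptyset$ always holds by Theorem~\ref{counting D^dag}, since every factor $q^{n_i}\pm1$ appearing in $|\mathcal{D}_i^\dag|$ is positive, and no factor of $|\mathcal{T}_0|$ (which may vanish when $q\equiv -1\pmod 4$) enters the product.
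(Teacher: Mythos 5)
Your proposal is correct and follows essentially the same route as the paper's own proof: fix $\delta$ with $0<h_q(\delta)<1/4$, apply Theorem~\ref{D^dag^le/D^dag} together with $\mu_q(n_i)\to\infty$ to see that $|({\cal D}^\dag_i)^{\le\delta}|/|{\cal D}^\dag_i|\to 0$, pick $C_i\in{\cal D}^\dag_i\setminus({\cal D}^\dag_i)^{\le\delta}$, and check length, rate $\to 1/2$, and $\Delta(C_i)>\delta$. Your remark that no parity condition on $q$ is needed because the $e_0$-component has been excised is accurate and consistent with the paper's discussion surrounding Theorem~\ref{counting D^dag}.
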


\begin{proof}
Let ${\cal D}^\dag_i$ be the set of all 
self-orthogonal $2$-quasi-$FG_i$ codes of type $I^\dag$.
Take a real number $\delta\in(0,1-q^{-1})$ satisfying that $0<h_q(\delta)<\frac{1}{4}$.
We have
$$
\lim\limits_{i\to\infty}\left.|({\cal D}^\dag_i)^{\le\delta}|\right/ |{\cal D}^\dag_i|
\le \lim\limits_{i\to\infty} q^{-2\mu_q(n_i) 
 \left(\frac{1}{4}-h_q(\delta)-\frac{\log_q n_i}{\mu_q(n_i)}\right)+4}
 =0.
$$
Thus we can take $C_i\in{\cal D}^\dag_i-({\cal D}^\dag_i)^{\le\delta}$ 
for $i=1,2,\cdots$.
The self-orthogonal $2$-quasi-$FG_i$ codes $C_i$ of type $I^\dag$ 
satisfy the following: 
\begin{itemize}
\item \vskip-5pt
the length $2n_i$ of $C_i$ is going to $\infty$;
\item\vskip-5pt
the rate $\lim\limits_{i\to\infty}R(C_i)=\frac{1}{2}$;
\item\vskip-5pt
the relative minimum distance $\Delta(C_i)>\delta$  for $i=1,2,\cdots$.  
\end{itemize}
\vskip-5pt
That is, the sequence of codes $C_1,C_2,\cdots$ is asymptotically good.
\end{proof}

\section{Conclusion}
Let $G$ be any abelian group of odd order $n$ coprime to the 
cardinality $q=|F|$. 
We introduced the self-dual 2-quasi-$FG$ codes of type I, and 
obtained the exact number of such codes.
As a consequence, the self-dual 2-quasi-$FG$ codes exist 
if and only if $-1$ is a square in $F$.
We further estimated the number of the self-dual 2-quasi-$FG$ 
codes of type I with small relative minimum distances.
With the two numerical results we proved that
 the self-dual 2-quasi-$FG$ codes are asymptotically good
 provided $-1$ is a square in $F$. 
  
Moreover, we showed that the self-orthogonal 2-quasi-$FG$ codes
of dimension $n-1$ always exist. 
And, by the same method (with small revisions),
we got two similar numerical results on such codes, 
and showed that 
the self-orthogonal 2-quasi-$FG$ codes are asymptotically good.


\end{document}